\theoremstyle{plain}
\newtheorem{Lem}{Lemma}
\newtheorem{Pro}{Proposition}
\newtheorem{Thm}{Theorem}
\newtheorem{Cor}{Corollary}
\newtheorem{Rmk}{Remark}
\newcommand{\Myd}{\;\mathrm{d}}
\newcommand{\Ind}[1]{\mathbbm{1}_{#1}}
\newcommand{\Bigo}[1]{\operatorname{O}\left(#1\right)}
\newcommand{\Bigomega}[1]{\operatorname{\Omega}\left(#1\right)}
\newcommand{\Smallo}[1]{\operatorname{o}\left(#1\right)}
\DeclareMathOperator*{\Exp}{\mathbb{E}}
\title{Model Selection in Utility-Maximizing Binary Prediction}
\author{Jiun-Hua Su\thanks{
This paper is a revision of the third chapter of my dissertation.
I am grateful to the co-editor, the associate editor, and two anonymous referees for constructive comments and suggestions.
I also thank Peter Bartlett, Le-Yu Chen, Yu-Chin Hsu, Hsuan-Tien Lin, Demian Pouzo, James Powell, seminar participants at Erasmus University Rotterdam, and participants at the 2019 Australasian Meetings of the Econometric Society for helpful discussions.
Address correspondence to Jiun-Hua Su, 128 Academia Road, Section 2, Nankang, Taipei, 115 Taiwan; E-mail address: jhsu@econ.sinica.edu.tw.}}
\affil{Institute of Economics\\Academia Sinica}
\begin{document}
\maketitle
\thispagestyle{empty}

\begin{abstract}
The maximum utility estimation proposed by \citet{ElliottLieli2013} can be viewed as cost-sensitive binary classification; thus, its in-sample overfitting issue is similar to that of perceptron learning. A utility-maximizing prediction rule (UMPR) is constructed to alleviate the in-sample overfitting of the maximum utility estimation. We establish non-asymptotic upper bounds on the difference between the maximal expected utility and the generalized expected utility of the UMPR. Simulation results show that the UMPR with an appropriate data-dependent penalty achieves larger generalized expected utility than common estimators in the binary classification if the conditional probability of the binary outcome is misspecified.

\bigskip
\noindent
\textit{Keywords}: Decision-based binary prediction, Maximum utility estimation, Model selection, Structural risk minimization, Perceptron learning

\medskip
\noindent
\textit{JEL Classification}: C14, C45, C52, C53

\end{abstract}

\fontsize{12}{18pt}\selectfont
\newpage
\setcounter{page}{1}

\section{Introduction}\label{Introduction}
Making a binary decision based on an uncertain binary outcome is common in modern economic activities.
For instance, an investor who considers buying a financial instrument may tend to predict its price change in the future and decide to buy the instrument if the price is predicted to rise.
As suggested by \citet{GrangerMachina2006}, decision-making based on the prediction of a binary outcome should be driven by the preference of the decision maker.
On the one hand, the utility arising from a mismatch between the binary decision and outcome may differ in the realized outcome; on the other hand, the utility may be affected by observable covariates.
In making financial investment decisions, the disutility for the investor who buys the instrument but suffers from a decrease in the price may be greater than that for the investor who does not buy the instrument but finds an increase in the price.
In addition, features of the instrument, for example measures of its price volatility, may affect not only the likelihood of price change but also the investor's utility.\footnote{\citet{BarberisXiong2012} propose a model to explain the individual investor preference for volatile stocks.
}
Further examples illustrating the importance of the decision maker's preference in economic forecasting are provided in \citet{ElliottTimmermann2016}.

Although the subjective preference would be important for the decision-making based on binary prediction, traditional methods of pattern classification rarely take the decision maker's utility into consideration.
Recently, \citet{ElliottLieli2013} propose a maximum utility approach to incorporate the decision maker's utility into prediction of a binary outcome $Y\in\{-1,1\}$ given a vector $X$ of observed covariates.
Instead of globally estimating the conditional probability $p^{*}(x)\equiv\mathbb{P}(Y=1\mid X=x)$, they show the utility-maximizing binary classification problem can be solved by only estimating the sign of $p^{*}(x)-c(x)$, where $c$ is the cutoff function determined by the decision maker's utility function.
Compared with maximum likelihood estimation, their maximum utility estimation is, however, prone to in-sample overfitting.

In this paper, we show that the maximum utility estimation can be viewed as cost-sensitive binary classification; thus, its in-sample overfitting issue is similar to that of perceptron learning in the machine learning literature.
To alleviate the tendency of fitting the in-sample noise by sophisticated models, we follow the structural risk minimization approach proposed by \citet{Vapnik1982}.
More precisely, we pre-specify a hierarchy of classes of (finite-dimensional) functions and consider a \emph{utility-maximizing prediction rule} (UMPR), which is a maximum utility estimator that maximizes a complexity-penalized empirical utility.
To construct the UMPR, we consider a VC-type distribution-free complexity penalty and four data-dependent complexity penalties:
maximal discrepancy (\citet{BartlettBoucheronEtAl2002}), simulated maximal discrepancy, Rademacher complexity (\citet{Koltchinskii2001} and \citet{BartlettBoucheronEtAl2002} among others), and bootstrap complexity (\citet{Fromont2007}).
We evaluate the performance of a prediction rule by the \emph{generalized expected utility}, that is, the expected utility achieved by the decision maker applying this prediction rule, which could be constructed based on in-sample observations, to the classification of an out-of-sample observation.
We emphasize generalization because the expected utility averages over not only the out-of-sample observation but the \emph{ex ante} in-sample observations as well.
We prove that the difference between the maximal expected utility and the generalized expected utility of the UMPR can be bounded by an almost optimal trade-off between the expected complexity penalty and the approximation error, that is, an error due to the approximation of functions in a hierarchy of classes to an optimal decision rule.
Hence, whenever the approximation error is equal to zero for some class of functions, the expected utility of the UMPR increases in the sample size and will asymptotically attain the maximal expected utility.
In other words, the proposed UMPR is \emph{universally utility consistent}.\footnote{
Universal utility consistency has a counterpart in the literature on empirical risk minimization, in which different names are used, for example universal consistency in \citet{DevroyeGyoerfiEtAl1996}, persistence in \citet{GreenshteinRitov2004} and \citet{Greenshtein2006}, and risk consistency in \citet{HomrighausenMcDonald2013}.
Another conceptually different but common term is consistency, and in this paper, it refers to the property that a selection method asymptotically picks a model with the lowest Kullback-Leibler divergence, as in \citet{SinWhite1996}.
}

The idea of complexity penalization has been applied to selection methods in econometrics.
Instead of penalized empirical utility criteria, penalized likelihood criteria are the main concerns in early literature.
One strand of literature adopts the information-theoretic approach.
Classical examples include \citeauthor{Akaike1973}'s (\citeyear{Akaike1973}) information criterion (AIC), \citeauthor{Schwarz1978}'s (\citeyear{Schwarz1978}) information criterion (BIC), and their cousins (TIC and GIC among others).\footnote{
Schwarz's information criterion is also called BIC, as it is usually analyzed from a Bayesian angle.
TIC and GIC are abbreviations for Takeuchi's information criterion and generalized information criterion, respectively.}
Additionally, motivated by different spirits, the leave-one-out cross-validation in the likelihood framework is asymptotically equivalent to the AIC, whereas the minimum description length can be approximated by the BIC.
These early selection methods are well documented in \citet{KonishiKitagawa2008} and \citet{ClaeskensHjort2008}.
With the purpose of achieving shrinkage and variable selection simultaneously, \citeauthor{Tibshirani1996}'s (\citeyear{Tibshirani1996}) least absolute shrinkage and selection operator (LASSO) adopts an $\ell_{1}$ penalty.
All of these selection methods can be applied to the binary prediction in maximum likelihood estimation with the logit specification.
As an alternative to penalized maximum likelihood estimation, penalized maximum score estimation is recently applied to the variable selection in binary prediction by \citet{ChenLee2018b}, in which an $\ell_{0}$ penalty is used.\footnote{
Similarly, by setting a bound on the $\ell_{0}$-norm of covariates, \citet{ChenLee2018} consider the constrained maximum score estimation to select covariates in binary prediction.
}
Furthermore, replacing the zero-one loss with the hinge loss, the $\ell_{1}$-norm support vector machine (SVM), developed by \citet{ZhuRossetEtAl2004} and \citet{FungMangasarian2004} among others in the machine learning community, can effectively select variables in the traditional binary classification, namely binary classification with a symmetric loss independent of covariates.
However, none of these penalty-based selection methods above takes the decision maker's utility into account.

Despite the prevalence of penalty-based selection methods, pretesting and cross-validation are two alternatives in literature.
Both alternatives can be adapted for the selection in maximum utility estimation.
\citet{ElliottLieli2013} propose a general-to-specific pretest to select MU estimators but do not investigate theoretical properties of their post-model-selection MU estimator.
However, as suggested by \citeauthor{LeebPoetscher2005} (\citeyear{LeebPoetscher2005}, \citeyear{LeebPoetscher2008a}), a post-model-selection estimator would have complicated distributional properties.
The complicated distributional properties are partly attributable to using the same data for both estimation and validation.
In contrast, data splitting makes it convenient to evaluate the out-of-sample performance of a prediction rule so that cross-validation can be used to select models in almost any framework.
As argued by \citet{ArlotCelisse2010}, the wide applicability of cross-validation however makes its predictive performance less satisfactory than that of selection methods tailored in a specific framework.
The lack of theoretical analysis of the pretest and cross-validatory estimators motivates the evaluation by Monte Carlo experiments in this paper.
According to the simulation results, the UMPR with an appropriate data-dependent penalty outperforms the pretest and cross-validatory estimators.
The UMPR with an appropriate data-dependent penalty also outweighs the AIC, BIC, and LASSO if the conditional probability of the binary outcome is misspecified, and the $\ell_{1}$-norm SVM if the cutoff function considerably deviates from $1/2$.

The proposed method in this paper can be viewed as an essential complement to the literature concerned with model selection in binary classification.
Although cost-sensitive binary classification is important for decision-making, previous studies on binary classification focus on performance evaluated by the cost-insensitive rate of misclassification.
By applying \citeauthor{McDiarmid1989}'s (\citeyear{McDiarmid1989}) inequality and \citeauthor{Massart2000}'s (\citeyear{Massart2000}) lemma, we extend the aforementioned complexity penalties to the maximum utility estimation, in which the cost of misclassification may depend on the binary outcome and covariates.
Following the structural risk minimization approach, we establish non-asymptotic upper bounds on the difference between the maximal expected utility and the generalized expected utility of the UMPR.
These bounds, similar to those in \citet{Koltchinskii2001} and \citet{BartlettBoucheronEtAl2002} for the traditional binary classification, strike a balance between the approximation error and the expected complexity penalty.
In addition, we establish a non-asymptotic upper bound, which shrinks to zero as the in-sample size tends to infinity, on the expected value of these data-dependent complexity penalties.
We also extend properties of the Bayes decision rule to the maximum utility estimation.
This extension not only confirms \citeauthor{ElliottLieli2013}'s (\citeyear{ElliottLieli2013}) insight that the knowledge of sign of $p^{*}(x)-c(x)$ suffices to achieve the maximal expected utility, but also implies that the approximation error is bounded by the uniform distance between the conditional probability $p^{*}$ and the underlying class of functions.
Consequently, given pre-specified classes of functions, we can examine the universal utility consistency of UMPR, which is a property (corresponding to the risk consistency in loss minimization) rarely investigated in the previous studies on model selection in binary classification.

Throughout this paper, all random variables are defined on the probability space $(\Omega, \mathcal{A}, \mathbb{P})$.
Data are assumed to be i.i.d.\ (independent and identically distributed).
We write $\mathbb{N}$ for the collection of positive integers and $\mathbb{R}$ for the collection of real numbers.
We denote the indicator function by $\Ind{[E]}$, which equals one if event $E$ occurs and equals zero otherwise.
We also denote the sign function by $\text{sign}(z)$, which is equal to $2\Ind{[z\geq 0]}-1$ for any $z\in\mathbb{R}$.

The structure of the remaining paper is as follows.
Section~\ref{Model} describes the maximum utility estimation in \citeauthor{ElliottLieli2013}'s (\citeyear{ElliottLieli2013}) model and the issue of its in-sample overfitting.
Section~\ref{selection} presents the construction of a utility-maximizing prediction rule based on different complexity penalties, and non-asymptotic upper bounds on the difference between the generalized expected utility of the prediction rule and the maximal expected utility.
In Section~\ref{simulation}, Monte Carlo experiments are carried out to evaluate the finite-sample performance of the proposed utility-maximizing prediction rule and the aforementioned estimators.
Section~\ref{conclusion} concludes.
Technical proofs and steps of implementing pretest and cross-validation in the maximum utility estimation are collected in the Appendix.

\section{Maximum Utility Estimation}\label{Model}
\subsection{Model}
We start by describing \citeauthor{ElliottLieli2013}'s (\citeyear{ElliottLieli2013}) model of binary decision-making based on binary prediction:
Before the realization of a binary outcome $Y\in\{-1,1\}$, a decision maker aims to choose a binary decision $a\in\{-1,1\}$ to maximize his or her expected utility conditional on a $d$-dimensional vector of observed covariates $X=x\in\mathcal{X}$.
Concretely, the decision maker solves the optimization problem
\begin{equation}\label{Utility}
\max_{a\in\{-1,1\}}\Exp\left[U\left(a,Y,X \right)\mid X=x\right].
\end{equation}
We abbreviate by writing $u_{a,y}(x)=U(a,y,x)$ for notational simplicity.

Elliott and Lieli show that under some regularity assumptions, we can obtain an optimal decision rule (after observing $X=x$)
\begin{align*}
a^{*}(x) \equiv
\begin{cases}
1, & \text{if } p^{*}(x) > c(x), \\
-1, & \text{otherwise,}
\end{cases}
\end{align*}
where
\begin{align*}
c(x)\equiv\frac{u_{-1,-1}(x)-u_{1,-1}(x)}{u_{1,1}(x)-u_{-1,1}(x)+u_{-1,-1}(x)-u_{1,-1}(x)}
\end{align*}
is a cutoff function derived from the utility function, which is known in principle to the decision maker.
Thus, knowledge of the correct conditional probability $p^{*}(x)$ yields the maximal expected utility of the decision.

\citeauthor{ElliottLieli2013}'s (\citeyear{ElliottLieli2013}) insight is that the correct specification of sign$(p^{*}(x)-c(x))$ rather than that of $p^{*}(x)$ is enough to achieve maximal expected utility in (\ref{Utility});
namely, the knowledge of crossing points between the conditional probability $p^{*}(x)$ and the cutoff $c(x)$ is sufficient.
Moreover, they point out that the decision-making problem in (\ref{Utility}) can be equivalently written as
\begin{align*}
\max_{f}S(f)\equiv\Exp\left[b(X)[Y+1-2c(X)]\text{sign}(f(X)-c(X))\right],
\end{align*}
where
\begin{align*}
b(x)\equiv u_{1,1}(x)-u_{-1,1}(x)+u_{-1,-1}(x)-u_{1,-1}(x)
\end{align*}
is the denominator of $c(x)$ and the maximum is taken over all measurable functions from $\mathcal{X}$ to $\mathbb{R}$.
Since $u_{a,y}(x)=0.25b(x)[y+1-2c(x)]a+0.25b(x)[y+1-2c(x)]+u_{-1,y}(x)$, we normalize the utility function by setting $u_{-1,y}(x)=-0.25b(x)[y+1-2c(x)]$ for all $x\in\mathcal{X}$ and call $S(f)$ the expected utility of $f$.

The insight motivates \citet{ElliottLieli2013} to propose the \emph{maximum utility} (MU) estimation.
To be specific, given observations $\{(Y_{i},X_{i})\}_{i=1}^{n}$ with the sample size $n$ and a pre-specified class $\mathcal{F}$ of functions indexed by a finite-dimensional parameter,
we can choose a measurable maximum utility estimator $\hat{f}\in\mathcal{F}$ that satisfies
\begin{align}\label{MUestimator}
\hat{f}
\in\arg\max_{f\in\mathcal{F}}S_{n}(f)\equiv\frac{1}{n}\sum_{i=1}^{n} b(X_{i})[Y_{i}+1-2c(X_{i})]\text{sign}(f(X_{i})-c(X_{i})),
\end{align}
where ``$\arg$'' stands for the set of estimators in $\mathcal{F}$ that achieve the optimum.\footnote{
Since multiplicity of the maximum utility estimator could be present, the analysis in this paper emphasizes the properties of optimand functions.
See \citet{ElliottLieli2013} for the discussion about the lack of identification of optimizers.
}

The Monte Carlo simulation in \citet{ElliottLieli2013} shows that the maximum utility estimation, compared with traditional maximum likelihood approaches, achieves a large improvement in utility especially when the conditional probability $p^{*}$ is misspecified.
However, the in-sample performance of the maximum utility estimation may be attributed to the overfitting.
\citeauthor*{ElliottLieli2013} further make the following comment:
\begin{quote}
\emph{Both ML and MU have a strong tendency to overfit in sample, however the problem seems more severe for the MU method. This creates challenges for model selection.}
\end{quote}

\subsection{Nature of the Overfitting in MU Estimation}\label{Overfitting}
The in-sample overfitting of maximum utility estimation is similar to that of perceptron learning in the machine learning literature.
The simple perceptron learning is a method of binary pattern recognition that establishes classification based on a threshold function $f(x)=\text{sign}(\theta_{1}^{\top}x-\theta_{0})$, where $\theta_{1}\in\mathbb{R}^{d}$ and $\theta_{0}\in\mathbb{R}$.
More variants of perceptron are well documented in \citet{Vapnik2000} and \citet{AnthonyBartlett1999}.
Since it can be shown that for any $(y,x)\in\{-1,1\}\times \mathcal{X}$,
\begin{align*}
&b(x)[y+1-2c(x)]\text{sign}(f(x)-c(x))\\
=& b(x)[y(1-2c(x))+1]
-2b(x)[y(1-2c(x))+1]\Ind{[y\neq \text{sign}(f(x)-c(x))]},
\end{align*}
the optimization problem in (\ref{MUestimator}) can be viewed as the simple perceptron learning in which the cost of misclassification for each observation $(Y_{i},X_{i})$ may be different.
To be specific, the maximum utility estimator satisfies
\begin{align*}
\hat{f}
\in\arg\min_{f\in\mathcal{F}}\frac{1}{n}\sum_{i=1}^{n} b(X_{i})[Y_{i}(1-2c(X_{i}))+1]\Ind{[Y_{i}\neq \text{sign}(f(X_{i})-c(X_{i}))]},
\end{align*}
where $b(X_{i})[Y_{i}(1-2c(X_{i}))+1]$ is the cost of misclassification for the observation $(Y_{i},X_{i})$, and the cost of misclassification is nonnegative under some mild assumptions.
The simple perceptron learning is a special case of the maximum utility estimation because the cost of misclassification is identical for each observation whenever there is a constant $\bar{u}\in\mathbb{R}_{+}$ such that $u_{1,1}(x)-u_{-1,1}(x)=u_{-1,-1}(x)-u_{1,-1}(x)=\bar{u}$ for all $x\in\mathcal{X}$.
In this case, if $\mathcal{F}$ is further parameterized as $\{x\mapsto\theta_{0}+\theta_{1}^{\top}x: (\theta_{0},\theta_{1})^{\top}\in \mathbb{R}^{(d+1)}\}$, the maximum utility estimation reduces to \citeauthor{Manski1975}'s (\citeyear{Manski1975}, \citeyear{Manski1985}) maximum score estimation.
Moreover, even though the cost of misclassification may be different for each observation $(Y_{i},X_{i})$,
when the in-sample observations (training data set) can be perfectly separated by $\mathcal{F}$ (i.e., classification without error), the maximum utility estimation boils down to the simple perceptron learning.
This is because in this case, the cost of misclassification $b(X_{i})[Y_{i}(1-2c(X_{i}))+1]$ has no effect:
\begin{align*}
0=&\min_{f\in\mathcal{F}}\frac{1}{n}\sum_{i=1}^{n} b(X_{i})[Y_{i}(1-2c(X_{i}))+1]\Ind{[Y_{i}\neq \text{sign}(f(X_{i})-c(X_{i}))]}\\
=&\min_{f\in\mathcal{F}}\frac{1}{n}\sum_{i=1}^{n} \Ind{[Y_{i}\neq \text{sign}(f(X_{i})-c(X_{i}))]}.
\end{align*}
Although perfect separation of the in-sample observations could be accomplished by a sufficiently large class of functions, such sophisticated models will also fit the in-sample noise and thus worsen the out-of-sample performance.

\section{Model Selection}\label{selection}
Motivated by possible in-sample overfitting, we adopt the \emph{structural risk minimization} approach (also known as \emph{complexity regularization}) in machine learning to investigate model selection in cost-sensitive binary classification.
More precisely, our goal is to alleviate the overfitting by selecting a maximum utility estimator from some specific class of functions such that this selected maximum utility estimator, compared with maximum utility estimators from other classes of functions, has the largest complexity-penalized empirical utility.

To explain the idea, we first introduce notation.
Let
\begin{align*}
s(y,x,f)=b(x)[y+1-2c(x)]\text{sign}(f(x)-c(x))
\end{align*}
be the utility of the prediction rule $f$ evaluated at the observation $(y,x)$.\footnote{
More precisely, $s(y,x,f)$ is the double extra gain (loss) in utility arising from a match (mismatch) between the decision $\text{sign}(f(x)-c(x))$ and outcome $y$ given the covariate $x$, because
\begin{align*}
s(y,x,f)=
\begin{cases}
2(u_{1,1}(x)-u_{-1,1}(x))\text{sign}(f(x)-c(x)), & \text{if}\; y=1, \\
2(u_{1,-1}(x)-u_{-1,-1}(x))\text{sign}(f(x)-c(x)), & \text{if}\; y=-1.
\end{cases}
\end{align*}
}
A sample of i.i.d.\ observations with sample size $n$ is denoted by $\mathscr{D}_{n}\equiv\{(Y_{i},X_{i})\}_{i=1}^{n}$.
Given a prediction rule $\hat{f}$ constructed based on $\mathscr{D}_{n}$,
let $S(\hat{f})=\Exp[s(Y,X,\hat{f})\mid \mathscr{D}_{n}]$ and $S_{n}(\hat{f})=\frac{1}{n}\sum_{i=1}^{n}s(Y_{i},X_{i},\hat{f})$ be the expected utility and the empirical utility of the prediction rule $\hat{f}$, respectively.
The expectation involved in the definition of $S(\hat{f})$ is taken with respect to an observation $(Y,X)$, which is independent of $\mathscr{D}_{n}$ and distributed as $(Y_{1},X_{1})$.
Said differently, $S(\hat{f})$ measures the decision maker's expected utility if
he or she uses the prediction rule $\hat{f}$, estimated based on $\mathscr{D}_{n}$,
to classify one additional observation $(Y,X)$ drawn independently of $\mathscr{D}_{n}$.
Note that $S(\hat{f})$ could be random because of the random sample $\mathscr{D}_{n}$.
We suppress the possible dependence of $S(\hat{f})$ on $\mathscr{D}_{n}$ for convenience of exposition.

The structural risk minimization approach consists of the following steps.
First we consider a nondecreasing sieve $\{\mathcal{F}_{k}\}_{k=1}^{\infty}$,\footnote{
In the literature on sieve estimation, a metric space $(\mathcal{F}^{*},\rho)$ is usually pre-specified such that $\mathcal{F}\equiv\bigcup_{k=1}^{\infty}\mathcal{F}_{k}$ is dense in $\mathcal{F}^{*}$ with respect to $\rho$.
See for example \citet{GemanHwang1982}.
In this paper, the denseness is however not assumed and $\mathcal{F}^{*}$ can be treated as the collection of all measurable real-valued functions.
}
that is, a hierarchy of classes of functions
\begin{align*}
\mathcal{F}_{1}\subseteq \mathcal{F}_{2}\subseteq\cdots\subseteq \mathcal{F}_{k}\subseteq\cdots\;\text{and}\;
\mathcal{F}\equiv\bigcup_{k=1}^{\infty}\mathcal{F}_{k}.
\end{align*}
For example, $\mathcal{F}_{k}=\mathcal{P}_{k}$ is the class of polynomial transformations on $\mathcal{X}$ of order at most $k$,\footnote{
A polynomial transformation on $\mathcal{X}\subseteq \mathbb{R}^{d}$ of degree at most $k$ is a function of the form $f(x)=c_{0}+\sum_{j=1}^{q}c_{j}\varrho_{j}(x)$, where $(c_{0},c_{1},\ldots,c_{q})\in\mathbb{R}^{(q+1)}$ and $\varrho_{j}(x)=\prod_{\ell=1}^{d}x_{\ell}^{p_{j\ell}}$ with $\sum_{\ell=1}^{d}p_{j\ell}\leq k$ and $p_{j\ell}\in\mathbb{N}\cup\{0\}$ for each $j$ and $q\in\mathbb{N}$.
}
or it can be further transformed by the logistic function as $\mathcal{F}_{k}=\Lambda(\mathcal{P}_{k})\equiv \left\{x\mapsto \Lambda(f(x)):f\in\mathcal{P}_{k}\right\}$, where $\Lambda(v)=(1+\exp{\{-v\}})^{-1}$ for all $v\in\mathbb{R}$.
We refer the reader to \citet{Chen2007} for more examples of sieves.
For each $\mathcal{F}_{k}$, we select a maximum utility estimator
\begin{align}\label{MUE}
\hat{f}_{k}\in\arg\max_{f\in\mathcal{F}_{k}}S_{n}(f).
\end{align}
In addition, we construct a complexity penalty $C_{n}(k;\alpha)$ for $\mathcal{F}_{k}$, where $\alpha>0$ is a tuning parameter for a technical reason.
This technical reason and the issue of choosing $\alpha$ will be discussed later.
Let
\begin{align*}
\tilde{S}_{n}(f;k,\alpha)\equiv S_{n}(f)-C_{n}(k;\alpha)
\end{align*}
be the associated complexity-penalized empirical utility of a prediction rule $f\in\mathcal{F}_{k}$.
Finally, we define a \emph{utility-maximizing prediction rule} (UMPR) as a maximum utility estimator $\hat{f}_{k}$ in (\ref{MUE}) that maximizes $\tilde{S}_{n}(\hat{f}_{k};k,\alpha)$;\footnote{
Precisely, the UMPR should be referred to as $\text{sign}(\tilde{f}_{n}(\cdot\;;\alpha)-c(\cdot))$.
Since it will be clear from the context in this paper, we still reserve the UMPR for the estimator $\tilde{f}_{n}(\cdot\;;\alpha)$.
} that is,
\begin{align*}
\tilde{f}_{n}(x;\alpha)\equiv \hat{f}_{\hat{k}_{n}(\alpha)}(x)\;,\; \text{where}\; \hat{k}_{n}(\alpha)=\arg\max_{k\in\mathbb{N}}\tilde{S}_{n}(\hat{f}_{k};k,\alpha).
\end{align*}
For ease of presentation, we suppress the dependence of $(\tilde{S}_{n}(f;k,\alpha),\hat{k}_{n}(\alpha),\tilde{f}_{n}(\cdot\;;\alpha))$ on $\alpha$ and write $\tilde{S}_{n}(\tilde{f}_{n})=\tilde{S}_{n}(\hat{f}_{\hat{k}_{n}};\hat{k}_{n})$ for the complexity-penalized empirical utility of the UMPR.

The idea of complexity penalization has been used in the selection methods based on information criteria in econometrics.
These methods aim to maximize the complexity-penalized empirical log-likelihood evaluated at the maximum likelihood estimator.
Specifically, let $\mathcal{L}$ be the log-likelihood function of a single observation $(Y,X)$ and $\hat{f}^{\text{ML}}_k$ be the maximum likelihood estimator in $\mathcal{F}_{k}$.
Given a complexity measure $C^{\text{IC}}_{n}(k)$ for $\mathcal{F}_{k}$, we can construct an estimator
\begin{align*}
\tilde{f}^{\text{IC}}_{n}\equiv \hat{f}^{\text{ML}}_{\check{k}_{n}}\;,\; \text{where}\; \check{k}_{n}=\arg\max_{k\in\mathbb{N}}\frac{1}{n}
\sum_{i=1}^{n}\mathcal{L}(\hat{f}^{\text{ML}}_{k}|Y_{i},X_{i})-C^{\text{IC}}_{n}(k).
\end{align*}
Leading examples include the AIC by setting $C^{\text{IC}}_{n}(k)$ to be the number of free parameters in $\mathcal{F}_{k}$ divided by $n$,
and the BIC by setting $C^{\text{IC}}_{n}(k)$ to be the number of free parameters in $\mathcal{F}_{k}$ multiplied by $\log{\{n\}}/(2n)$.
Although the AIC and BIC only differ in the choice of complexity penalty in the selection procedure,
their asymptotic behaviors are different.
Details on these differences can be found in \citet{KonishiKitagawa2008}, \citet{ClaeskensHjort2008}, and references given there.

The UMPR shares a similar motivation with the AIC.
Just as the AIC adjusts the empirical log-likelihood to approximate the expected log-likelihood, the UMPR adjusts the empirical utility to approximate the expected utility.
Both adjustments are fulfilled by subtracting specific complexity penalties.
However, the AIC attempts to recover $p^{*}$, while the UMPR aims to select a model in which the decision maker only focuses on the local fitting at the crossing points between $p^{*}$ and $c$.

As in the penalized likelihood criteria, the choice of complexity penalty $C_{n}(k;\alpha)$ is essential in the proposed penalized empirical utility criteria.
First, the complexity penalty should be constructed without assuming any knowledge of the conditional probability $p^{*}$ so that the UMPR, like the maximum utility estimator, can perform well when $p^{*}$ is misspecified.
More importantly, the complexity penalty should be an appropriate estimate of the magnitude of overfitting $S_{n}(\hat{f}_{k})-S(\hat{f}_{k})$.
In this case, the expected utility $S(\hat{f}_{k})$ can be recovered by the penalized empirical utility $\tilde{S}_{n}(\hat{f}_{k};k)$ and thus the UMPR $\tilde{f}_{n}$ will have the largest expected utility $S(\tilde{f}_{n})$ among the maximum utility estimators $\{\hat{f}_{k}\}_{k=1}^{\infty}$.
Taking these two requirements into account, we will construct complexity penalties, without assuming the knowledge of $p^{*}$, to non-asymptotically bound the in-sample overfitting.
This non-asymptotic complexity-regularized approach is in marked contrast to the information-theoretic approach, for example AIC and GIC, where penalties are obtained by an asymptotic approximation of the Kullback-Leibler divergence but may not effectively control the in-sample overfitting.
As will be shown later, the non-asymptotic complexity penalties may differ from the information-theoretic complexity penalties in the order of the in-sample size.
We discuss both distribution-free and data-dependent penalty terms constructed by the non-asymptotic complexity-regularized approach, and study the theoretical properties of their associated utility-maximizing prediction rules in the following subsections.

\subsection{UMPR with a Distribution-Free Penalty}
Since the seminal work by \citet{VapnikChervonenkis1971}, there have been many improvements in the VC-type upper bound on the uniform deviation of empirical means from their expectations.
\citet{LugosiZeger1996} further applied the VC-type upper bound to finding a complexity penalty in the traditional binary classification.
Motivated by this idea, we aim to construct a VC complexity penalty for the maximum utility estimation.

We start by making the following assumptions.\\[0.3cm]
\noindent
\textbf{Assumptions}
\begin{enumerate}[label=(A\arabic*)]
\item \label{A1} The conditional probability $p^{*}(x)\equiv\mathbb{P}(Y=1\mid X=x)$ does not depend on the decision $a$.
\item \label{A2} For all $x$ in the support $\mathcal{X}\subseteq \mathbb{R}^{d}$ of $X$, $u_{1,1}(x)>u_{-1,1}(x)$ and $u_{-1,-1}(x)>u_{1,-1}(x)$.
\item \label{A3} For any $a,y\in \{1,-1\}$, $u_{a,y}(\cdot)$ is Borel measurable; in addition, there is some $M>0$ such that $|u_{a,y}(x)|\leq M$ for all $x\in\mathcal{X}$ and $a,y\in \{1,-1\}$.
\item \label{A4} For each $k\in\mathbb{N}$, the class $\mathcal{F}_{k}$ of functions is countable.
\end{enumerate}
The first three assumptions are imposed in \citet{ElliottLieli2013}, and the last assumption is imposed to avoid measurability complications.
Assumption~\ref{A1} excludes the possibility of feedback from the binary action to the binary outcome.
Take the financial investment in Section~\ref{Introduction} as an example.
Under Assumption~\ref{A1}, investors are price takers whose decisions on buying an instrument do not affect the possibility of price change.
Assumption~\ref{A2} implies that the decision maker obtains higher utility when the decision matches the outcome;
in particular, there should be a best response $a$ to a realized outcome $Y$ if this realization were observed by the decision maker before making a decision.
This assumption seems plausible in many situations, for example the aforementioned financial investment.
The uniform boundedness imposed by Assumption~\ref{A3} implies some shape constraint on the utility functions, especially when the support $\mathcal{X}$ is unbounded.
However, this assumption could be compatible with some models of the financial investment.
An example is the exponential utility (also known as constant absolute risk aversion preference) used by \citet{ChristensenLarsenEtAl2012} in an asset pricing model when the decision maker is risk averse.
Assumption~\ref{A4} is inconsequential in practice because there are only countably many computable real numbers evaluated for the parameters of $\mathcal{F}_{k}$ by a computer program.
The use of computable real numbers and functions could also be interpreted as a decision maker's computability-bounded rationality, as in \citet{RichterWong1999}.

These assumptions allow us to establish a VC-type upper bound on the uniform deviation of $S_{n}(f)$ from $S(f)$.
\begin{Pro}\label{Prop2}
Suppose that i.i.d.\ data $\mathscr{D}_{n}=\{(Y_{i},X_{i})\}_{i=1}^{n}$ are available.
Under Assumptions~\ref{A1}-\ref{A4}, we have for any $n,k\in\mathbb{N}$ and $\varepsilon>0$,
\begin{align*}
\mathbb{P}\left(\sup_{f\in\mathcal{F}_{k}}\left(S_{n}(f)-S(f)\right)
>8M\sqrt{\frac{2\log\{\Pi_{k,c}(n)\}}{n}}+\varepsilon\right)
\leq \exp{\left\{-\frac{n\varepsilon^{2}}{32M^{2}}\right\}},
\end{align*}
where $\Pi_{k,c}(\cdot)$ is the growth function of $\mathcal{F}_{k,c}\equiv\left\{x\mapsto \text{sign}(f(x)-c(x)):f\in\mathcal{F}_{k}\right\}$.\footnote{
For any collection $\mathcal{H}$ of functions from $\mathcal{X}$ to $\{-1,1\}$,
the \emph{growth function} $\Pi_{\mathcal{H}}:\mathbb{N}\to \mathbb{N}$ of $\mathcal{H}$ is
\begin{align*}
\Pi_{\mathcal{H}}(\ell)=\max_{(x_{1},\ldots,x_{\ell})\in\mathcal{X}^{\ell}}
\left|\{(h(x_{1}),\ldots,h(x_{\ell})):h\in\mathcal{H}\}\right|.
\end{align*}
That is, the growth function $\Pi_{\mathcal{H}}(\ell)$ is the maximum number of distinct ways in which $\ell$ points $(x_{1},\ldots,x_{\ell})$ can be classified using functions in $\mathcal{H}$.
}
\end{Pro}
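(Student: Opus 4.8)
The plan is to control $Z \equiv \sup_{f \in \mathcal{F}_k}\left(S_n(f) - S(f)\right)$ in two stages: first bound its fluctuation around $\mathbb{E}[Z]$ by McDiarmid's inequality, then bound $\mathbb{E}[Z]$ itself by symmetrization followed by Massart's finite-class lemma; the two stages combine to give exactly the stated inequality. I would begin by recording that $s$ is bounded: the footnote representation of $s(y,x,f)$ together with Assumption~\ref{A3} gives $|s(y,x,f)| = 2|u_{1,1}(x) - u_{-1,1}(x)| \leq 4M$ when $y=1$ (and analogously when $y=-1$), so $s$ takes values in an interval of length at most $8M$. Viewing $Z$ as a function of the i.i.d.\ observations, replacing a single observation perturbs every $S_n(f)$ by at most $8M/n$, so $Z$ has the bounded-differences property with constant $8M/n$ in each coordinate; McDiarmid's inequality then yields
\[
\mathbb{P}(Z - \mathbb{E}[Z] > \varepsilon) \leq \exp\left\{-\frac{2\varepsilon^2}{n(8M/n)^2}\right\} = \exp\left\{-\frac{n\varepsilon^2}{32M^2}\right\},
\]
which is precisely the right-hand side of the proposition. (Assumption~\ref{A4} ensures $Z$ is measurable, so these manipulations are legitimate.)

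Next I would bound $\mathbb{E}[Z]$. The standard symmetrization inequality for i.i.d.\ data gives, with Rademacher signs $\sigma_1, \ldots, \sigma_n$ independent of the sample,
\[
\mathbb{E}[Z] \leq 2\,\mathbb{E}\left[\sup_{f \in \mathcal{F}_k}\frac{1}{n}\sum_{i=1}^n \sigma_i\, s(Y_i, X_i, f)\right].
\]
The key structural observation is the factorization $s(Y_i, X_i, f) = w_i\, h_f(X_i)$, where $w_i \equiv b(X_i)[Y_i + 1 - 2c(X_i)]$ is free of $f$ with $|w_i| \leq 4M$, and $h_f(x) \equiv \text{sign}(f(x) - c(x)) \in \{-1,1\}$. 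Conditioning on $\mathscr{D}_n$, the vector $(h_f(X_1), \ldots, h_f(X_n))$ ranges over at most $\Pi_{k,c}(n)$ distinct values as $f$ varies in $\mathcal{F}_k$, by the definition of the growth function of $\mathcal{F}_{k,c}$; hence the inner supremum is a maximum over at most $\Pi_{k,c}(n)$ vectors $a^{(j)}$ with entries $a^{(j)}_i = w_i h^{(j)}_i / n$, each of Euclidean norm at most $4M/\sqrt{n}$. Massart's finite-class lemma then gives $\mathbb{E}[\max_j \sum_i \sigma_i a^{(j)}_i \mid \mathscr{D}_n] \leq (4M/\sqrt{n})\sqrt{2\log\{\Pi_{k,c}(n)\}}$; taking expectation over the sample and reinstating the factor $2$ produces $\mathbb{E}[Z] \leq 8M\sqrt{2\log\{\Pi_{k,c}(n)\}/n}$.

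To finish, since $\mathbb{E}[Z]$ is at most $8M\sqrt{2\log\{\Pi_{k,c}(n)\}/n}$, the event in the proposition forces $Z - \mathbb{E}[Z] > \varepsilon$, and the McDiarmid bound closes the argument. The hard part is the second stage: one must see that the $f$-independent weights $w_i$ can be folded into the vectors fed to Massart's lemma (after conditioning on the sample), and that the relevant cardinality is the growth function $\Pi_{k,c}(n)$ of the \emph{induced sign class} $\mathcal{F}_{k,c}$ rather than of $\mathcal{F}_k$ itself. Aligning the norm bound $4M/\sqrt{n}$, the count $\Pi_{k,c}(n)$, and the symmetrization factor so that the coefficient emerges as exactly $8M$ requires careful bookkeeping of constants, but is otherwise routine.
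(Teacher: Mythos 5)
Your proposal is correct and follows essentially the same route as the paper: McDiarmid's inequality with bounded-differences constant $8M/n$ for the concentration of $Z$ around $\mathbb{E}[Z]$, then symmetrization and Massart's finite-class lemma, with the count controlled by the growth function $\Pi_{k,c}(n)$ of the induced sign class. The only (minor) difference is at the Massart step: you fold the $f$-independent weights $w_{i}=b(X_{i})[Y_{i}+1-2c(X_{i})]$ directly into the finite set of vectors, using the norm bound $4M/\sqrt{n}$, whereas the paper first strips the weights via a Lipschitz-contraction step (Lemma 26.9 of Shalev-Shwartz and Ben-David) and applies Massart to the unweighted sign vectors --- both yield the identical constant $8M\sqrt{2\log\{\Pi_{k,c}(n)\}/n}$.
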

\noindent
The maximal inequality in Proposition~\ref{Prop2} suggests that large empirical utility arising from sophisticated models does not guarantee large expected utility.
To see this, note that for any $n,k\in\mathbb{N}$ and $\delta\in(0,1)$, with probability at least $1-\delta$,
\begin{align*}
S_{n}(\hat{f}_{k})-S(\hat{f}_{k})\leq 8M\sqrt{\frac{2\log{\{\Pi_{k,c}(n)\}}}{n}}+8M\sqrt{\frac{\log\{1/\delta\}}{2n}}.
\end{align*}
Thus, given the sample size $n$, an increase in $k$ tends to increase empirical utility $S_{n}$, but it may meanwhile increase $\Pi_{k,c}(n)$.
The growth function $\Pi_{k,c}$ measures the complexity of $\mathcal{F}_{k,c}$ to fit in-sample observations.
Clearly, $\Pi_{k,c}(\ell)\leq 2^{\ell}$ for each $\ell\in\mathbb{N}$.
If $\Pi_{k,c}(\ell)<2^{\ell}$ for some $\ell$, the complexity of $\mathcal{F}_{k,c}$ is restricted because some specific $\ell$ observations $\{(Y_{i},X_{i})\}_{i=1}^{\ell}$ cannot be separated by $\mathcal{F}_{k,c}$ without any classification error.\footnote{
There are $2^{\ell}-\Pi_{k,c}(\ell)$ possible realizations of $(Y_{1},\cdots,Y_{\ell})^{\top}$ such that for all $(x_{1},\cdots,x_{\ell})^{\top}$ and $f\in\mathcal{F}_{k}$,
\begin{align*}
(Y_{1},\cdots,Y_{\ell})^{\top}\neq (\text{sign}(f(x_{1})-c(x_{1})),\cdots,\text{sign}(f(x_{\ell})-c(x_{\ell})))^{\top}.
\end{align*}
}

Proposition~\ref{Prop2} also implies the asymptotic behavior of $\sup_{f\in\mathcal{F}_{k}}|S_{n}(f)-S(f)|$ as follows.
\begin{Cor}\label{Corollary1}
Suppose that the growth function $\Pi_{k,c}$ is of polynomial order for each $k\in\mathbb{N}$.
If the assumptions of Proposition~\ref{Prop2} hold,
then for any $k\in\mathbb{N}$ and $\varepsilon>0$, there exists an integer $n^{*}$ such that
\begin{align*}
\mathbb{P}\left(\sup_{f\in\mathcal{F}_{k}}|S_{n}(f)-S(f)|>\varepsilon\right)
\leq 2\exp{\left\{-\frac{n\varepsilon^{2}}{128M^{2}}\right\}}.
\end{align*}
for all $n\geq n^{*}$.
\end{Cor}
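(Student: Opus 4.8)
The plan is to bootstrap the two-sided deviation bound from the one-sided maximal inequality of Proposition~\ref{Prop2}, using three ingredients: a symmetry argument to handle the lower tail, the polynomial-order hypothesis to render the distribution-free VC term asymptotically negligible, and a union bound to combine the two one-sided tails into the claimed factor of $2$.

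First I would observe that Proposition~\ref{Prop2} applies verbatim with the utility $s(y,x,f)$ replaced by its negation $-s(y,x,f)$. Negation preserves the uniform envelope governed by the constant $M$ of Assumption~\ref{A3}, and it leaves the sign structure $\text{sign}(f(x)-c(x))$ untouched, so the class $\mathcal{F}_{k,c}$ and its growth function $\Pi_{k,c}$ are unchanged. Since $\sup_{f}\bigl((-S_{n})(f)-(-S)(f)\bigr)=\sup_{f}\bigl(S(f)-S_{n}(f)\bigr)$, the lower tail obeys the identical estimate: for every $n,k\in\mathbb{N}$ and $\varepsilon>0$,
\[
\mathbb{P}\left(\sup_{f\in\mathcal{F}_{k}}\left(S(f)-S_{n}(f)\right)>8M\sqrt{\frac{2\log\{\Pi_{k,c}(n)\}}{n}}+\varepsilon\right)\leq\exp\left\{-\frac{n\varepsilon^{2}}{32M^{2}}\right\}.
\]

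Next I would exploit the polynomial-order hypothesis. If $\Pi_{k,c}(n)\leq P_{k}(n)$ for some polynomial $P_{k}$, then $\log\{\Pi_{k,c}(n)\}=\Bigo{\log n}$, so the distribution-free penalty $8M\sqrt{2\log\{\Pi_{k,c}(n)\}/n}$ is $\Bigo{\sqrt{(\log n)/n}}$ and tends to $0$ as $n\to\infty$. Fixing $\varepsilon>0$, I can therefore choose $n^{*}$ (depending on $k$ and $\varepsilon$) so large that $8M\sqrt{2\log\{\Pi_{k,c}(n)\}/n}\leq\varepsilon/2$ for all $n\geq n^{*}$. Applying both one-sided inequalities with the free threshold parameter set to $\varepsilon/2$ rather than $\varepsilon$ replaces the exponent denominator $32M^{2}$ by $128M^{2}$, and for $n\geq n^{*}$ the deterministic penalty plus $\varepsilon/2$ is at most $\varepsilon$. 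Monotonicity of probability then gives, for each tail,
\[
\mathbb{P}\left(\sup_{f\in\mathcal{F}_{k}}\left(S_{n}(f)-S(f)\right)>\varepsilon\right)\leq\exp\left\{-\frac{n\varepsilon^{2}}{128M^{2}}\right\},
\]
and likewise with $S_{n}(f)-S(f)$ replaced by $S(f)-S_{n}(f)$.

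Finally, since $\sup_{f}\Abs{S_{n}(f)-S(f)}=\max\{\sup_{f}(S_{n}(f)-S(f)),\,\sup_{f}(S(f)-S_{n}(f))\}$, the event $\{\sup_{f}\Abs{S_{n}(f)-S(f)}>\varepsilon\}$ is exactly the union of the two one-sided events, and a union bound delivers the factor of $2$ and the stated inequality for all $n\geq n^{*}$. The argument is largely mechanical; the only point requiring genuine care is the symmetry step, where I must confirm that negating the utility alters neither the envelope constant controlled by $M$ nor the growth function $\Pi_{k,c}$, so that the lower-tail estimate comes for free rather than demanding a separate combinatorial analysis.
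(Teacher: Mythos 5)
Your proof is correct and follows essentially the same route as the paper's: the paper likewise obtains the lower-tail bound ``following the argument in Proposition~\ref{Prop2} \emph{mutatis mutandis}'' (which your negation-symmetry observation makes explicit), chooses $n^{*}$ so that $8M\sqrt{2\log\{\Pi_{k,c}(n)\}/n}<\varepsilon/2$, applies both one-sided bounds at threshold $\varepsilon/2$ to get the $128M^{2}$ denominator, and concludes with the same union bound. Nothing is missing.
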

\noindent
This corollary immediately guarantees that given i.i.d.\ observations, $|S_{n}(\hat{f}_{k})-S(\hat{f}_{k})|$ converges almost surely to zero whenever $\Pi_{k,c}$ is of polynomial order.
Hence, the technical conditions imposed by Proposition 2 of \citet{ElliottLieli2013} such as compactness of parameter space and lipschitz continuity of functions with respect to the parameter can be relaxed.

More importantly, the maximal inequality in Proposition~\ref{Prop2} is non-asymptotic; thus, it can be used to estimate the upper bound on $S_{n}(\hat{f}_{k})-S(\hat{f}_{k})$ for every finite sample size $n$ when the growth function $\Pi_{k,c}$ is known.
The calculation of $\Pi_{k,c}(n)$ is, however, not easy in practice.
Thus, $\Pi_{k,c}(n)$ is usually replaced with an upper bound
\begin{align*}
\psi_{c}(k,n)=
\begin{cases}
2^{n}, & \text{if $n\leq V_{k,c}$,} \\
\left(\frac{en}{V_{k,c}}\right)^{V_{k,c}}, & \text{if $n> V_{k,c}$,}
\end{cases}
\end{align*}
where $V_{k,c}$ is the \emph{VC dimension} of the class $\mathcal{F}_{k,c}$, which is the largest integer $\ell$ such that $\Pi_{k,c}(\ell)=2^{\ell}$ by definition.
By Proposition~\ref{Prop2}, the replacement of $\Pi_{k,c}(n)$ with $\psi_{c}(k,n)$ shows that for any $n, k\in\mathbb{N}$ and $\varepsilon>0$,
\begin{align*}
\mathbb{P}\left(\sup_{f\in\mathcal{F}_{k}}\left(S_{n}(f)-S(f)\right)
>8M\sqrt{\frac{2\log\{\psi_{c}(k,n)\}}{n}}+\varepsilon\right)
\leq \exp{\left\{-\frac{n\varepsilon^{2}}{32M^{2}}\right\}}.
\end{align*}
The upper bound $\psi_{c}(k,n)$ follows from a combinatorial result, which is known as Sauer's lemma and can be found in Theorems 3.6 and 3.7 of \citet{AnthonyBartlett1999}.
As a parameter involved in $\psi_{c}(k,n)$, the VC dimension $V_{k,c}$, like the growth function, also restricts the complexity of $\mathcal{F}_{k,c}$.
Note that $\Pi_{k,c}(V_{k,c}+1)<2^{(V_{k,c}+1)}$ if $V_{k,c}<\infty$.
Thus, the classification error of using $\mathcal{F}_{k,c}$ to classify some realization of $\{(Y_{i},X_{i})\}_{i=1}^{n}$ is impossibly eliminated whenever the in-sample size $n$ is greater than $V_{k,c}$.
As shown in Theorem 3.5 of \citet{AnthonyBartlett1999}, the VC dimension $V_{k,c}$ is equal to the dimension of $\mathcal{F}_{k}$ if this class $\mathcal{F}_{k}$ is specified as a vector space of real-valued functions.
For example, the VC dimension $V_{k,c}$ is ${d+k \choose k}$ if $\mathcal{F}_{k}$ is the class $\mathcal{P}_{k}$ of polynomial transformations on $\mathcal{X}$ of order at most $k$ in the absence of dummy covariates.
Even if we consider the logit specification, say $\mathcal{F}_{k,c}=\left\{x\mapsto \text{sign}(f(x)-c(x)):f\in\Lambda(\mathcal{P}_{k})\right\}$, then its VC dimension $V_{k,c}$ can be bounded by ${d+k \choose k}+1$.\footnote{
To see this, note that for a class of Boolean functions, the VC dimension is equal to the VC index minus one by definition.
Since both functions $\text{sign}(\cdot)$ and $\Lambda(\cdot)$ are monotone,
the VC index of $\mathcal{F}_{k,c}$ is less than or equal to that of $\mathcal{P}_{k}$ by Lemma 9.9 (v) and (viii) of \citet{Kosorok2008}.
Applying Lemma 9.6 of \citet{Kosorok2008} yields the result.
}
More generally, if $\mathcal{F}_{k}$ is a VC-subgraph class,\footnote{
Let $\mathcal{C}$ be a collection of subsets of $\mathcal{Z}$.
The collection $\mathcal{C}$ is said to shatter a subset $\mathcal{Z}_{\ell}=\{z_{1},\ldots,z_{\ell}\}\subseteq \mathcal{Z}$ if the cardinality of $\left\{\{\mathcal{Z}_{\ell}\cap C\}:C\in\mathcal{C}\right\}$ is equal to $2^{\ell}$.
The collection $\mathcal{C}$ is called a Vapnik-Cervonekis (VC) class if for some $\ell\in\mathbb{N}$, no subset of cardinality $\ell$ is shattered by $\mathcal{C}$.
A collection $\mathcal{F}$ is a VC-subgraph class if the collection $\big\{\{(x,t)\in \mathcal{X}\times \mathbb{R}:t<f(x)\}:f\in\mathcal{F}\big\}$ of all subgraphs is a VC class of sets in $\mathcal{X}\times \mathbb{R}$.
}
then the VC dimension $V_{k,c}$ equals the VC index of $\mathcal{F}_{k}$ minus one by Lemma 9.9 of \citet{Kosorok2008}.
We refer the reader to Section 2.6 of \citet{VaartWellner1996} and Section 9.1 of \citet{Kosorok2008} for properties of a VC-subgraph class.

The easily computable VC-type upper bound permits the construction of a distribution-free complexity penalty.
For each $k$, we consider an estimate of expected utility $S(\hat{f}_{k})$ to be
\begin{align*}
R^{\text{VC}}_{n,k}\equiv S_{n}(\hat{f}_{k})-8M\sqrt{\frac{2\log\{\psi_{c}(k,n)\}}{n}}.
\end{align*}
It follows that for each $k$, we obtain a non-asymptotic upper bound on the tail probability for $R^{\text{VC}}_{n,k}-S(\hat{f}_{k})$.
To be specific, we have for any $n,k\in\mathbb{N}$ and $\varepsilon >0$,
\begin{align}\label{boundofrisk}
\mathbb{P}(R^{\text{VC}}_{n,k}-S(\hat{f}_{k})>\varepsilon)
\leq &\mathbb{P}\left(\sup_{f\in\mathcal{F}_{k}}\left(S_{n}(f)-S(f)\right)
>8M\sqrt{\frac{2\log\{\psi_{c}(k,n)\}}{n}}+\varepsilon\right)\notag\\
\leq & \exp{\left\{-\frac{n\varepsilon^{2}}{32M^{2}}\right\}}.
\end{align}
Following the suggestion in \citet{BartlettBoucheronEtAl2002}, we consider the VC complexity penalty
\begin{align*}
C^{\text{VC}}_{n}(k;\alpha)
&\equiv 8M\sqrt{\frac{2\log\{\psi_{c}(k,n)\}}{n}}+8M\chi_{n}(k;\alpha),
\end{align*}
where
\begin{align*}
\chi_{n}(k;\alpha)\equiv\sqrt{\frac{(1+\alpha)\log\{V_{k,c}\}}{2n}}.
\end{align*}

The VC complexity penalty is the sum of the estimate $S_{n}(\hat{f}_{k})-R_{n,k}$ of the magnitude of overfitting and a technical term $8M\chi_{n}(k;\alpha)$.
Treating $S_{n}(\hat{f}_{k})-R_{n,k}$ as a component of the VC complexity penalty,
the non-asymptotic complexity-regularized approach explicitly accounts for the in-sample overfitting.
The VC complexity penalty differs from the penalties for the information-theoretic approach in the order of $n$.
For example, the VC complexity penalty (without the technical term) is $8M\sqrt{2(k+1)\log\{en/(k+1)\}/n}$ if we consider the specification $\mathcal{F}_{k}=\mathcal{P}_{k}$ of univariate polynomial functions for the UMPR.
Given the logistic transformation of the same specification $\mathcal{F}_{k}=\mathcal{P}_{k}$ in the empirical log-likelihood, the AIC and BIC have the penalties $(k+1)/n$ and $(k+1)\log\{n\}/(2n)$, respectively.
In this example, the VC complexity penalty would be greater than the penalties used in AIC and BIC when a large sample is available.
The different convergence rates should be attributed to the difference in the underlying objective function: the AIC and BIC are both associated with the empirical log-likelihood function, whereas the UMPR is associated with the empirical utility function.\footnote{
The derivation of penalties for AIC and BIC relies on twice differentiability of the empirical log-likelihood function.
See for example Sections 3.4 and 9.1 of \citet{KonishiKitagawa2008}.
The empirical utility function is, however, not differentiable.
Despite the non-differentiability, a minimax lower bound on $\Exp[S_{n}(\hat{f}_{k})]-\Exp[S(\hat{f}_{k})]$ could still be established.
Let $\mathcal{P}(\mathcal{F}_{k})$ be the set of all distributions of $(Y,X)$ such that $p^{*}\in\mathcal{F}_{k}$.
In the special case that $b(x)=b>0$ and $c(x)=1/2$ for all $x\in\mathcal{X}$, if $\mathcal{F}_{k}$ has a finite VC index greater than 2, then Inequality (38) of \citet{MassartNedelec2006} implies $\sup_{\mathbb{P}\in\mathcal{P}(\mathcal{F}_{k})}\Exp[S_{n}(\hat{f}_{k})]-\Exp[S(\hat{f}_{k})]\geq \Bigomega{1/\sqrt{n}}$, where the notation $\Bigomega{1/\sqrt{n}}$ indicates that there exist positive constants $\kappa_{0}$ and $n_{0}$ such that this lower bound is grater than $\kappa_{0}/\sqrt{n}$ for all $n\geq n_{0}$.
Thus, the VC complexity penalty is near optimal in the minimax sense.
This lower bound can be improved under a margin restriction on $p^{*}$.
Details about the margin restriction can be found in \citet{MassartNedelec2006}.
}
The technical term $\chi_{n}(k;\alpha)$ is included in the penalty to guarantee the summability of $\zeta(\alpha)\equiv \sum_{k=1}^{\infty}V_{k,c}^{-(1+\alpha)}$
for some $\alpha_{0}$ such that the union bound holds nontrivially in the proof of the following theorem.
\begin{Thm}\label{MainThm}
Suppose that (i) the data $\mathscr{D}_{n}=\{(Y_{i},X_{i})\}_{i=1}^{n}$ are i.i.d., (ii) Assumptions~\ref{A1}-\ref{A4} hold, (iii) $\mathcal{F}_{k}$ is a VC-subgraph class for each $k$, and (iv) $\zeta(\alpha_{0})<\infty$ for some $\alpha_{0}$.
If the UMPR $\tilde{f}_{n}$ is constructed based on the penalty $C^{\text{VC}}_{n}$ with tuning parameter $\alpha_{0}$, then
for any $n\in\mathbb{N}$ and $\varepsilon>0$,
\begin{align*}
\mathbb{P}\left(\tilde{S}_{n}(\tilde{f}_{n})-S(\tilde{f}_{n})>\varepsilon\right)
\leq \zeta(\alpha_{0})\exp{\left\{-\frac{n\varepsilon^{2}}{32M^{2}}\right\}},
\end{align*}
and
\begin{align*}
S^{*}-\Exp[S(\tilde{f}_{n})]
\leq \min_{k}\left\{C^{\text{VC}}_{n}(k;\alpha_{0})+\left(S^{*}-S_{k}^{*}\right)\right\}
+8M\sqrt{\frac{1+\log\{\zeta(\alpha_{0})\}}{2n}},
\end{align*}
where $S^{*}\equiv \sup_{f}S(f)$ and $S_{k}^{*}\equiv\sup_{f\in\mathcal{F}_{k}}S(f)$ for each $k$.
\end{Thm}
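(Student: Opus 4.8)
The plan is to establish the two displays separately, both flowing from Proposition~\ref{Prop2} together with the way the penalty $C^{\text{VC}}_{n}$ is engineered. For the tail bound I would first rewrite the deviation of the UMPR in terms of the individual sieve estimators. Since $\tilde{f}_{n}=\hat{f}_{\hat{k}_{n}}$ and $\tilde{S}_{n}(\tilde{f}_{n})=S_{n}(\hat{f}_{\hat{k}_{n}})-C^{\text{VC}}_{n}(\hat{k}_{n};\alpha_{0})$, the event $\{\tilde{S}_{n}(\tilde{f}_{n})-S(\tilde{f}_{n})>\varepsilon\}$ is contained in $\bigcup_{k}\{S_{n}(\hat{f}_{k})-C^{\text{VC}}_{n}(k;\alpha_{0})-S(\hat{f}_{k})>\varepsilon\}$, so a union bound reduces matters to controlling each $k$ separately. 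For fixed $k$ I would use $S_{n}(\hat{f}_{k})-S(\hat{f}_{k})\leq\sup_{f\in\mathcal{F}_{k}}(S_{n}(f)-S(f))$ and apply the $\psi_{c}$-form of Proposition~\ref{Prop2} with its distribution-free threshold $8M\sqrt{2\log\{\psi_{c}(k,n)\}/n}$ and the free parameter set to $8M\chi_{n}(k;\alpha_{0})+\varepsilon$.

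The crux is that the technical term $\chi_{n}$ is designed precisely to turn the per-$k$ exponential into a summable weight. Using $(a+b)^{2}\geq a^{2}+b^{2}$ for $a,b\geq 0$, the resulting bound $\exp\{-n(8M\chi_{n}(k;\alpha_{0})+\varepsilon)^{2}/(32M^{2})\}$ factorizes into $\exp\{-n(8M\chi_{n}(k;\alpha_{0}))^{2}/(32M^{2})\}\cdot\exp\{-n\varepsilon^{2}/(32M^{2})\}$, and substituting $\chi_{n}(k;\alpha_{0})^{2}=(1+\alpha_{0})\log\{V_{k,c}\}/(2n)$ collapses the first factor to exactly $V_{k,c}^{-(1+\alpha_{0})}$. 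Summing over $k$ then produces $\zeta(\alpha_{0})\exp\{-n\varepsilon^{2}/(32M^{2})\}$, finite by hypothesis~(iv). This is the step where the construction earns its keep: without $\chi_{n}$ the union over infinitely many classes would diverge, and the whole point of the tuning parameter is to secure the summability of $\sum_{k}V_{k,c}^{-(1+\alpha_{0})}$.

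For the expected-utility bound I would start from the decomposition
\[
S^{*}-S(\tilde{f}_{n})=\big(S^{*}-\tilde{S}_{n}(\tilde{f}_{n})\big)+\big(\tilde{S}_{n}(\tilde{f}_{n})-S(\tilde{f}_{n})\big)
\]
and take expectations. For the first bracket, the optimality of $\hat{k}_{n}$ gives $\tilde{S}_{n}(\tilde{f}_{n})\geq S_{n}(\hat{f}_{k})-C^{\text{VC}}_{n}(k;\alpha_{0})\geq S_{n}(f)-C^{\text{VC}}_{n}(k;\alpha_{0})$ for every fixed $f\in\mathcal{F}_{k}$ and every $k$, the second inequality holding because $\hat{f}_{k}$ maximizes $S_{n}$ over $\mathcal{F}_{k}$. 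Since a fixed $f$ is nonrandom, $\Exp[S_{n}(f)]=S(f)$, so taking expectations and then the supremum over $f\in\mathcal{F}_{k}$ (approximating the supremum to within arbitrary precision when it is not attained) yields $\Exp[\tilde{S}_{n}(\tilde{f}_{n})]\geq S_{k}^{*}-C^{\text{VC}}_{n}(k;\alpha_{0})$. Minimizing $S^{*}-\Exp[\tilde{S}_{n}(\tilde{f}_{n})]\leq C^{\text{VC}}_{n}(k;\alpha_{0})+(S^{*}-S_{k}^{*})$ over $k$ gives the approximation-plus-penalty term.

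For the second bracket I would control $\Exp[Z]$, where $Z\equiv\tilde{S}_{n}(\tilde{f}_{n})-S(\tilde{f}_{n})$ carries the tail bound just proved, via $\Exp[Z]\leq\Exp[Z^{+}]=\int_{0}^{\infty}\mathbb{P}(Z>\varepsilon)\Myd\varepsilon$. Splitting this integral at a truncation level $u$ and using $\mathbb{P}(Z>\varepsilon)\leq\min\{1,\zeta(\alpha_{0})\exp\{-n\varepsilon^{2}/(32M^{2})\}\}$, an appropriate choice of $u$ (equivalently, a Cram\'er--Chernoff/Jensen argument on $\Exp[e^{\lambda Z}]$ optimized over $\lambda$) delivers the stated $8M\sqrt{(1+\log\{\zeta(\alpha_{0})\})/(2n)}$; adding the two brackets finishes the proof. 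I expect this last step to be the main obstacle: everything else is a union bound, the factorization of an exponential, and the elementary optimality inequalities, whereas converting the Gaussian-type tail into the exact closed-form expectation is the one genuinely analytic step, and it is where the truncation level (or the conjugate exponent $\lambda$) must be chosen to reproduce the precise constant $8M/\sqrt{2}$ and the $1+\log\{\zeta(\alpha_{0})\}$ inside the root.
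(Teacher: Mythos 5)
Your proposal is correct and follows the paper's own proof essentially step for step: the tail bound is obtained exactly as in the paper (union bound over $k$, Inequality~(\ref{boundofrisk}) with the threshold shifted by $8M\chi_{n}(k;\alpha_{0})$, the elementary inequality $(a+b)^{2}\geq a^{2}+b^{2}$, and the exact collapse of $\exp\{-n(8M\chi_{n}(k;\alpha_{0}))^{2}/(32M^{2})\}$ to $V_{k,c}^{-(1+\alpha_{0})}$, summing to $\zeta(\alpha_{0})$), and the expected-utility bound uses the same decomposition and the same optimality chain $\tilde{S}_{n}(\tilde{f}_{n})\geq S_{n}(\hat{f}_{k})-C^{\text{VC}}_{n}(k;\alpha_{0})\geq S_{n}(f)-C^{\text{VC}}_{n}(k;\alpha_{0})$ with $\Exp[S_{n}(f)]=S(f)$ for fixed $f$. (If anything, you are slightly more careful than the paper here: the paper picks a maximizer $f_{k}^{*}\in\arg\max_{f\in\mathcal{F}_{k}}S(f)$, tacitly assuming it exists, while you approximate a possibly unattained supremum.)

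The one point to fix is the execution of the final tail-to-expectation step, which you correctly flag as the delicate one. The paper does \emph{not} split the first-moment integral $\int_{0}^{\infty}\mathbb{P}(Z>\varepsilon)\Myd\varepsilon$; it invokes Lemma~\ref{expectationbound} (Problem 12.1 of Devroye, Gy\"orfi, and Lugosi), which squares first: it bounds
\begin{align*}
\Exp\left[Z^{2}\Ind{[Z\geq 0]}\right]=\int_{0}^{\infty}\mathbb{P}\left(Z^{2}\Ind{[Z\geq 0]}>t\right)\Myd t
\leq v+\frac{c_{1}}{c_{2}}\exp\{-c_{2}v\},
\end{align*}
where the substitution $t=\varepsilon^{2}$ makes the tail exponential in $t$ so that the truncated integral is computed \emph{exactly}; the choice $v=\log\{c_{1}\}/c_{2}$ then gives precisely $(1+\log\{c_{1}\})/c_{2}$, and Jensen (Cauchy--Schwarz) converts back via $\Exp[Z]\leq\sqrt{\Exp[Z^{2}\Ind{[Z\geq 0]}]}$; with $c_{1}=\zeta(\alpha_{0})$ and $c_{2}=n/(32M^{2})$ this is verbatim $8M\sqrt{(1+\log\{\zeta(\alpha_{0})\})/(2n)}$. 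Your literal first-moment truncation cannot reproduce this constant: with the standard Gaussian-tail estimate $\int_{u}^{\infty}\exp\{-c_{2}\varepsilon^{2}\}\Myd\varepsilon\leq\exp\{-c_{2}u^{2}\}/(2c_{2}u)$ and the natural level $u=\sqrt{\log\{\zeta(\alpha_{0})\}/c_{2}}$, you get $\sqrt{t/c_{2}}+\tfrac{1}{2}(c_{2}t)^{-1/2}$ with $t=\log\{\zeta(\alpha_{0})\}$, whose square is $(t+1+\tfrac{1}{4t})/c_{2}$, strictly larger than the stated $(1+t)/c_{2}$ (and degenerate as $t\downarrow 0$); the Chernoff variant on $\Exp[e^{\lambda Z}]$ fares no better, since integrating the tail against $e^{\lambda\varepsilon}$ introduces an extra logarithmic factor. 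Since you already mention Jensen as an alternative, the repair is minimal: square before truncating, exactly as the paper's lemma does.
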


Theorem~\ref{MainThm} implies a probabilistic lower bound on the expected utility $S(\tilde{f}_{n})$; that is, for any $n\in\mathbb{N}$ and $\delta\in(0,1)$,
\begin{align*}
S(\tilde{f}_{n})\geq\tilde{S}_{n}(\tilde{f}_{n})-8M\sqrt{\frac{\log\{\zeta(\alpha_{0})/\delta\}}{2n}}
\end{align*}
with probability at least $1-\delta$.
Theorem~\ref{MainThm} also shows an upper bound on the difference between the maximal expected utility $S^{*}$ and the generalized expected utility $\Exp[S(\tilde{f}_{n})]$.
This upper bound takes into account the trade-off between the complexity penalty $C^{\text{VC}}_{n}(k;\alpha_{0})$ and the approximation error $S^{*}-S^{*}_{k}$.
Furthermore, if the approximation error is equal to zero for some $k$, then $\Exp[S(\tilde{f}_{n})]$ converges to $S^{*}$ because the upper bound on this difference shrinks to zero as the sample size tends to infinity.
In this case, the convergence of $\Exp[S(\tilde{f}_{n})]$ to $S^{*}$ is equivalent to the convergence of $S(\tilde{f}_{n})$ to $S^{*}$ in probability because $\sup_{f\in\mathcal{F}}|S(f)|\leq 4M$ under Assumption~\ref{A3}.
In fact, we can establish the almost sure convergence of $S(\tilde{f}_{n})$ as follows.
\begin{Cor}\label{Corollary2}
Suppose that the assumptions of Theorem~\ref{MainThm} hold.
The UMPR $\tilde{f}_{n}$ constructed based on the penalty $C^{\text{VC}}_{n}$ with tuning parameter $\alpha_{0}$ satisfies
\begin{align*}
\lim_{n\to \infty}S(\tilde{f}_{n})=S^{*}\;\;\text{with probability one}
\end{align*}
for any distribution of $(Y,X)$ such that $\lim_{k\to \infty}S_{k}^{*}=S^{*}$.
\end{Cor}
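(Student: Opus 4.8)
The plan is to turn the probabilistic tail bound of Theorem~\ref{MainThm}, whose right-hand side decays geometrically in $n$, into an almost sure statement by the Borel--Cantelli lemma, and to control the remaining deterministic gap through the approximation hypothesis $\lim_{k\to\infty}S_{k}^{*}=S^{*}$. Since $S^{*}=\sup_{f}S(f)\geq S(\tilde{f}_{n})$ always holds, the nonnegative gap $S^{*}-S(\tilde{f}_{n})$ is the only quantity to bound from above, and I would split it as
\begin{align*}
S^{*}-S(\tilde{f}_{n})=\underbrace{\left(S^{*}-\tilde{S}_{n}(\tilde{f}_{n})\right)}_{\text{approximation}}+\underbrace{\left(\tilde{S}_{n}(\tilde{f}_{n})-S(\tilde{f}_{n})\right)}_{\text{generalization}},
\end{align*}
and show that for each fixed $\varepsilon>0$ every piece is at most $\varepsilon/2$ for all large $n$, almost surely.

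For the generalization term I would apply the tail bound of Theorem~\ref{MainThm} directly with threshold $\varepsilon/2$: the series $\sum_{n=1}^{\infty}\mathbb{P}\big(\tilde{S}_{n}(\tilde{f}_{n})-S(\tilde{f}_{n})>\varepsilon/2\big)\leq \sum_{n=1}^{\infty}\zeta(\alpha_{0})\exp\{-n\varepsilon^{2}/(128M^{2})\}$ converges because the summand decays geometrically. By Borel--Cantelli, the event $\{\tilde{S}_{n}(\tilde{f}_{n})-S(\tilde{f}_{n})>\varepsilon/2\}$ occurs only finitely often, so $\tilde{S}_{n}(\tilde{f}_{n})-S(\tilde{f}_{n})\leq\varepsilon/2$ for all sufficiently large $n$, almost surely.

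For the approximation term I would first use $\lim_{k\to\infty}S_{k}^{*}=S^{*}$ to fix, given $\varepsilon>0$, an index $k_{0}$ with $S^{*}-S_{k_{0}}^{*}<\varepsilon/6$. Because $\tilde{f}_{n}$ maximizes the penalized empirical utility over all indices, $\tilde{S}_{n}(\tilde{f}_{n})\geq S_{n}(\hat{f}_{k_{0}})-C^{\text{VC}}_{n}(k_{0};\alpha_{0})$, giving $S^{*}-\tilde{S}_{n}(\tilde{f}_{n})\leq\big(S^{*}-S_{k_{0}}^{*}\big)+\big(S_{k_{0}}^{*}-S_{n}(\hat{f}_{k_{0}})\big)+C^{\text{VC}}_{n}(k_{0};\alpha_{0})$. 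Since $\hat{f}_{k_{0}}$ maximizes $S_{n}$ over $\mathcal{F}_{k_{0}}$, we have $S_{k_{0}}^{*}-S_{n}(\hat{f}_{k_{0}})\leq\sup_{f\in\mathcal{F}_{k_{0}}}|S_{n}(f)-S(f)|$, which tends to $0$ almost surely by Corollary~\ref{Corollary1} (applicable because the VC-subgraph class $\mathcal{F}_{k_{0}}$ has a growth function of polynomial order). For the now-fixed $k_{0}$ the penalty $C^{\text{VC}}_{n}(k_{0};\alpha_{0})$ is of order $\sqrt{\log n/n}$ and hence also vanishes. Thus the approximation term falls below $\varepsilon/6+\varepsilon/6+\varepsilon/6=\varepsilon/2$ for all large $n$, almost surely.

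Combining the two pieces yields $S^{*}-S(\tilde{f}_{n})\leq\varepsilon$ eventually, almost surely, for each fixed $\varepsilon>0$. To obtain a single probability-one event I would intersect over the countable family $\varepsilon=1/m$, $m\in\mathbb{N}$: off a null set, $\limsup_{n}(S^{*}-S(\tilde{f}_{n}))\leq 1/m$ for every $m$, so the nonnegative gap converges to $0$. The hard part is the order of limits in the approximation term: $k_{0}$ must be chosen first (through $S_{k}^{*}\to S^{*}$, depending on $\varepsilon$) and held fixed while $n\to\infty$, so that both the uniform deviation over $\mathcal{F}_{k_{0}}$ and the penalty $C^{\text{VC}}_{n}(k_{0};\alpha_{0})$ decay. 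The key simplification is that the data-driven selected index $\hat{k}_{n}$ need not be controlled at all, since the optimality defining $\tilde{f}_{n}$ already supplies the required lower bound through the fixed comparison class $\mathcal{F}_{k_{0}}$.
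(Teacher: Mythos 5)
Your proposal is correct and follows essentially the same route as the paper's proof: the same decomposition of $S^{*}-S(\tilde{f}_{n})$ into an approximation part (handled via a fixed comparison index $k_{0}$, the optimality of the penalized criterion, the uniform deviation bound of Corollary~\ref{Corollary1}, and the vanishing deterministic penalty $C^{\text{VC}}_{n}(k_{0};\alpha_{0})$) and a generalization part (handled by the tail bound of Theorem~\ref{MainThm}), with the Borel--Cantelli lemma converting summable exponential tails into almost sure convergence. The only differences are cosmetic: you split $\varepsilon$ as $\varepsilon/2+\varepsilon/2$ and invoke Borel--Cantelli for each term separately while making the countable intersection over $\varepsilon=1/m$ explicit, whereas the paper uses $\varepsilon/3$ splits, sums the combined probability bound once, and leaves that final step implicit.
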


Corollary~\ref{Corollary2} shows that the UMPR $\tilde{f}_{n}$ with the VC penalty is universally utility consistent because the almost sure convergence holds for every distribution of $(Y,X)$ satisfying $\lim_{k\to \infty}S_{k}^{*}=S^{*}$.
To check this convergence of approximation error for some function classes $\{\mathcal{F}_{k}\}_{k=1}^{\infty}$, we prove the following proposition.

\begin{Pro}\label{PropX1}
Suppose that Assumptions~\ref{A1} and \ref{A2} hold.
For any (measurable) deterministic function $f:\mathcal{X}\mapsto\mathbb{R}$, we have
\begin{align*}
S^{*}-S(f)=4\Exp\left[b(X)[p^{*}(X)-c(X)](\Ind{[p^{*}(X)\geq c(X)]}-\Ind{[f(X)\geq c(X)]})\right]\geq 0,
\end{align*}
and the maximal expected utility $S^{*}$ satisfies
\begin{align*}
S^{*}=S(p^{*})=2\Exp\left[b(X)|p^{*}(X)-c(X)|\right].
\end{align*}

If, in addition, Assumption~\ref{A3} holds, then
\begin{align*}
S^{*}-S(f)\leq4\Exp\left[b(X)|p^{*}(X)-f(X)|\right]\leq 16M\sup_{x\in\mathcal{X}}|p^{*}(x)-f(x)|
\end{align*}
for any deterministic function $f$.
\end{Pro}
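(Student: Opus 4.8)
The plan is to reduce everything to a pointwise (conditional on $X=x$) analysis, exploiting that $b(x)>0$ under Assumption~\ref{A2}. First I would take the conditional expectation given $X$ inside $S(f)$. Since $\Exp[Y\mid X=x]=2p^{*}(x)-1$, Assumption~\ref{A1} gives $\Exp[Y+1-2c(X)\mid X=x]=2(p^{*}(x)-c(x))$, so that
\begin{align*}
S(f)=2\Exp\left[b(X)(p^{*}(X)-c(X))\text{sign}(f(X)-c(X))\right].
\end{align*}
Because $\text{sign}(z)\in\{-1,1\}$ and $b>0$, the integrand is bounded pointwise by $b(x)\Abs{p^{*}(x)-c(x)}$, with equality when $\text{sign}(f(x)-c(x))=\text{sign}(p^{*}(x)-c(x))$; taking $f=p^{*}$ attains this bound everywhere. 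This simultaneously yields $S^{*}=S(p^{*})=2\Exp[b(X)\Abs{p^{*}(X)-c(X)}]$ and shows the supremum is attained at $p^{*}$.

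Next I would derive the regret formula. Writing $\Abs{z}=z\,\text{sign}(z)$ and $\text{sign}(z)=2\Ind{[z\geq 0]}-1$, subtracting the two expressions gives
\begin{align*}
S^{*}-S(f)=4\Exp\left[b(X)(p^{*}(X)-c(X))(\Ind{[p^{*}(X)\geq c(X)]}-\Ind{[f(X)\geq c(X)]})\right].
\end{align*}
Nonnegativity then follows from a two-case argument on the sign of $p^{*}(x)-c(x)$: if $p^{*}(x)\geq c(x)$ the indicator difference equals $1-\Ind{[f(x)\geq c(x)]}\geq 0$ while $p^{*}(x)-c(x)\geq 0$; if $p^{*}(x)<c(x)$ both the indicator difference and $p^{*}(x)-c(x)$ are nonpositive, so the product is again nonnegative. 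Since $b(x)>0$, each integrand is nonnegative and hence so is the expectation.

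For the upper bound under Assumption~\ref{A3}, the crux is the pointwise inequality
\begin{align*}
b(x)(p^{*}(x)-c(x))(\Ind{[p^{*}(x)\geq c(x)]}-\Ind{[f(x)\geq c(x)]})\leq b(x)\Abs{p^{*}(x)-f(x)}.
\end{align*}
This is immediate when the two indicators agree, since the left side then vanishes while the right side is nonnegative. When they disagree, the cutoff $c(x)$ separates $p^{*}(x)$ from $f(x)$: for instance if $p^{*}(x)\geq c(x)>f(x)$, the left side is $b(x)(p^{*}(x)-c(x))\leq b(x)(p^{*}(x)-f(x))=b(x)\Abs{p^{*}(x)-f(x)}$, and the reverse disagreement is symmetric. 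Taking expectations gives $S^{*}-S(f)\leq 4\Exp[b(X)\Abs{p^{*}(X)-f(X)}]$. Finally, Assumption~\ref{A3} bounds each $u_{a,y}$ by $M$, so $0<b(x)\leq 4M$ (using also positivity from Assumption~\ref{A2}), and bounding the expectation by the supremum norm yields $4\Exp[b(X)\Abs{p^{*}(X)-f(X)}]\leq 16M\sup_{x\in\mathcal{X}}\Abs{p^{*}(x)-f(x)}$.

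The only genuinely delicate step is the pointwise inequality in the last display: one must observe that a disagreement between $\text{sign}(p^{*}(x)-c(x))$ and $\text{sign}(f(x)-c(x))$ forces $c(x)$ to lie between $p^{*}(x)$ and $f(x)$, so the regret $\Abs{p^{*}(x)-c(x)}$ is dominated by $\Abs{p^{*}(x)-f(x)}$. Everything else is routine bookkeeping with the identities $\Abs{z}=z\,\text{sign}(z)$ and $\text{sign}(z)=2\Ind{[z\geq 0]}-1$ together with the boundedness of $b$.
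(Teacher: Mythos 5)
Your proof is correct, and it reaches the result by a more streamlined route than the paper. The paper first rewrites $S(f)$ through the misclassification-cost representation $S(f)=\Exp\left[b(X)[Y(1-2c(X))+1]\right]-2\Exp\left[b(X)[Y(1-2c(X))+1]\Ind{[Y\neq \text{sign}(f(X)-c(X))]}\right]$, splits the cost via $b_{1}(x)=u_{1,1}(x)-u_{-1,1}(x)$ and $b_{-1}(x)=u_{-1,-1}(x)-u_{1,-1}(x)$, and then performs several conditional-expectation computations (including a separate rearrangement just to evaluate $S^{*}=2\Exp[b(X)|p^{*}(X)-c(X)|]$) before recombining via $p^{*}b_{1}-(1-p^{*})b_{-1}=b(p^{*}-c)$. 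You instead condition on $X$ directly in the sign representation, using $\Exp[Y\mid X]=2p^{*}(X)-1$ to obtain the compact identity $S(f)=2\Exp\left[b(X)(p^{*}(X)-c(X))\,\text{sign}(f(X)-c(X))\right]$, from which the pointwise optimality of $p^{*}$, the value of $S^{*}$, and the regret formula all follow by the elementary identities $z\,\text{sign}(z)=|z|$ and $\text{sign}(z)=2\Ind{[z\geq 0]}-1$ (note your use of $\Ind{[z\geq 0]}$ is consistent with the paper's convention $\text{sign}(0)=1$, so no boundary issue arises at $p^{*}(x)=c(x)$). The two proofs coincide exactly at the genuinely delicate step, which you correctly isolate: when the indicators disagree, $c(x)$ separates $p^{*}(x)$ from $f(x)$, so $|p^{*}(x)-c(x)|\leq|p^{*}(x)-f(x)|$; and the final constant comes from $0<b(x)\leq 4M$ under Assumptions~\ref{A2} and \ref{A3}, as in the paper. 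What your approach buys is brevity and transparency about why $p^{*}$ is pointwise optimal; what the paper's longer route buys is an explicit display of the cost-sensitive classification structure (the weighted misclassification loss with weights $b_{1},b_{-1}$), which the paper exploits conceptually elsewhere when linking maximum utility estimation to perceptron learning.
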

\vspace{0.2cm}
\noindent
This proposition implies that for each $k\in\mathbb{N}$,
\begin{align*}
0\leq S^{*}-S_{k}^{*}\leq 16M\inf_{f\in\mathcal{F}_{k}}\sup_{x\in\mathcal{X}}|f(x)-p^{*}(x)|.
\end{align*}
If we specify $\mathcal{F}_{k}$ as the class of polynomial transformations on $\mathcal{X}$ of order at most $k$,
then the Stone-Weierstrass approximation theorem ensures that $\inf_{f\in\mathcal{F}_{k}}\sup_{x\in\mathcal{X}}|f(x)-p^{*}(x)|$ converges to zero as $k$ tends to infinity whenever $p^{*}$ is continuous on the support $\mathcal{X}$ that is a compact subset of $\mathbb{R}^{d}$.
Moreover, if each $r$-th order partial derivative of $f:\mathcal{X}\to\mathbb{R}$ exists and is continuous on $\mathcal{X}$ for all $r\leq s\in\mathbb{N}$, and $\mathcal{X}$ is compact,
then the multivariate Jackson theorem of \citet{BagbyBosEtAl2002} implies that
$\inf_{f\in\mathcal{F}_{k}}\sup_{x\in\mathcal{X}}|f(x)-p^{*}(x)|=\Bigo{k^{-s}}$.
Rather than evaluating the global approximation to $p^{*}$, \citet{ElliottLieli2013} illustrate some preferences and data generating processes of $(Y,X)$ in which finite order polynomial functions in $X$ can completely replicate the crossing points between $p^{*}(x)$ and $c(x)$; more precisely, there is some polynomial function $f_{0}$ with sufficient order such that $\text{sign}(f_{0}(x)-c(x))=\text{sign}(p^{*}(x)-c(x))$ and thus $S^{*}_{k}\equiv\sup_{f\in\mathcal{P}_{k}}S(f)=S^{*}$ for some $k\in\mathbb{N}$ by Proposition~\ref{PropX1}.

By explicitly expressing $S^{*}-S(f)$ for any nonrandom function $f$, Proposition~\ref{PropX1} also confirms \citeauthor{ElliottLieli2013}'s insight that the correct specification of sign$(p^{*}(x)-c(x))$ is sufficient to achieve the maximal expected utility.
Furthermore, Proposition~\ref{PropX1} extends the properties of the Bayes decision rule to the cost-sensitive case.
In this case, the maximal expected utility $S^{*}$ depends on not only the distribution of $(Y,X)$ via the conditional probability $p^{*}$ but also the decision maker's preference via the weight function $b$ and cutoff function $c$.
Corresponding results in the traditional binary classification can be found in Sections 2.4 and 2.5 of \citet{DevroyeGyoerfiEtAl1996}.

\begin{Rmk}
The second part in the complexity penalty $C^{\text{VC}}_{n}(k;\alpha_{0})$ involves a technical term $((1+\alpha_{0})\log{\{V_{k,c}\}}/(2n))^{1/2}$.
Instead of using $(\log{\{k\}}/n)^{1/2}$ as in \citet{BartlettBoucheronEtAl2002},
we replace $k$ with the VC dimension $V_{k,c}$ of $\mathcal{F}_{k,c}$.
For example, when $\mathcal{F}_{k}$ is a class of univariate polynomial functions of order at most $k$,
then $V_{k,c}=k+1$.
We also replace the constant $1$ with $(1+\alpha_{0})/2$ such that $\zeta(\alpha_{0})= \sum_{k=1}^{\infty}V_{k,c}^{-(1+\alpha_{0})}$ is summable.
This condition may hold for different values of $\alpha$.
The selection of $\alpha_{0}$ by the cross-validation is discussed in Appendix~\ref{SelectAlpha}.

In practice, researchers may expect that only certain classes of functions are worth consideration.
For example, domain knowledge could suggest that higher-order interactions should be of limited importance, as argued in \citet{AtheyImbens2019}.
In this case, the UMPR is selected from a few classes of functions, and selection of $\alpha_{0}$ is not an issue because the technical term $8M\chi_{n}(k;\alpha)$ can be removed from the complexity penalty.
\end{Rmk}

\begin{Rmk}
The proposed complexity-regularized approach can also be applied to variable selection problems, which recently have attracted much attention in the econometrics literature.
To see this application, let $X=(X_{1},\dots,X_{d})^{\top}$ be a $d$-dimensional vector of covariates and $\mathcal{F}_{\mathscr{V}}$ be the class of linear functions of covariates in a nonempty set $\mathscr{V}\subseteq\{X_{1},\dots,X_{d}\}$ with cardinality $|\mathscr{V}|$.
Instead of specifying nested models, we consider nonnested models in variable selection problems.
For example, $\mathcal{F}_{\{X_{2}\}}=\{X_{2}\mapsto \beta_{0}+\beta_{1}X_{2}: (\beta_{0},\beta_{1})^{\top}\in\mathbb{R}^{2}\}$, $\mathcal{F}_{\{X_{1},X_{3}\}}=\{(X_{1},X_{3})\mapsto \beta_{0}+\beta_{1}X_{1}+\beta_{2}X_{3}: (\beta_{0},\beta_{1},\beta_{2})^{\top}\in\mathbb{R}^{3}\}$, and $\mathcal{F}_{\{X_{2}\}}$ is neither a subset nor a superset of $\mathcal{F}_{\{X_{1},X_{3}\}}$.
The VC complexity penalty of $\mathcal{F}_{\mathscr{V}}$ is
\begin{align*}
C^{\text{VC}}_{n}(\mathscr{V})=8M\sqrt{\frac{2\log\{\psi_{\mathscr{V}}(n)\}}{n}},\;\;\text{where}\;\;
\psi_{\mathscr{V}}(n)=
\begin{cases}
2^{n}, & \text{if $n\leq |\mathscr{V}|+1$,} \\
\left(\frac{en}{|\mathscr{V}|+1}\right)^{(|\mathscr{V}|+1)}, & \text{if $n> |\mathscr{V}|+1$}.
\end{cases}
\end{align*}
The nonempty subset $\hat{\mathscr{V}}_{n}$ of $\{X_{1},\dots,X_{d}\}$ is selected
if it has the largest associated complexity-penalized empirical utility among all classes in
\begin{align*}
\mathcal{F}=\cup\left\{\mathcal{F}_{\mathscr{V}}:\mathscr{V}\;\text{is a nonempty subset of}\; \{X_{1},\dots,X_{d}\}\right\}.
\end{align*}
Specifically,
\begin{align*}
\hat{\mathscr{V}}_{n}=\arg\max \left\{S_{n}(\hat{f}_{\mathscr{V}})-C^{\text{VC}}_{n}(\mathscr{V}):
\mathscr{V}\;\text{is a nonempty subset of}\; \{X_{1},\dots,X_{d}\}\right\},
\end{align*}
where $\hat{f}_{\mathscr{V}}\in\arg\max_{f\in\mathcal{F}_{\mathscr{V}}}S_{n}(f)$ is a maximum utility estimator.
The UMPR is defined as $\tilde{f}_{n}\equiv \hat{f}_{\hat{\mathscr{V}}_{n}}$.

A non-asymptotic upper bound on the difference between the maximal expected utility $S^{*}$ and the generalized expected utility $\Exp[S(\tilde{f}_{n})]$ can still be established.
Suppose that Assumptions~\ref{A1}-\ref{A4} hold.
It can be shown that for any $n\in\mathbb{N}$,
\begin{align*}
S^{*}-\Exp[S(\tilde{f}_{n})]
\leq \min_{\mathscr{V}}\left\{C^{\text{VC}}_{n}(\mathscr{V})
+\left(S^{*}-S_{\mathscr{V}}^{*}\right)\right\}
+8M\sqrt{\frac{1+d\log\{2\}}{2n}},
\end{align*}
where $S_{\mathscr{V}}^{*}\equiv\sup_{f\in\mathcal{F}_{\mathscr{V}}}S(f)$ for every $\mathscr{V}$.
The derivation details are omitted, as they are similar to the arguments in the proof of Theorem~\ref{MainThm}.
Thus, if the approximation error $S^{*}-S_{\mathscr{V}}^{*}$ is equal to zero for some $\mathscr{V}$ and $d=\Smallo{n/\log\{n\}}$,
then $S(\tilde{f}_{n})$ converges in mean and in probability to $S^{*}$.
\end{Rmk}

The upper bound in Proposition~\ref{Prop2} is distribution-free in the sense that it is valid for any distribution of $(Y,X)$.
Since the distributional properties are ignored, this VC-type upper bound is generally loose.
The looseness is even exacerbated by the replacement of the growth function with an upper bound via Sauer's lemma.
Although the distribution of $(Y,X)$ is unknown, its distributional information could be inferred from the sample.
As shown in \citeauthor{Lozano2000}'s (\citeyear{Lozano2000}) simulation results for the interval model selection problem, the data driven penalization can track the magnitude of overfitting better than the VC-type penalization.
Thus, we expect that using data-dependent complexity penalties, instead of the distribution-free complexity penalty $C^{\text{VC}}_{n}(k;\alpha_{0})$, will improve the predictive performance of the UMPR.

\subsection{UMPR with a Data-Dependent Penalty}\label{datadependent}
Heuristically, the magnitude of overfitting is bounded by $\max_{f\in\mathcal{F}_{k}}\left(S_{n}(f)-S'_{n}(f)\right)$,\footnote{
It can be shown that $\Exp[\sup_{f\in\mathcal{F}_{k}}\left(S_{n}(f)-S(f)\right)]\leq \Exp[\max_{f\in\mathcal{F}_{k}}\left(S_{n}(f)-S'_{n}(f)\right)]$ by the common symmetrization argument.
By \citeauthor{McDiarmid1989}'s (\citeyear{McDiarmid1989}) inequality, there is a constant $c_{0}>0$ such that for any $\delta\in(0,1)$,
$\sup_{f\in\mathcal{F}_{k}}\left(S_{n}(f)-S(f)\right)
-\max_{f\in\mathcal{F}_{k}}\left(S_{n}(f)-S'_{n}(f)\right)\leq\sqrt{\ln\{1/\delta\}/c_{0}n}$
with probability at least $1-\delta$.
}
where $S'_{n}(f)$ is the empirical utility of $f$ based on the ghost sample $\mathscr{D}'_{n}$,
in which the observations $(Y'_{1},X'_{1}),\ldots,(Y'_{n},X'_{n})$ are distributed as $(Y_{1},X_{1}),\ldots,(Y_{n},X_{n})$ and independent of them.
Although the lack of the ghost sample $\mathscr{D}'_{n}$ invalidates the direct estimation of $\max_{f\in\mathcal{F}_{k}}\left(S_{n}(f)-S'_{n}(f)\right)$, this idea allows us to develop data-dependent complexity penalties.
Each of them, similar to the VC counterpart, is the sum of a technical term with $\chi_{n}(k;\alpha)$ and an estimate of $\max_{f\in\mathcal{F}_{k}}\left(S_{n}(f)-S'_{n}(f)\right)$.
Different estimates generate different complexity penalties as follows.

\begin{enumerate}[label=\arabic*.]
\item \emph{Maximal Discrepancy }(MD)\\
We partition the sample into two nonoverlapping and roughly equal-sized subsamples.
For notational simplicity, suppose the sample $\mathscr{D}_{n}$ is partitioned into two subsamples $\mathscr{D}^{(1)}_{n/2}=\{(Y_{i},X_{i})\}_{i=1}^{n/2}$ and $\mathscr{D}^{(2)}_{n/2}=\{(Y_{i},X_{i})\}_{i=n/2+1}^{n}$, where the sample size $n$ is even.
We define the maximal discrepancy complexity penalty to be
\begin{align*}
C^{\text{MD}}_{n}(k;\alpha)
\equiv \max_{f\in\mathcal{F}_{k}}\left(\frac{2}{n}\sum_{i=1}^{n/2}s(Y_{i},X_{i},f)
-\frac{2}{n}\sum_{i=n/2+1}^{n}s(Y_{i},X_{i},f)\right)+24M\chi_{n}(k;\alpha)
\end{align*}
as if $\mathscr{D}^{(1)}_{n/2}$ and $\mathscr{D}^{(2)}_{n/2}$ were the sample and the ghost sample, respectively.
The penalization by maximal discrepancy is proposed by \citet{BartlettBoucheronEtAl2002} in the traditional binary classification.
We expect the maximal discrepancy complexity penalty is an appropriate estimate of $\max_{f\in\mathcal{F}_{k}}\left(S_{n}(f)-S'_{n}(f)\right)$ if the sample size is large.
\item \emph{Simulated Maximal Discrepancy }(SMD)\\
We can pair up observations between two tentatively pre-specified subsamples, randomly exchange the subsample labels of paired observations, and calculate more maximal discrepancy complexity penalty terms.
Repeating the random exchange mechanism $m$ times for the pre-specified subsamples $\{(Y_{2i-1},X_{2i-1})\}_{i=1}^{n/2}$ and $\{(Y_{2i},X_{2i})\}_{i=1}^{n/2}$ yields the simulated maximal discrepancy complexity penalty
\begin{align*}
C^{\text{SMD}}_{n}(k;\alpha,m)
\equiv&\frac{1}{m}\sum_{j=1}^{m}
\left(\max_{f\in\mathcal{F}_{k}}\frac{2}{n}\sum_{i=1}^{n/2}\sigma^{(j)}_{i}
\Big(s(Y_{2i-1},X_{2i-1},f)-s(Y_{2i},X_{2i},f)\Big)\right)\\
&+\gamma_{m,n}(M)\chi_{n}(k;\alpha),
\end{align*}
where $\{\sigma^{(j)}\}_{j=1}^{m}
=\{(\sigma^{(j)}_{1},\sigma^{(j)}_{2},\ldots,\sigma^{(j)}_{n/2})\}_{j=1}^{m}$
is the collection of i.i.d.\ Rademacher random vectors (i.e., $\mathbb{P}(\sigma^{(j)}_{i}=1)=\mathbb{P}(\sigma^{(j)}_{i}=-1)=1/2$) that are independent of $\mathscr{D}_{n}$,
and $\gamma_{m,n}$ is a deterministic function that satisfies
\begin{align}\label{gamma}
\gamma_{m,n}(M)=
\begin{cases}
40M, & \text{if}\;\; n \leq m<\infty, \\
(16\ell+40)M, & \text{if}\;\; n/(\ell+1)^{2} \leq m < n/\ell^{2}\;\;
\text{and}\;\; \ell\in\mathbb{N}.
\end{cases}
\end{align}
We need $\gamma_{m,n}$ to control the extra randomness introduced by the simulated random vectors.
Conceptually, we could set $\gamma_{m,n}(M)=24M$ as in the MD penalty if $m=\infty$, the case in which the extra randomness is eliminated.

\item \emph{Rademacher Complexity }(RC)\\
If the ghost sample $\mathscr{D}'_{n}$ were at hand, the pairing and exchange mechanism could be applied to $\mathscr{D}_{n}$ and $\mathscr{D}'_{n}$.
Suppose we draw a sequence $(\sigma_{1},\sigma_{2},\ldots,\sigma_{n})$ of i.i.d.\ Rademacher random variables that are independent of $\mathscr{D}_{n}$ and $\mathscr{D}'_{n}$.
Since observations are i.i.d., $\max_{f\in\mathcal{F}_{k}}\left(S_{n}(f)-S'_{n}(f)\right)$ is identically distributed as
\begin{align*}
\max_{f\in\mathcal{F}_{k}}\frac{1}{n}\sum_{i=1}^{n}
\sigma_{i}\left(s(Y_{i},X_{i},f)-s(Y'_{i},X'_{i},f)\right),
\end{align*}
which has expectation bounded above by the Rademacher complexity
\begin{align*}
\Exp\left[\max_{f\in\mathcal{F}_{k}}\frac{2}{n}\sum_{i=1}^{n}\sigma_{i}s(Y_{i},X_{i},f)\right].
\end{align*}

We can consider the simulated Rademacher complexity penalty
\begin{align*}
C^{\text{RC}}_{n}(k;\alpha,m)
\equiv
\frac{1}{m}\sum_{j=1}^{m}\left(\max_{f\in\mathcal{F}_{k}}
\frac{2}{n}\sum_{i=1}^{n}\sigma^{(j)}_{i}s(Y_{i},X_{i},f)\right)
+\gamma_{m,n}(M)\chi_{n}(k;\alpha),
\end{align*}
where $\{\sigma^{(j)}\}_{j=1}^{m}= \{(\sigma^{(j)}_{1},\sigma^{(j)}_{2},\ldots,\sigma^{(j)}_{n})\}_{j=1}^{m}$ is the collection of i.i.d.\ Rademacher random vectors that are independent of $\mathscr{D}_{n}$, and $\gamma_{m,n}$ is given in (\ref{gamma}).
Proposed by \citet{Koltchinskii2001} and \citet{BartlettBoucheronEtAl2002} in the traditional binary classification, the Rademacher complexity and its variants are usually applied to complexity regularization; see for example \citet{Koltchinskii2011}.

\item \emph{Bootstrap Complexity }(BC)\\
Following \citeauthor{Fromont2007}'s (\citeyear{Fromont2007}) idea, we can apply \citeauthor{Efron1979}'s (\citeyear{Efron1979}) bootstrap to the construction of complexity penalty by replacing the Rademacher random variables with the multinomial random weights minus one.
Specifically, we treat the bootstrap complexity penalty as
\begin{align*}
C_{n}^{\text{BC}}(k;\alpha,m)
\equiv&\left(\frac{n}{n-1}\right)^{n}\frac{1}{m}\sum_{j=1}^{m}
\left(\max_{f\in\mathcal{F}_{k}}\frac{1}{n}
\sum_{i=1}^{n}\left(W^{(j)}_{n,i}-1\right)s(Y_{i},X_{i},f)\right)\\
&+\gamma'_{m,n}(M)\chi_{n}(k;\alpha),
\end{align*}
where $\{W^{(j)}_{n}\}_{j=1}^{m}=\{(W^{(j)}_{n,1},W^{(j)}_{n,2},\ldots,W^{(j)}_{n,n})\}_{j=1}^{m}$ is the collection of i.i.d. multinomial vectors with parameters $n$ and $(1/n,1/n,\ldots,1/n)$ such that $\{W^{(j)}_{n}\}_{j=1}^{m}$ is independent of $\mathscr{D}_{n}$,
and
\begin{align}\label{gammaprime}
\gamma'_{m,n}(M)=
\begin{cases}
56M, & \text{if}\;\; n \leq m<\infty, \\
(32\ell+56)M, & \text{if}\;\; n/(\ell+1)^{2} \leq m < n/\ell^{2}\;\;
\text{and}\;\; \ell\in\mathbb{N}.
\end{cases}
\end{align}

\end{enumerate}

To study the performance of the UMPR $\tilde{f}_{n}$ with each of these data-dependent complexity penalties,
we evaluate the difference between the generalized expected utility $\Exp[S(\tilde{f}_{n})]$ and the maximal expected utility $S^{*}$.
The upper bounds on $S^{*}-\Exp[S(\tilde{f}_{n})]$ in Theorem~\ref{ThmMD}, resembling the VC counterpart in Theorem~\ref{MainThm}, have a similar trade-off between the associated expected complexity penalty and the approximation error.
Note that the data-dependent complexity penalties are all random, whereas the VC complexity penalty is deterministic.

\begin{Thm}\label{ThmMD}
Let $\gamma_{m,n}$ and $\gamma'_{m,n}$ be the functions given in (\ref{gamma}) and (\ref{gammaprime}), respectively, where $m$ is the number of simulation replications for SMD, RC, and BC penalties.
Suppose that (i) the data $\mathscr{D}_{n}=\{(Y_{i},X_{i})\}_{i=1}^{n}$ are i.i.d., (ii) Assumptions~\ref{A1}-\ref{A4} hold, (iii) $\mathcal{F}_{k}$ is a VC-subgraph class for each $k$, and (iv) $\zeta(\alpha_{0})<\infty$ for some $\alpha_{0}$.
\begin{enumerate}[label=\arabic*.]
\item If the UMPR $\tilde{f}_{n}$ is constructed based on the penalty $C^{\text{MD}}_{n}$ with tuning parameter $\alpha_{0}$,
then we have for any $n\in\mathbb{N}$ and $\varepsilon>0$,
\begin{align*}
\mathbb{P}\left(\tilde{S}_{n}(\tilde{f}_{n})-S(\tilde{f}_{n})>\varepsilon\right)
\leq \zeta(\alpha_{0})\exp{\left\{-\frac{n\varepsilon^{2}}{288M^{2}}\right\}},
\end{align*}
and
\begin{align*}
&S^{*}-\Exp[S(\tilde{f}_{n})]\\
\leq &\min_{k}\left\{\Exp\left[C^{\text{MD}}_{n}(k;\alpha_{0})\right]+\left(S^{*}-S_{k}^{*}\right)\right\}
+24M\sqrt{\frac{1+\log\{\zeta(\alpha_{0})\}}{2n}}.
\end{align*}
\item If the UMPR $\tilde{f}_{n}$ is constructed based on the penalty $C^{\text{SMD}}_{n}$ with tuning parameter $\alpha_{0}$,
then we have for any $n\in\mathbb{N}$ and $\varepsilon>0$,
\begin{align*}
\mathbb{P}\left(\tilde{S}_{n}(\tilde{f}_{n})-S(\tilde{f}_{n})>\varepsilon\right)
\leq 2\zeta(\alpha_{0})\exp{\left\{-\frac{2n\varepsilon^{2}}{(\gamma_{m,n}(M))^{2}}\right\}},
\end{align*}
and
\begin{align*}
&S^{*}-\Exp[S(\tilde{f}_{n})]\\
\leq &\min_{k}\left\{\Exp\left[C^{\text{SMD}}_{n}(k;\alpha_{0},m)\right]
+\left(S^{*}-S_{k}^{*}\right)\right\}
+\gamma_{m,n}(M)\sqrt{\frac{1+\log\{2\zeta(\alpha_{0})\}}{2n}}.
\end{align*}
\item If the UMPR $\tilde{f}_{n}$ is constructed based on the penalty $C_{n}^{\text{RC}}$ with tuning parameter $\alpha_{0}$,
then we have for any $n\in\mathbb{N}$ and $\varepsilon>0$,
\begin{align*}
\mathbb{P}\left(\tilde{S}_{n}(\tilde{f}_{n})-S(\tilde{f}_{n})>\varepsilon\right)
\leq 2\zeta(\alpha_{0})\exp{\left\{-\frac{2n\varepsilon^{2}}{(\gamma_{m,n}(M))^{2}}\right\}},
\end{align*}
and
\begin{align*}
&S^{*}-\Exp[S(\tilde{f}_{n})]\\
\leq &\min_{k}\left\{\Exp\left[C^{\text{RC}}_{n}(k;\alpha_{0},m)\right]
+\left(S^{*}-S_{k}^{*}\right)\right\}
+\gamma_{m,n}(M)\sqrt{\frac{1+\log\{2\zeta(\alpha_{0})\}}{2n}}.
\end{align*}
\item If the UMPR $\tilde{f}_{n}$ is constructed based on the penalty $C_{n}^{\text{BC}}$ with tuning parameter $\alpha_{0}$,
then we have for any integer $n\geq 2$ and $\varepsilon>0$,
\begin{align*}
\mathbb{P}\left(\tilde{S}_{n}(\tilde{f}_{n})-S(\tilde{f}_{n})>\varepsilon\right)
\leq 2\zeta(\alpha_{0})\exp{\left\{-\frac{2n\varepsilon^{2}}{(\gamma'_{m,n}(M))^{2}}\right\}},
\end{align*}
and
\begin{align*}
&S^{*}-\Exp[S(\tilde{f}_{n})]\\
\leq &\min_{k}\left\{\Exp\left[C^{\text{BC}}_{n}(k;\alpha_{0},m)\right]
+\left(S^{*}-S_{k}^{*}\right)\right\}
+\gamma'_{m,n}(M)\sqrt{\frac{1+\log\{2\zeta(\alpha_{0})\}}{2n}}.
\end{align*}
\end{enumerate}
\end{Thm}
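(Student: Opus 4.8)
The plan is to mirror the argument behind Theorem~\ref{MainThm}, isolating what is common to all four penalties from what is penalty-specific. Each part of the statement demands two deliverables: an exponential tail bound on $\tilde{S}_n(\tilde{f}_n)-S(\tilde{f}_n)$, and the oracle-type inequality for $S^*-\mathbb{E}[S(\tilde{f}_n)]$. The oracle inequality will follow from the tail bound by a deterministic comparison step that is \emph{identical} for MD, SMD, RC, and BC, so the genuine work lies in establishing the tail bound for each penalty. In every case I would first record that $s(Y_i,X_i,f)\in[-4M,4M]$ (Assumption~\ref{A3} together with the footnote expression for $s$), which supplies the bounded-difference constants used throughout.

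For the tail bound, fix $k$ and abbreviate $G_k\equiv\sup_{f\in\mathcal{F}_k}(S_n(f)-S(f))$. Since $\hat{f}_k$ maximizes $S_n$ over $\mathcal{F}_k$, one has $\tilde{S}_n(\hat{f}_k;k)-S(\hat{f}_k)\le G_k-D_k-\tau_k$, where $D_k$ is the penalty's ``discrepancy estimate'' (the maximal-discrepancy term for MD, the simulated average for SMD and RC, the bootstrap average times $(n/(n-1))^n$ for BC) and $\tau_k$ is the technical term $24M\chi_n(k;\alpha_0)$, $\gamma_{m,n}(M)\chi_n(k;\alpha_0)$, or $\gamma'_{m,n}(M)\chi_n(k;\alpha_0)$. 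The key structural fact is $\mathbb{E}[G_k-D_k]\le 0$: for MD this is the two-halves symmetrization (move the expectation over one half inside the supremum by Jensen, then compare the half-sample deviation to the full-sample one); for SMD and RC it is the Rademacher symmetrization recorded in the footnote, with the factor two absorbed into the $2/n$ scaling; for BC it is Fromont's bootstrap symmetrization, for which the factor $(n/(n-1))^n$ is exactly the bias correction that makes the inequality hold. Then I would apply McDiarmid's inequality to the centered quantity $G_k-D_k$, tracking the per-coordinate perturbation separately over the data (each $\le 24M/n$) and, for the simulated penalties, over the Rademacher or multinomial draws. Choosing $\tau_k$ proportional to $\chi_n(k;\alpha_0)$ and using $\bigl(\text{const}\cdot M\chi_n(k;\alpha_0)\bigr)^2$ to cancel the exponent against $(1+\alpha_0)\log\{V_{k,c}\}$ converts each per-class bound into $V_{k,c}^{-(1+\alpha_0)}$ times the $\varepsilon$-exponential; a union bound over $k$ (which dominates the random selector $\hat{k}_n$) then produces the prefactor $\zeta(\alpha_0)$ for MD and $2\zeta(\alpha_0)$ for the simulated penalties.

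For the oracle inequality I would fix any $k$ and use optimality of $\hat{k}_n$ to get $\tilde{S}_n(\tilde{f}_n)\ge S_n(\hat{f}_k)-C_n(k;\alpha_0)$; taking expectations, using that $\hat{f}_k$ maximizes $S_n$ over $\mathcal{F}_k$ and that $\mathbb{E}[S_n(f)]=S(f)$ for every \emph{fixed} $f$, gives $\mathbb{E}[\tilde{S}_n(\tilde{f}_n)]\ge S_k^*-\mathbb{E}[C_n(k;\alpha_0)]$. Writing $S^*-\mathbb{E}[S(\tilde{f}_n)]=\bigl(S^*-\mathbb{E}[\tilde{S}_n(\tilde{f}_n)]\bigr)+\mathbb{E}[\tilde{S}_n(\tilde{f}_n)-S(\tilde{f}_n)]$, the first summand is at most $(S^*-S_k^*)+\mathbb{E}[C_n(k;\alpha_0)]$, and the second is controlled by integrating the tail bound: if $\mathbb{P}(Z>\varepsilon)\le A\exp\{-\varepsilon^2/(2\sigma^2)\}$ then $\mathbb{E}[Z_+]\le\sqrt{\mathbb{E}[Z_+^2]}\le\sigma\sqrt{2(1+\log A)}$, with $A\in\{\zeta(\alpha_0),2\zeta(\alpha_0)\}$ and $\sigma$ read off the tail (for example $\sigma^2=144M^2/n$ for MD). Minimizing over $k$ yields the stated bounds with the matching constants $24M$, $\gamma_{m,n}(M)$, and $\gamma'_{m,n}(M)$.

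The main obstacle is the extra randomness carried by the finite number $m$ of simulation replications in SMD, RC, and BC, which is exactly why these three differ from MD. I would split the deviation into a data part and a Monte-Carlo part and control them by separate applications of McDiarmid, the union over the two producing the factor $2$ in the prefactor; since the Monte-Carlo error is of order $m^{-1/2}$, to absorb it into the $n^{-1/2}$ budget when $n/(\ell+1)^2\le m<n/\ell^2$ one needs a factor growing linearly in $\ell\sim\sqrt{n/m}$, which is precisely the form of $\gamma_{m,n}$ and $\gamma'_{m,n}$. The bootstrap penalty is the most delicate: the multinomial weights are dependent (they sum to $n$), so both the bounded-difference step and the symmetrization inequality require the Efron-bootstrap analogues together with the correction $(n/(n-1))^n$, which is the source of the larger constants in $\gamma'_{m,n}$ and of the restriction $n\ge 2$.
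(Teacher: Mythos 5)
Your proposal is correct and takes essentially the same route as the paper's own proof: the same symmetrization step establishing $\Exp\left[\sup_{f\in\mathcal{F}_{k}}\left(S_{n}(f)-S(f)\right)\right]\leq\Exp[D_{k}]$ for each discrepancy estimate (two-halves comparison for MD, Rademacher exchange for SMD/RC, and Fromont's bootstrap comparison with the $(n/(n-1))^{n}$ correction and Lemma~\ref{Multinomial} for BC), McDiarmid's inequality with the data-coordinate constants $24M/n$, a separate treatment of the simulation randomness via an $\eta$-split of the deviation that produces the prefactor $2$ and the constants $\gamma_{m,n}(M)$, $\gamma'_{m,n}(M)$, and a union bound over $k$ in which the technical term $\chi_{n}(k;\alpha_{0})$ is absorbed to give the $\zeta(\alpha_{0})$ factor. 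Your derivation of the oracle inequality — the deterministic comparison $\tilde{S}_{n}(\tilde{f}_{n})\geq S_{n}(\hat{f}_{k})-C_{n}(k;\alpha_{0})$ combined with integrating the tail bound as in Lemma~\ref{expectationbound} — is exactly the paper's argument as well, so there are no gaps to report.
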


We can show that if the ratio $m/n$ is bounded away from zero, then the expected value of each data-dependent complexity penalty in this section shrinks to zero at the rate $\Bigo{n^{-1/2}}$, which is slightly faster than the convergence rate of the VC complexity penalty.
Hence, if the approximation error $S^{*}-S^{*}_{k}$ is equal to zero for some $k$, then $S^{*}-\Exp[S(\tilde{f}_{n})]=\Bigo{n^{-1/2}}$.
Under the same assumptions, we can also demonstrate the universal utility consistency of the UMPR $\tilde{f}_{n}$ with any data-dependent complexity penalty above.
These results are summarized in Corollary~\ref{Corrollary3}.

\begin{Cor}\label{Corrollary3}
Suppose that the assumptions of Theorem~\ref{ThmMD} hold.
If, in addition, $m/n\geq 1/\bar{\ell}^{2}$ for some positive integer $\bar{\ell}$,
then there are positive constants $\kappa_{1}$ and $\kappa_{2}$ only depending on $M$, and $\kappa_{3}$ depending on $(M,\bar{\ell})$ such that for each $k\in\mathbb{N}$ and $n\geq 8$,
\begin{align*}
&\max\big\{\Exp\left[C^{\text{MD}}_{n}(k;\alpha_{0})\right], \Exp\left[C^{\text{SMD}}_{n}(k;\alpha_{0},m)\right],
\Exp\left[C^{\text{RC}}_{n}(k;\alpha_{0},m)\right], \Exp\left[C_{n}^{\text{BC}}(k;\alpha_{0},m)\right]\big\}\\
\leq & \kappa_{1}\sqrt{\frac{V_{k,c}}{n}}+\kappa_{2}V_{k,c}\frac{(\log\{n\})^{2}}{n}
+\kappa_{3}\sqrt{1+\alpha_{0}}\sqrt{\frac{\log{\{V_{k,c}\}}}{n}}.
\end{align*}
Moreover, the UMPR $\tilde{f}_{n}$ constructed based on any aforementioned data-dependent complexity penalty with tuning parameter $\alpha_{0}$ satisfies
\begin{align*}
\lim_{n\to \infty}S(\tilde{f}_{n})=S^{*}\;\;\text{with probability one}
\end{align*}
for any distribution of $(Y,X)$ such that $\lim_{k\to \infty}S_{k}^{*}=S^{*}$.
\end{Cor}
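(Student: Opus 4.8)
The plan is to establish the two assertions in turn: the uniform $\Bigo{n^{-1/2}}$ control of the expected penalties, and then, feeding this into Theorem~\ref{ThmMD}, the almost-sure convergence $S(\tilde f_{n})\to S^{*}$. For the first assertion I would treat each penalty as the sum of a leading ``discrepancy'' piece and a technical term. Under the hypothesis $m/n\ge 1/\bar\ell^{2}$, the defining band $n/(\ell+1)^{2}\le m<n/\ell^{2}$ forces $\ell\le\bar\ell$, whence $\gamma_{m,n}(M)\le(16\bar\ell+40)M$ and $\gamma'_{m,n}(M)\le(32\bar\ell+56)M$; since $\chi_{n}(k;\alpha_{0})=\sqrt{(1+\alpha_{0})\log\{V_{k,c}\}/(2n)}$, the technical term in every penalty is at most $\kappa_{3}\sqrt{1+\alpha_{0}}\sqrt{\log\{V_{k,c}\}/n}$ with $\kappa_{3}$ depending only on $(M,\bar\ell)$, matching the third summand. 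It then remains to bound the expected discrepancy piece by $\kappa_{1}\sqrt{V_{k,c}/n}+\kappa_{2}V_{k,c}(\log\{n\})^{2}/n$.

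For the maximal-discrepancy, simulated-maximal-discrepancy, and Rademacher penalties I would first note that averaging over the $m$ i.i.d.\ Rademacher vectors leaves the mean unchanged, so the expected discrepancy piece equals a constant multiple of the Rademacher complexity $\Exp[\sup_{f\in\mathcal{F}_{k}}\frac1n\sum_{i}\sigma_{i}s(Y_{i},X_{i},f)]$, the same quantity that the symmetrization argument of Section~\ref{datadependent} already attaches to the MD penalty. Conditioning on $\mathscr{D}_{n}$, each loss is $s(\cdot,\cdot,f)=w\,\text{sign}(f(\cdot)-c(\cdot))$ with $\Abs{w}\le 4M$ fixed, so on the sample the loss class takes at most $\Pi_{k,c}(n)$ distinct values, each a vector of Euclidean norm at most $4M\sqrt n$. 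Massart's lemma with Sauer's bound $\Pi_{k,c}(n)\le(en/V_{k,c})^{V_{k,c}}$ already gives an $\Bigo{\sqrt{V_{k,c}\log\{n\}/n}}$ estimate; to shave the stray $\sqrt{\log\{n\}}$ down to the advertised $\sqrt{V_{k,c}/n}$ I would instead run Dudley's entropy integral against Haussler's covering bound $N(\varepsilon,\mathcal{F}_{k,c},L_{2}(P_{n}))\lesssim V_{k,c}(C/\varepsilon)^{2V_{k,c}}$, whose integral $\int_{0}^{4M}\sqrt{\log N(\varepsilon)}\,d\varepsilon$ is $\Bigo{M\sqrt{V_{k,c}}}$. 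This delivers the $\kappa_{1}\sqrt{V_{k,c}/n}$ term, and the finite-sample discretization remainder is absorbed into $\kappa_{2}V_{k,c}(\log\{n\})^{2}/n$.

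The bootstrap penalty is the main obstacle, because the weights $W_{n,i}-1$ are unbounded and the purely sub-Gaussian chaining above no longer applies. The prefactor $(n/(n-1))^{n}\le 4$ is harmless, but the multinomial-weighted supremum must be handled by a mixed Bernstein-type bound: the variance proxy $\frac1{n^{2}}\sum_{i}w_{i}^{2}\le 16M^{2}/n$ drives the sub-Gaussian part and again reproduces $\sqrt{V_{k,c}/n}$ through chaining, while the sub-exponential part, whose scale is governed by the maximal weight $\max_{i}\Abs{W_{n,i}-1}$, of order $\log\{n\}$, contributes a term of order $(M\log\{n\}/n)\cdot V_{k,c}\log\{n\}$ after a union bound over the at most $\Pi_{k,c}(n)$ sign patterns. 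This is exactly the $\kappa_{2}V_{k,c}(\log\{n\})^{2}/n$ correction, and the delicate truncation of the multinomial weights needed to make it rigorous is the step I expect to be hardest. Taking the maximum of the four bounds, and noting the second-order term for the bounded-weight penalties is the smaller $\Bigo{V_{k,c}\log\{n\}/n}$, yields the first display.

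For the consistency statement I would combine the first display with Theorem~\ref{ThmMD} through a Borel--Cantelli argument. Because $m/n\ge 1/\bar\ell^{2}$ keeps $\gamma_{m,n}(M)$ and $\gamma'_{m,n}(M)$ bounded, the tail bound in Theorem~\ref{ThmMD} is summable in $n$, so almost surely $\limsup_{n}(\tilde S_{n}(\tilde f_{n})-S(\tilde f_{n}))\le 0$. For any fixed $k$, optimality of $\hat k_{n}$ gives $\tilde S_{n}(\tilde f_{n})\ge S_{n}(\hat f_{k})-C_{n}(k;\alpha_{0})$, where $S_{n}(\hat f_{k})\ge S_{k}^{*}-\sup_{f\in\mathcal{F}_{k}}\Abs{S_{n}(f)-S(f)}\to S_{k}^{*}$ almost surely by Corollary~\ref{Corollary1}, and $C_{n}(k;\alpha_{0})\to 0$ almost surely, since the first display forces $\Exp[C_{n}(k;\alpha_{0})]\to 0$ for fixed $k$ while McDiarmid's inequality (each penalty being a bounded-difference function of the data and simulation draws) plus Borel--Cantelli give concentration about its mean. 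Hence almost surely $\limsup_{n}(S^{*}-S(\tilde f_{n}))\le S^{*}-S_{k}^{*}$ for every $k$; intersecting these countably many full-probability events and letting $k\to\infty$ with $S_{k}^{*}\uparrow S^{*}$, together with $S^{*}-S(\tilde f_{n})\ge 0$, yields $S(\tilde f_{n})\to S^{*}$ almost surely.
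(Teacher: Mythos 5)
Your proposal is correct in its architecture and reaches the stated bounds, but it takes a genuinely different route on the key technical lemma. The paper funnels all four penalties through a single device: it reduces the MD/SMD/RC discrepancy pieces to $\frac{2}{n}\Exp\left[\max_{a\in\mathbb{A}_{k,c}(\mathscr{D}_{n})}\sum_{i}a_{i}Z_{i}\mid\mathscr{D}_{n}\right]$ with $Z_{i}=\sigma_{i}b(X_{i})[Y_{i}+1-2c(X_{i})]$ satisfying the Bernstein moment conditions $\Exp[|Z_{1}|^{\ell}\mid\mathscr{D}_{n}]\leq(\ell!/2)(4M)^{\ell}$, handles the bootstrap by \emph{Poissonization} --- replacing the dependent multinomial weights $W_{n,i}$ by i.i.d.\ Poisson$(1)$ variables $N_{i}$ at an explicit cost of $4M/\sqrt{n}$, after which $\tilde{Z}_{i}=(N_{i}-1)b(X_{i})[Y_{i}+1-2c(X_{i})]$ again satisfies Bernstein conditions --- and then invokes the chaining technique of Lemma 3 and Theorem 4 of \citet{Fromont2007} once, which uniformly delivers $\bar{\kappa}_{1}\sqrt{V_{k,c}}\sqrt{n}+\bar{\kappa}_{2}V_{k,c}(\log\{n\})^{2}$; this is why the $V_{k,c}(\log\{n\})^{2}/n$ term appears in the paper even for the bounded-weight penalties. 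You instead run Dudley's entropy integral against Haussler's covering bound for MD/SMD/RC, which is valid and in fact sharper --- it yields $\Bigo{\sqrt{V_{k,c}/n}}$ outright, so your ``discretization remainder'' is not actually needed there --- and for BC you hand-roll a truncated sub-Gaussian/sub-exponential chaining against the raw multinomial weights. That last step is the one soft spot, as you yourself flag: the truncation at $\max_{i}|W_{n,i}-1|=\Bigo{\log\{n\}}$ is delicate, and you should note that dependence across cells is not the real obstacle, since $\frac{1}{n}\sum_{i}a_{i}(W_{n,i}-1)w_{i}$ rewrites as a centered i.i.d.\ sum over the $n$ multinomial draws; the paper's Poissonization is precisely the clean trick that turns the problem into one with independent weights obeying Bernstein conditions, so that an off-the-shelf chaining lemma applies instead of having to be reproved. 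Your consistency argument --- summability of the tail bounds of Theorem~\ref{ThmMD} because $m/n\geq1/\bar{\ell}^{2}$ keeps $\gamma_{m,n}(M)$ and $\gamma'_{m,n}(M)$ bounded, almost-sure convergence $S_{n}(\hat{f}_{k})\to S_{k}^{*}$ via Corollary~\ref{Corollary1}, McDiarmid-plus-Borel--Cantelli concentration of the random penalties about their vanishing means, and intersection of countably many full-measure events before letting $k\to\infty$ --- is a correct and useful fleshing-out of what the paper compresses into ``a simple modification of the proof of Corollary~\ref{Corollary2},'' and it matches the paper's intent, which relies on the conditional expectations $\Exp[C_{n}(k;\alpha_{0},m)\mid\mathscr{D}_{n}]=\Bigo{n^{-1/2}}$ holding with probability one.
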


\section{Simulation}\label{simulation}
To study the finite-sample performance of the UMPR with any complexity penalty in the previous section, we carried out Monte Carlo experiments.
The simulation designs are those in \citet{ElliottLieli2013}. Specifically, we consider two data generating processes:
\begin{enumerate}[label=DGP \arabic*,leftmargin=*]
 \item \label{DGP1} The covariate $X$ follows the distribution $5\cdot \text{beta}(1,1.3)-2.5$ and $p^{*}(X)=\Lambda(-0.5X+0.2X^{3})$;
 \item \label{DGP2} Both covariates $X_{1}$ and $X_{2}$ are independent and uniformly distributed on $[-3.5,3.5]$ and $p^{*}(X_{1},X_{2})=\Lambda(Q(1.5X_{1}+1.5X_{2}))$, where $Q(v)=(1.5-0.1v)\exp\{-(0.25v+0.1v^{2}-0.04v^{3})\}$.
\end{enumerate}
In addition, we consider four preferences:
\begin{enumerate}[label=Preference \arabic*,leftmargin=*]
 \item \label{Pref1} \hspace{0.1cm} $b(X)=20$ and $c(X)=0.5$;
 \item \label{Pref2} \hspace{0.1cm} $b(X)=20$ and $c(X)=0.5+0.025X$;
 \item \label{Pref3} \hspace{0.1cm} $b(X_{1},X_{2})=20$ and $c(X_{1},X_{2})=0.75$;
 \item \label{Pref4} \hspace{0.1cm} $b(X_{1},X_{2})=20+40\cdot\Ind{[|X_{1}+X_{2}|<1.5]}$ and $c(X_{1},X_{2})=0.75$.
\end{enumerate}
The first two preferences are associated with \ref{DGP1}, whereas the last two preferences are associated with \ref{DGP2}.
For \ref{DGP1} together with either preference 1 or 2, not only the cubic ML but also the cubic MU are correctly specified because there are three crossing points between the conditional probability $p^{*}$ and the cutoff function $c$ in the support of $X$.\footnote{
By the cubic MU, we mean that the MU optimization is taken over the class $\mathcal{P}_{3}$ of polynomial transformations of order at most 3.
Similarly, we refer to the cubic ML as the maximum likelihood estimation with optimization taken over the class $\Lambda(\mathcal{P}_{3})$.
}
Although any logit model is misspecified for \ref{DGP2}, \citet{ElliottLieli2013} demonstrate that the cubic MU is correctly specified in the sense that for all $x\in\mathcal{X}$, $\text{sign}(f(x))=\text{sign}(p^{*}(x)-c(x))$ for some $f\in\mathcal{P}_{3}$.

We evaluate different selection methods for the cost-sensitive binary classification.
In addition to the proposed UMPR with different complexity penalties, we study the pretest estimator adopted by \citet{ElliottLieli2013} and a tenfold cross-validatory estimator in the maximum utility estimation.
For the MU, UMPR, pretest and cross-validatory estimators, we specify the hierarchy $\{\mathcal{F}_{k}\}_{k=1}^{\infty}$ of classes as $\mathcal{F}_{k}=\mathcal{P}_{k}$ for $k\in\{1,2\}$ and $\mathcal{F}_{k}=\mathcal{P}_{3}$ for all $k\geq 3$.
Moreover, we compare prediction rules based on penalized empirical utility criteria with those based on penalized likelihood criteria.
We consider the UMPR for the former criteria, but the AIC and BIC for the latter criteria.
For the ML, AIC and BIC, we specify the hierarchy $\{\mathcal{F}_{k}\}_{k=1}^{\infty}$ of classes as $\mathcal{F}_{k}=\Lambda(\mathcal{P}_{k})$ for $k\in\{1,2\}$ and $\mathcal{F}_{k}=\Lambda(\mathcal{P}_{3})$ for all $k\geq 3$.
We also compute the tenfold cross-validatory LASSO (i.e., logistic loss with an $\ell_{1}$ penalty) with optimization taken over the class $\Lambda(\mathcal{P}_{3})$ and $\ell_{1}$-norm SVM (i.e., hinge loss with an $\ell_{1}$ penalty) with optimization taken over the class $\mathcal{P}_{3}$.
The steps of pretesting and cross-validation in maximum utility estimation are described in Appendix~\ref{CrossValidation}, whereas the implementation of LASSO and $\ell_{1}$-norm SVM can be found in \citet{EfronHastie2016} and \citet{FungMangasarian2004}, respectively.\footnote{
As suggested for the $\ell_{2}$-norm SVM in \citet{Shawe-TaylorCristianini2004}, we construct the $\ell_{1}$-norm SVM estimator $\hat{f}_{\text{SVM}}$ based on standardized covariates in the following Monte Carlo experiments.
Additionally, we use the logistic transformation $\tilde{f}_{\text{SVM}}\equiv \Lambda(\hat{f}_{\text{SVM}})$ to evaluate its predictive performance.
This transformation not only makes $\tilde{f}_{\text{SVM}}$ comparable with other competing estimators in binary classification but also maintains the classification rule of $\hat{f}_{\text{SVM}}$ because $\text{sign}(\hat{f}_{\text{SVM}}(x))=\text{sign}(\tilde{f}_{\text{SVM}}(x)-c(x))$ provided $c(x)=1/2$ for all $x\in\mathcal{X}$.}
To evaluate the performance of a prediction rule $f^{\dag}_{n}$, we compute its relative generalized expected utility
\begin{align*}
\text{RGEU}(f^{\dag}_{n})\equiv \frac{\Exp[S(f^{\dag}_{n})]}{S^{*}},
\end{align*}
which can be approximated via simulation because
\begin{align*}
\text{RGEU}(f^{\dag}_{n})
=\Exp\left[\frac{S(f^{\dag}_{n})}{S(p^{*})}\right]
\simeq \frac{1}{\mathcal{S}}\sum_{j=1}^{\mathcal{S}}
\frac{S_{\ell,j}(f^{\dag}_{n}|\mathscr{D}_{n,j})}{S_{\ell,j}(p^{*})},
\end{align*}
where $S_{\ell,j}(f^{\dag}_{n}|\mathscr{D}_{n,j})$ is the $j$-th out-of-sample empirical utility with size $\ell$ of $f^{\dag}_{n}$, constructed by the $j$-th training data $\mathscr{D}_{n,j}$ with size $n$, $S_{\ell,j}(p^{*})$ is the $j$-th out-of-sample empirical utility with size $\ell$ of $p^{*}$, and $\mathcal{S}$ is the number of simulation replications.
In the following experiments, we set $n\in\{500,1000\}$, $\ell=5000$, and $\mathcal{S}=500$; additionally, we take $m=10$ for the SMD, RC, and BC penalties.

Table~\ref{Table1} presents the relative generalized expected utility of ML, MU, and UMPR with VC and MD complexity penalties under different designs when $n=500$ and $n=1000$.
As expected, a correctly specified ML achieves the largest relative generalized expected utility among these estimators for \ref{DGP1}.
However, a misspecified ML, compared with MU and UMPR, usually has the worst performance.
In addition, for each tuning parameter $\alpha\in\{1,0.5,0.1,0.05\}$, the UMPR with MD penalty outperforms its VC counterpart for~\ref{DGP1}, but the dominance is not clear for~\ref{DGP2}.
As $\alpha$ decreases (i.e., the technical term is smaller), the UMPR with MD penalty might have slightly larger relative generalized expected utility.
However, the UMPR with VC penalty has the same relative generalized expected utility as the linear MU for \ref{DGP2}.
This is a caveat that the correctly specified cubic MU is never selected out of 500 simulation replications by the UMPR with VC penalty.
This phenomenon arises probably because the distribution-free complexity penalty used to construct the UMPR is too conservative.
Using  tenfold cross-validated $\hat{\alpha}$ selected from $\{1,0.5,0.1,0.05\}$ does not improve the performance of the UMPR with VC and MD penalties;
however, excluding the technical term ($8M\chi_{n}(k;\alpha)$ and $24M\chi_{n}(k;\alpha)$ for VC and MD penalty, respectively) would yield an increase in relative generalized expected utility.
These results imply that the UMPR with MD penalty is more adept at selecting the MU estimator with the largest utility than its VC counterpart.
Thus, we focus on the UMPR with data-dependent complexity penalties exclusive of the technical term hereafter.

In addition to the comparison between MU and ML, we compare prediction rules based on penalized empirical utility criteria with those based on penalized likelihood criteria.
Table~\ref{Table2} reports the relative generalized expected utility of UMPR, AIC, BIC, LASSO, and $\ell_{1}$-norm SVM under different designs when $n=500$ and $n=1000$.
We see that the performance of the UMPR relies on the choice of data-dependent complexity penalties.
Among these penalties, SMD, RC, and BC might be better than MD in terms of relative generalized expected utility.
We also see that the AIC and BIC outweigh the UMPR for~\ref{DGP1}, and this reflects the consistent selection of the cubic ML by the AIC and BIC, a property shown in \citet{SinWhite1996}.
However, the AIC and BIC, selecting a mispecified logit model by the penalized likelihood, are dominated by the UMPR for~\ref{DGP2}.
Furthermore, compared with the AIC and BIC, the LASSO has poorer performance for~\ref{DGP1}, but almost the same performance for~\ref{DGP2}.
Outweighing the LASSO for~\ref{DGP1}, the SVM has the worst performance for~\ref{DGP2}.
Such bad performance could be attributed to the fact that the SVM aims to recover $\text{sign}(p^{*}(x)-1/2)$ rather than $\text{sign}(p^{*}(x)-c(x))$, but the cutoff function $c(x)$ is markedly different from $1/2$ for~\ref{DGP2}.
More importantly, when the number of in-sample observations increases from $n=500$ to $n=1000$, the relative generalized expected utility of UMPR increases for all designs.
This phenomenon is guaranteed by Theorem~\ref{ThmMD} and Corollary~\ref{Corrollary3} because the approximation error $S^{*}-S^{*}_{3}$ is equal to zero.
In contrast, for~\ref{DGP2} in which any ML is misspecified, a larger sample size does not improve the relative generalized expected utility of AIC, BIC, LASSO, and SVM.
These results demonstrate that the UMPR inherits the robustness of MU estimation, a feature that selection methods based on penalized likelihood criteria do not possess in general.

Finally, we compare the proposed UMPR with two pretest estimators, including a specific-to-general approach and a general-to-specific approach, and the tenfold cross-validatory estimator in maximum utility estimation.
Table~\ref{Table3} provides the relative generalized expected utility and the percentage of models selected out of 500 simulation replications when $n=500$ and $n=1000$.
Differences in the selection frequencies for the pretesting are shown across the left and right panels because \citeauthor{ElliottLieli2013}'s (\citeyear{ElliottLieli2013}) test statistic depends on the preference specifications.
In terms of selecting the correctly specified cubic MU, the UMPR with VC penalty, either inclusive or exclusive of a technical term, performs worst under all designs.
As can be seen, the cross-validatory estimator, in comparison with the UMPR, attains higher percentages of selecting the cubic MU under all designs.
Thus, the cross validation might be preferable if we attempt to select the model correctly specified in the maximum utility estimation.
However, if the goal is to capture the largest expected utility, we prefer the UMPR with data-dependent penalties to the pretest and cross-validatory estimators because the proposed penalty-based prediction rules perform better than the other estimators in terms of the relative generalized expected utility, as suggested by the experimental evidence.

\section{Conclusion}\label{conclusion}
The maximum utility estimation can be viewed as the binary classification with a decision-based utility function.
Despite its possible improvement in utility over traditional maximum likelihood methods, the maximum utility estimation has inherited the in-sample overfitting from the perceptron learning.

To alleviate the in-sample overfitting, we adopt the structural risk minimization approach to construct a utility-maximizing prediction rule.
For complexity penalization, we consider the distribution-free VC penalty and four data-dependent penalties (MD, SMD, RC, and BC).
For each penalty, we show that the difference between the maximal expected utility and the generalized expected utility of the utility-maximizing prediction rule is bounded.
The upper bounds are close to zero for a large sample if the approximation error is equal to zero for some pre-specified classes of functions.
In general, we prefer the simulated complexity penalties in terms of predictive performance, as suggested by the simulation results.
These simulation results also show that the utility-maximizing prediction rule with an appropriate data-dependent complexity penalty has better predictive performance than the pretest and cross-validatory estimators;
more importantly, it outperforms the AIC, BIC, and LASSO if the conditional probability of the binary outcome is misspecified, and the $\ell_{1}$-norm SVM if the cutoff function considerably deviates from $1/2$.
The utility-maximizing prediction rule is thus important for the decision-making based on binary prediction.

\begin{appendices}
\numberwithin{equation}{section}

\section{Selection of $\alpha_{0}$}\label{SelectAlpha}
Given a pre-specified finite set $\mathscr{A}$ in which every element $\alpha$ satisfies $\zeta(\alpha)<\infty$, we can select $\alpha_{0}=\hat{\alpha}\in\mathscr{A}$ by the $T$-fold cross-validation method.\footnote{
Although the optimal choice of $T$ is an open theoretical question, common choices of $T$ are 5 or 10, as suggested by \citet{HastieTibshiraniEtAl2009}.}
We randomly partition the data $\mathscr{D}_{n}$ into $T$ roughly equal-sized sets.
Let $\tau:\{1,2,\ldots,n\}\to \{1,2,\ldots,T\}$ be the indexing function such that the observation $(Y_{i},X_{i})$ is in the validation set $\tau(i)$.
We write $\mathscr{D}^{(-t)}_{n}$ for the data $\mathscr{D}_{n}$ from which the validation set $t$ is removed,
and $n_{t}$ for the sample size of $\mathscr{D}^{(-t)}_{n}$.
Let $S^{(-t)}_{n}(f)$ be the empirical utility of $f$ calculated based on $\mathscr{D}^{(-t)}_{n}$.
The $T$-fold cross-validation method in our framework is implemented as follows.
\begin{enumerate}[label=(\arabic*)]
\item For each $t\in\{1,2,\ldots,T\}$ and $\alpha\in\mathscr{A}$, we calculate the UMPR with tuning parameter $\alpha$ based on $\mathscr{D}^{(-t)}_{n}$ by $\tilde{f}^{(-t)}_{n}(\alpha)=\hat{f}^{(-t)}_{\hat{k}^{(-t)}_{n}(\alpha)}$,
where
\begin{align*}
\hat{k}^{(-t)}_{n}(\alpha)\in\arg\max_{k\in\mathbb{N}}
\left(S^{(-t)}_{n}(f)-C_{n_{t}}^{\text{VC}}(k;\alpha)\right)
\;\;\text{and}\;\;\hat{f}^{(-t)}_{k}\in\arg\max_{f\in\mathcal{F}_{k}}S^{(-t)}_{n}(f).
\end{align*}
\item
The cross-validated tuning parameter is defined as
\begin{align*}
\hat{\alpha}=\arg\max_{\alpha\in\mathscr{A}}CV(\alpha),
\end{align*}
where
\begin{align*}
CV(\alpha)=\frac{1}{T}\sum_{t=1}^{T}
\frac{\sum_{i:\tau(i)=t}s(Y_{i},X_{i},\tilde{f}^{(-t)}_{n}(\alpha))}
{\sum_{i=1}^{n}\Ind{[\tau(i)=t]}}.
\end{align*}
\item We calculate the UMPR $\tilde{f}_{n}(\hat{\alpha})$ with cross-validated $\hat{\alpha}$ based on the whole data $\mathscr{D}_{n}$.
\end{enumerate}
Similarly, we can show that for any $n\in\mathbb{N}$ and $\varepsilon>0$,
\begin{align*}
\mathbb{P}\left(\tilde{S}_{n}(\tilde{f}_{n}(\hat{\alpha}))
-S(\tilde{f}_{n}(\hat{\alpha}))>\varepsilon\right)
\leq\sum_{\alpha\in\mathscr{A}}\zeta(\alpha)\exp{\left\{-\frac{n\varepsilon^{2}}{32M^{2}}\right\}},
\end{align*}
and
\begin{align*}
S^{*}-\Exp[S(\tilde{f}_{n}(\hat{\alpha}))]
\leq \min_{k}\left\{\Exp[C^{\text{VC}}_{n}(k;\hat{\alpha})]+\left(S^{*}-S_{k}^{*}\right)\right\}
+8M\sqrt{\frac{1+\log\{\sum_{\alpha\in\mathscr{A}}\zeta(\alpha)\}}{2n}}.
\end{align*}

\section{Pretesting and Cross-Validation}\label{CrossValidation}
\subsection{Pretesting}
We consider the null hypothesis $H^{(k)}_{0}: S^{*}_{(k-1)}=S^{*}_{k}$ against the alternative hypothesis $H^{(k)}_{1}: S^{*}_{(k-1)}<S^{*}_{k}$ for $k\in\{2,3\}$.
\citet{ElliottLieli2013} propose a general-to-specific pretest estimator based on the test statistic developed in their Proposition 4.
They suggest selecting the model $\mathcal{F}_{\hat{k}(\text{G}\to\text{S})}$, where
\begin{align*}
\hat{k}(\text{G}\to\text{S})=
\begin{cases}
1, & \hspace{-4.45cm}\text{if}\;\; \text{neither}\;H^{(3)}_{0}\; \text{nor}\; H^{(2)}_{0}\; \text{is rejected}, \\
\max\Big\{k\in\{2,3\}:H^{(k)}_{0}\;\text{is rejected against}\; H^{(k)}_{1}\Big\}, & \text{otherwise}.
\end{cases}
\end{align*}
Similarly, we can apply a specific-to-general approach to selecting the model $\mathcal{F}_{\hat{k}(\text{S}\to\text{G})}$, where
\begin{align*}
\hat{k}(\text{S}\to\text{G})=
\begin{cases}
3, & \hspace{-4.43cm}\text{if both}\;\; H^{(3)}_{0}\; \text{and}\; H^{(2)}_{0}\; \text{are rejected}, \\
\min\Big\{k\in\{2,3\}:H^{(k)}_{0}\;\text{is not rejected against}\; H^{(k)}_{1}\Big\}-1, & \text{otherwise}.
\end{cases}
\end{align*}
For these two approaches, we conduct \citeauthor{ElliottLieli2013}'s test statistic with the size equal to 5\% and auxiliary i.i.d.\ random variables that follow a Bernoulli(0.75) distribution.

\subsection{Cross-Validation}
We randomly partition the data $\mathscr{D}_{n}$ into $T$ roughly equal-sized sets.
Let $\tau:\{1,2,\ldots,n\}\to \{1,2,\ldots,T\}$ be the indexing function such that the observation $(Y_{i},X_{i})$ is in the validation set $\tau(i)$.
We write $\mathscr{D}^{(-t)}_{n}$ for the data $\mathscr{D}_{n}$ from which the validation set $t$ is removed.
The $T$-fold cross-validation method in the maximum utility framework can be implemented as follows.
\begin{enumerate}[label=(\arabic*)]
\item We consider an integer $K$.
For each $k\in\{1,2,\ldots,K\}$ and $t\in\{1,2,\ldots,T\}$, we calculate the MU estimator based on $\mathscr{D}^{(-t)}_{n}$ by
\begin{align*}
\hat{f}^{(-t)}_{k}\in\arg\max_{f\in\mathcal{F}_{k}}S^{(-t)}_{n}(f),
\end{align*}
where $S^{(-t)}_{n}(f)$ is the empirical utility calculated by $f$ and $\mathscr{D}^{(-t)}_{n}$; that is,
\begin{align*}
S^{(-t)}_{n}(f)=\frac{\sum_{i:\tau(i)\neq t}s(Y_{i},X_{i},f)}{\sum_{i=1}^{n}\Ind{[\tau(i)\neq t]}}.
\end{align*}
\item
The cross-validated value of $k$ is defined as
\begin{align*}
\hat{k}_{n}=\arg\max_{k\in\{1,2,\ldots,K\}}CV(k),
\end{align*}
where
\begin{align*}
CV(k)=\frac{1}{T}\sum_{t=1}^{T}
\frac{\sum_{i:\tau(i)=t}s(Y_{i},X_{i},\hat{f}^{(-t)}_{k})}{\sum_{i=1}^{n}\Ind{[\tau(i)=t]}}.
\end{align*}
\item The cross-validated MU estimator is the MU estimator selected from $\mathcal{F}_{\hat{k}_{n}}$ based on $\mathscr{D}_{n}$; specifically,
\begin{align*}
\hat{f}^{\text{CV}}_{\hat{k}_{n}}\in\arg\max_{f\in\mathcal{F}_{\hat{k}_{n}}}S_{n}(f).
\end{align*}
\end{enumerate}

\section{Technical Proofs}\label{Appendix}
\subsection{Proof of Proposition~\ref{Prop2}}
\begin{proof}
Since the mapping
\begin{align*}
\mathscr{D}_{n}\mapsto \sup_{f\in\mathcal{F}_{k}}\left(S_{n}(f)-S(f)\right)
\end{align*}
satisfies the bounded differences property in Section 6.1 of \citet{BoucheronLugosiEtAl2013} with their notation $c_{i}=8M/n$ for each $i\in\{1,\ldots,n\}$,
we have
\begin{align*}
\mathbb{P}\left(\sup_{f\in\mathcal{F}_{k}}(S_{n}(f)-S(f))
-\Exp\left[\sup_{f\in\mathcal{F}_{k}}(S_{n}(f)-S(f))\right]>\varepsilon\right)
\leq \exp{\left\{-\frac{n\varepsilon^{2}}{32M^{2}}\right\}}
\end{align*}
by \citeauthor{McDiarmid1989}'s (\citeyear{McDiarmid1989}) inequality.

It is now sufficient to show that
\begin{align*}
\Exp\left[\sup_{f\in\mathcal{F}_{k}}(S_{n}(f)-S(f))\right]\leq 8M\sqrt{\frac{2\log\{\Pi_{k,c}(n)\}}{n}}.
\end{align*}
Let $\mathscr{D}'_{n}$ be the ghost sample, in which the observations $(Y'_{1},X'_{1}),\ldots,(Y'_{n},X'_{n})$ are distributed as $(Y_{1},X_{1}),\ldots,(Y_{n},X_{n})$ and independent of them.
We write $S'_{n}(f)$ for the empirical utility of the prediction rule $f$ based on the ghost sample $\mathscr{D}'_{n}$.
Let $\{\sigma_{i}\}_{i=1}^{n}$ be a sequence of i.i.d.\ Rademacher random variables that are independent of $\mathscr{D}_{n}$; that is, $\mathbb{P}(\sigma_{i}=1)=\mathbb{P}(\sigma_{i}=-1)=1/2$.
The common symmetrization argument shows that
\begin{align}\label{C1}
\Exp\left[\sup_{f\in\mathcal{F}_{k}}\left(S_{n}(f)-S(f)\right)\right]
\leq& \Exp\left[\max_{f\in\mathcal{F}_{k}}\left(S_{n}(f)-S'_{n}(f)\right)\right]\notag\\
=& \Exp\left[\max_{f\in\mathcal{F}_{k}}
\frac{1}{n}\sum_{i=1}^{n}\sigma_{i}\Big(s(Y_{i},X_{i},f)-s(Y'_{i},X'_{i},f)\Big)\right]\notag\\
\leq& 2\Exp\left[\max_{f\in\mathcal{F}_{k}} \frac{1}{n}\sum_{i=1}^{n}\sigma_{i}s(Y_{i},X_{i},f)\right].
\end{align}
Let $\psi_{i}\equiv b(X_{i})[Y_{i}+1-2c(X_{i})]$ for each $i\in\{1,\ldots,n\}$.
Lemma 26.9 of \citet{Shalev-ShwartzBen-David2014} implies that
\begin{align*}
\Exp\left[\max_{f\in\mathcal{F}_{k}} \frac{1}{n}\sum_{i=1}^{n}\sigma_{i}s(Y_{i},X_{i},f)\Big|\mathscr{D}_{n}\right]
=&\Exp\left[\max_{f\in\mathcal{F}_{k}} \frac{1}{n}\sum_{i=1}^{n}\sigma_{i}\psi_{i}\text{sign}(f(X_{i})-c(X_{i}))\Big|\mathscr{D}_{n}\right]\\
\leq& 4M\Exp\left[\max_{f\in\mathcal{F}_{k}} \frac{1}{n}\sum_{i=1}^{n}\sigma_{i}\text{sign}(f(X_{i})-c(X_{i}))\Big|\mathscr{D}_{n}\right].
\end{align*}
Applying Lemma 5.2 of \citet{Massart2000} yields
\begin{align*}
\Exp\left[\max_{f\in\mathcal{F}_{k}} \frac{1}{n}\sum_{i=1}^{n}\sigma_{i}\text{sign}(f(X_{i})-c(X_{i}))\Big|\mathscr{D}_{n}\right]
\leq \sqrt{\frac{2\log\{\Pi_{k,c}(n)\}}{n}}.
\end{align*}
It follows that
\begin{align}\label{C2}
\Exp\left[\max_{f\in\mathcal{F}_{k}} \frac{1}{n}\sum_{i=1}^{n}\sigma_{i}s(Y_{i},X_{i},f)\right]\leq 4M\sqrt{\frac{2\log\{\Pi_{k,c}(n)\}}{n}}.
\end{align}
Combining Inequalities~(\ref{C1}) and~(\ref{C2}) completes the proof.
\end{proof}

\subsection{Proof of Corollary~\ref{Corollary1}}
\begin{proof}
Since the growth function $\Pi_{k,c}$ is of polynomial order for each $k\in\mathbb{N}$,
there exists an integer $n^{*}$ such that
\begin{align*}
8M\sqrt{\frac{2\log{\{\Pi_{k,c}(n)\}}}{n}}<\frac{\varepsilon}{2}.
\end{align*}
for all $n\geq n^{*}$.
Following the argument in Proposition~\ref{Prop2} \textit{mutatis mutandis}, we have
\begin{align*}
\mathbb{P}\left(\sup_{f\in\mathcal{F}_{k}}(S(f)-S_{n}(f))>8M\sqrt{\frac{2\log\{\Pi_{k,c}(n)\}}{n}}
+\frac{\varepsilon}{2}\right)
\leq \exp{\left\{-\frac{n\varepsilon^{2}}{128M^{2}}\right\}}.
\end{align*}
Hence, for all $n\geq n^{*}$,
\begin{align*}
&\mathbb{P}\left(\sup_{f\in\mathcal{F}_{k}}|S_{n}(f)-S(f)|>\varepsilon\right)\\
\leq &
\mathbb{P}\left(\sup_{f\in\mathcal{F}_{k}}(S_{n}(f)-S(f))
>8M\sqrt{\frac{2\log\{\Pi_{k,c}(n)\}}{n}}+\frac{\varepsilon}{2}\right)\\
&+\mathbb{P}\left(\sup_{f\in\mathcal{F}_{k}}(S(f)-S_{n}(f))
>8M\sqrt{\frac{2\log\{\Pi_{k,c}(n)\}}{n}}+\frac{\varepsilon}{2}\right)\\
\leq& 2\exp{\left\{-\frac{n\varepsilon^{2}}{128M^{2}}\right\}}.
\end{align*}

\end{proof}

\subsection{Proof of Theorem~\ref{MainThm}}
\begin{proof}
\emph{Part 1}\\
By construction, we have
\begin{align*}
\tilde{S}_{n}(\hat{f}_{j};j)=R^{\text{VC}}_{n,j}-8M\sqrt{\frac{(1+\alpha_{0})\log{\{V_{j,c}\}}}{2n}}
\end{align*}
for each $j\in\mathbb{N}$.
So, for any $n\in\mathbb{N}$ and $\varepsilon>0$,
\begin{align*}
&\mathbb{P}\left(\tilde{S}_{n}(\tilde{f}_{n})-S(\tilde{f}_{n})>\varepsilon\right)\\
\leq &\mathbb{P}\left(\sup_{j\in\mathbb{N}}
\left(\tilde{S}_{n}(\hat{f}_{j};j)-S(\hat{f}_{j})\right)>\varepsilon\right)\\
\leq &\sum_{j=1}^{\infty}\mathbb{P}\left(\tilde{S}_{n}(\hat{f}_{j};j)-S(\hat{f}_{j})>\varepsilon\right)\\
\leq &\sum_{j=1}^{\infty}\mathbb{P}\left(R^{\text{VC}}_{n,j}-S(\hat{f}_{j})>
\varepsilon+8M\sqrt{\frac{(1+\alpha_{0})\log{\{V_{j,c}\}}}{2n}}\right).
\end{align*}
It follows by Inequality~(\ref{boundofrisk}) that
\begin{align*}
\mathbb{P}\left(\tilde{S}_{n}(\tilde{f}_{n})-S(\tilde{f}_{n})>\varepsilon\right)
&\leq \sum_{j=1}^{\infty}\exp{\left\{-\frac{n}{32M^{2}}
\left[\varepsilon+8M\sqrt{\frac{(1+\alpha_{0})\log{\{V_{j,c}\}}}{2n}}\right]^{2}\right\}}\\
&\leq \sum_{j=1}^{\infty}\exp{\left\{-\frac{n}{32M^{2}}
\left[\varepsilon^{2}+32M^{2}\frac{(1+\alpha_{0})\log{\{V_{j,c}\}}}{n}\right]\right\}}\\
&= \zeta(\alpha_{0})\exp{\left\{-\frac{n\varepsilon^{2}}{32M^{2}}\right\}}.
\end{align*}
\emph{Part 2}\\
By \textit{Part 1} and Lemma~\ref{expectationbound}, we have
\begin{align*}
\Exp\left[\tilde{S}_{n}(\tilde{f}_{n})-S(\tilde{f}_{n})\right]
\leq 8M\sqrt{\frac{1+\log\{\zeta(\alpha_{0})\}}{2n}}.
\end{align*}
In addition, for each $k\in\mathbb{N}$,
\begin{align*}
S_{k}^{*}-\Exp[\tilde{S}_{n}(\tilde{f}_{n})]
&\leq S_{k}^{*}-\Exp[\tilde{S}_{n}(\hat{f}_{k};k)]\\
&= S_{k}^{*}-\Exp[S_{n}(\hat{f}_{k})]+C^{\text{VC}}_{n}(k;\alpha_{0})\\
&\leq S_{k}^{*}-\Exp[S_{n}(f_{k}^{*})]+C^{\text{VC}}_{n}(k;\alpha_{0})\\
&= C^{\text{VC}}_{n}(k;\alpha_{0})
\end{align*}
because $\Exp[S_{n}(f_{k}^{*})]=S_{k}^{*}$ and $f_{k}^{*}\in\arg\max_{f\in\mathcal{F}_{k}}S(f)$.
So, for each $k\in\mathbb{N}$,
\begin{align*}
S^{*}-\Exp[S(\tilde{f}_{n})]
=&S^{*}-S_{k}^{*}+
S_{k}^{*}-\Exp\left[\tilde{S}_{n}(\tilde{f}_{n})\right]+
\Exp\left[\tilde{S}_{n}(\tilde{f}_{n})\right]-\Exp\left[S(\tilde{f}_{n})\right]\\
\leq&S^{*}-S_{k}^{*}+C^{\text{VC}}_{n}(k;\alpha_{0})+8M\sqrt{\frac{1+\log\{\zeta(\alpha_{0})\}}{2n}}.
\end{align*}
It follows that
\begin{align*}
S^{*}-\Exp\left[S(\tilde{f}_{n})\right]
\leq \min_{k}\left\{C^{\text{VC}}_{n}(k;\alpha_{0})+(S^{*}-S_{k}^{*})\right\}
+8M\sqrt{\frac{1+\log\{\zeta(\alpha_{0})\}}{2n}}.
\end{align*}
\end{proof}

\subsection{Proof of Corollary~\ref{Corollary2}}
\begin{proof}
Fix an $\varepsilon>0$.
Choose an integer $k_{0}=k_{0}(\varepsilon)$ such that
$S^{*}-\sup_{f\in\mathcal{F}_{k}}S(f)<\varepsilon/3$ for all $k\geq k_{0}$.
In addition, choose an integer $n_{0}=n_{0}(\varepsilon,k_{0})$ such that $C^{\text{VC}}_{n}(k_{0};\alpha_{0})<\varepsilon/6$ for all $n\geq n_{0}$.
For each $n\in\mathbb{N}$,
\begin{align*}
&\mathbb{P}\left(\left|S^{*}-S(\tilde{f}_{n})\right|>\varepsilon\right)\\
\leq &\mathbb{P}\left(\sup_{f\in\mathcal{F}_{k_{0}}}S(f)-\tilde{S}_{n}(\tilde{f}_{n})>\varepsilon/3\right)
+\mathbb{P}\left(\tilde{S}_{n}(\tilde{f}_{n})-S(\tilde{f}_{n})>\varepsilon/3\right)\\
\leq &\mathbb{P}\left(\sup_{f\in\mathcal{F}_{k_{0}}}S(f)-\tilde{S}_{n}(\hat{f}_{k_{0}};k_{0})
>\varepsilon/3\right)
+\mathbb{P}\left(\tilde{S}_{n}(\tilde{f}_{n})-S(\tilde{f}_{n})>\varepsilon/3\right).
\end{align*}
It follows from Theorem~\ref{MainThm} that the second term in the right hand side is bounded above by $A_{1}\exp{\left\{-B_{1}n\varepsilon^{2}\right\}}$ for some positive constants $A_{1}$ and $B_{1}$.
By Corollary~\ref{Corollary1}, there exist an integer $n_{1}=n_{1}(\varepsilon,k_{0})$ and two constants $A_{2}$ and $B_{2}$ such that
\begin{align*}
\mathbb{P}\left(\sup_{f\in\mathcal{F}_{k_{0}}}|S(f)-S_{n}(f)|>\varepsilon/6\right)
\leq A_{2}\exp{\left\{-B_{2}n\varepsilon^{2}\right\}}
\end{align*}
for all $n\geq n_{1}$.
Hence, for any $n\geq n_{2}\equiv\max\{n_{0},n_{1}\}$, we have
\begin{align*}
\mathbb{P}\left(\sup_{f\in\mathcal{F}_{k_{0}}}S(f)-\tilde{S}_{n}(\hat{f}_{k_{0}};k_{0})
>\varepsilon/3\right)
\leq &\mathbb{P}\left(\sup_{f\in\mathcal{F}_{k_{0}}}S(f)-S_{n}(\hat{f}_{k_{0}})>\varepsilon/6\right)\\
\leq &\mathbb{P}\left(\sup_{f\in\mathcal{F}_{k_{0}}}|S(f)-S_{n}(f)|>\varepsilon/6\right)\\
\leq &A_{2}\exp{\left\{-B_{2}n\varepsilon^{2}\right\}}.
\end{align*}
Therefore, we have
\begin{align*}
&\sum_{n=1}^{\infty}\mathbb{P}\left(\left|S^{*}-S(\tilde{f}_{n})\right|>\varepsilon\right)\\
\leq & (n_{2}-1)+\sum_{n=n_{2}}^{\infty}\Bigg[
\mathbb{P}\left(\sup_{f\in\mathcal{F}_{k_{0}}}S(f)-\tilde{S}_{n}(\hat{f}_{k_{0}};k_{0})
>\varepsilon/3\right)
+\mathbb{P}\left(\tilde{S}_{n}(\tilde{f}_{n})-S(\tilde{f}_{n})>\varepsilon/3\right)\Bigg]\\
\leq & (n_{2}-1)+\sum_{n=n_{2}}^{\infty}\left(A_{1}\exp{\left\{-B_{1}n\varepsilon^{2}\right\}}
+A_{2}\exp{\left\{-B_{2}n\varepsilon^{2}\right\}}\right)\\
<&\infty.
\end{align*}
Applying the Borel-Cantelli lemma yields the statement.
\end{proof}

\subsection{Proof of Proposition~\ref{PropX1}}
\begin{proof}
We first study $S^{*}-S(f)$.
Let $b_{1}(x)=u_{1,1}(x)-u_{-1,1}(x)$ and $b_{-1}(x)=u_{-1,-1}(x)-u_{1,-1}(x)$ for all $x\in\mathcal{X}$.
For any $f$, we have that with probability one,
\begin{align}\label{C3}
&\Exp\left[b(X)[Y(1-2c(X))+1]\Ind{[Y\neq \text{sign}(f(X)-c(X))]}|X\right]\notag\\
=&2p^{*}(X)b_{1}(X)\Ind{[1\neq \text{sign}(f(X)-c(X))]}
+2(1-p^{*}(X))b_{-1}(X)\Ind{[-1\neq \text{sign}(f(X)-c(X))]}\notag\\
=&2p^{*}(X)b_{1}(X)\Ind{[f(X)<c(X)]}+2(1-p^{*}(X))b_{-1}(X)\Ind{[f(X)\geq c(X)]}.
\end{align}
It follows from (\ref{C3}) that with probability one,
\begin{align}\label{C4}
&\Exp\left[b(X)[Y(1-2c(X))+1](\Ind{[Y\neq \text{sign}(f(X)-c(X))]}-\Ind{[Y\neq \text{sign}(p^{*}(X)-c(X))]})|X\right]\notag\\
=&2p^{*}(X)b_{1}(X)(\Ind{[f(X)<c(X)]}-\Ind{[p^{*}(X)<c(X)]})\notag\\
&+2(1-p^{*}(X))b_{-1}(X)(\Ind{[f(X)\geq c(X)]}-\Ind{[p^{*}(X)\geq c(X)]})\notag\\
=&2[p^{*}(X)b_{1}(X)-(1-p^{*}(X))b_{-1}(X)](\Ind{[p^{*}(X)\geq c(X)]}-\Ind{[f(X)\geq c(X)]})\notag\\
=&2b(X)[p^{*}(X)-c(X)](\Ind{[p^{*}(X)\geq c(X)]}-\Ind{[f(X)\geq c(X)]}).
\end{align}
Note that for any $f$, we have
\begin{align}\label{C5}
S(f)
=& \Exp\left[ b(X)[Y(1-2c(X))+1]Y\text{sign}(f(X)-c(X))\right]\notag\\
=& \Exp\left[b(X)[Y(1-2c(X))+1]\right]
-2\Exp\left[b(X)[Y(1-2c(X))+1]\Ind{[Y\neq \text{sign}(f(X)-c(X))]}\right].
\end{align}
Combining~(\ref{C4}) and~(\ref{C5}) yields
\begin{align*}
S(p^{*})-S(f)
=4\Exp\left[b(X)[p^{*}(X)-c(X)](\Ind{[p^{*}(X)\geq c(X)]}-\Ind{[f(X)\geq c(X)]})\right]\geq 0
\end{align*}
for any $f$.
It immediately follows that $S^{*}=S(p^{*})$.

Next, we calculate the maximal expected utility $S^{*}$.
The derivation in~(\ref{C3}) implies that with probability one,
\begin{align*}
&\Exp\left[b(X)[Y(1-2c(X))+1]\Ind{[Y\neq \text{sign}(p^{*}(X)-c(X))]}|X\right]\\
=&2p^{*}(X)b_{1}(X)\Ind{[p^{*}(X)<c(X)]}+2(1-p^{*}(X))b_{-1}(X)\Ind{[p^{*}(X)\geq c(X)]}\\
=&2[p^{*}(X)b_{1}(X)-(1-p^{*}(X))b_{-1}(X)][\Ind{[p^{*}(X)<c(X)]}-\Ind{[p^{*}(X)\geq c(X)]}]\\
&+2p^{*}(X)b_{1}(X)\Ind{[p^{*}(X)\geq c(X)]}+2(1-p^{*}(X))b_{-1}(X)\Ind{[p^{*}(X)< c(X)]}\\
=&-2b(X)|p^{*}(X)-c(X)|+\Exp\left[b(X)[Y(1-2c(X))+1]\Ind{[Y= \text{sign}(p^{*}(X)-c(X))]}|X\right].
\end{align*}
After rearrangement, we obtain
\begin{align*}
2b(X)|p^{*}(X)-c(X)|
=&\Exp\left[b(X)[Y(1-2c(X))+1]|X\right]\\
&-2\Exp\left[b(X)[Y(1-2c(X))+1]\Ind{[Y\neq \text{sign}(p^{*}(X)-c(X))]}|X\right].
\end{align*}
Taking expectation on both sides yields
$S^{*}=S(p^{*})=2\Exp\left[b(X)|p^{*}(X)-c(X)|\right]$ by (\ref{C5}).

Finally, note that $|p^{*}(X)-c(X)|\leq |p^{*}(X)-f(X)|$ whenever $\Ind{[p^{*}(X)\geq c(X)]}\neq\Ind{[f(X)\geq c(X)]}$.
It follows that
\begin{align*}
S^{*}-S(f)=&4\Exp\left[b(X)[p^{*}(X)-c(X)](\Ind{[p^{*}(X)\geq c(X)]}-\Ind{[f(X)\geq c(X)]})\right]\\
=&4\Exp\left[b(X)|p^{*}(X)-c(X)||\Ind{[p^{*}(X)\geq c(X)]}-\Ind{[f(X)\geq c(X)]}|\right]\\
\leq&4\Exp\left[b(X)|p^{*}(X)-f(X)||\Ind{[p^{*}(X)\geq c(X)]}-\Ind{[f(X)\geq c(X)]}|\right]\\
\leq&4\Exp\left[b(X)|p^{*}(X)-f(X)|\right].
\end{align*}
Hence, we obtain
$S^{*}-S(f)\leq 16M\sup_{x\in\mathcal{X}}|p^{*}(x)-f(x)|$
by Assumption~\ref{A3}.
\end{proof}

\subsection{Proof of Theorem~\ref{ThmMD}}
\begin{proof}
\emph{Part 1}\\
We write $S'_{n}(f)$ for the empirical utility of the prediction rule $f$ based on the ghost sample $\mathscr{D}'_{n}$, in which the observations $(Y'_{1},X'_{1}),\ldots,(Y'_{n},X'_{n})$ are distributed as $(Y_{1},X_{1}),\ldots,(Y_{n},X_{n})$ and independent of them.
For ease of notation, let
\begin{align*}
S^{(1)}_{n}(f)\equiv \frac{2}{n}\sum_{i=1}^{n/2}s(Y_{i},X_{i},f)\;\;\text{and}\;\;
S^{(2)}_{n}(f)\equiv \frac{2}{n}\sum_{i=n/2+1}^{n}s(Y_{i},X_{i},f).
\end{align*}

As in the proof of Theorem~\ref{MainThm}, the desired results follow from the exponential tail inequality
\begin{align*}
\mathbb{P}\left(R^{\text{MD}}_{n,k}-S(\hat{f}_{k})>\varepsilon\right)
\leq \exp{\left\{-\frac{n\varepsilon^{2}}{288M^{2}}\right\}},
\end{align*}
where $R^{\text{MD}}_{n,k}\equiv S_{n}(\hat{f}_{k})-\max_{f\in\mathcal{F}_{k}}(S^{(1)}_{n}(f)-S^{(2)}_{n}(f))$.
To establish this tail inequality, we note that if
\begin{align}\label{C6}
\Exp\left[\sup_{f\in\mathcal{F}_{k}}\left(S_{n}(f)-S(f)\right)  \right]
\leq\Exp\left[\max_{f\in\mathcal{F}_{k}}\left(S^{(1)}_{n}(f)-S^{(2)}_{n}(f)\right)  \right],
\end{align}
then
\begin{align*}
\mathbb{P}\left(R^{\text{MD}}_{n,k}-S(\hat{f}_{k})>\varepsilon\right)
\leq& \mathbb{P}\Bigg(\sup_{f\in\mathcal{F}_{k}}\left(S_{n}(f)-S(f)\right)
-\max_{f\in\mathcal{F}_{k}}\left(S^{(1)}_{n}(f)-S^{(2)}_{n}(f)\right)\\
&\hspace{0.3cm}-\Exp\left[
\sup_{f\in\mathcal{F}_{k}}\left(S_{n}(f)-S(f)\right)-\max_{f\in\mathcal{F}_{k}}
\left(S^{(1)}_{n}(f)-S^{(2)}_{n}(f)\right)
\right]
>\varepsilon\Bigg).
\end{align*}
It follows from \citeauthor{McDiarmid1989}'s (\citeyear{McDiarmid1989}) inequality that the latter probability is bounded above by $\exp{\left\{-2n\varepsilon^{2}/(24M)^{2}\right\}}=\exp{\left\{-n\varepsilon^{2}/288M^{2}\right\}}$ because the mapping
\begin{align*}
\mathscr{D}_{n}\mapsto \sup_{f\in\mathcal{F}_{k}}\left(S_{n}(f)-S(f)\right)
-\max_{f\in\mathcal{F}_{k}}\left(S^{(1)}_{n}(f)-S^{(2)}_{n}(f)\right)
\end{align*}
satisfies the bounded differences property in Section 6.1 of \citet{BoucheronLugosiEtAl2013} with their notation $c_{i}=24M/n$ for each $i\in\{1,\ldots,n\}$.

It remains to prove Inequality~(\ref{C6}).
Since $\{(Y'_{i},X'_{i})\}_{i=1}^{n/2}$, $\{(Y'_{i},X'_{i})\}_{i=n/2+1}^{n}$, $\{(Y_{i},X_{i})\}_{i=1}^{n/2}$, and $\{(Y_{i},X_{i})\}_{i=n/2+1}^{n}$ are independent and identically distributed,
the common symmetrization argument shows that
\begin{align*}
\Exp\left[\sup_{f\in\mathcal{F}_{k}}\left(S_{n}(f)-S(f)\right)  \right]
\leq& \Exp\left[\max_{f\in\mathcal{F}_{k}}\left(S_{n}(f)-S'_{n}(f)\right)  \right]\\
\leq& \Exp\left[\max_{f\in\mathcal{F}_{k}}
\frac{2}{n}\sum_{i=1}^{n/2}\Big(s(Y_{i},X_{i},f)-s(Y'_{i},X'_{i},f)\Big)  \right]\\
=& \Exp\left[\max_{f\in\mathcal{F}_{k}}\left(S^{(1)}_{n}(f)-S^{(2)}_{n}(f)\right)  \right].
\end{align*}

\noindent
\emph{Part 2}\\
For ease of notation, let $\sigma\equiv\{\sigma^{(j)}\}_{j=1}^{m}= \{(\sigma^{(j)}_{1},\sigma^{(j)}_{2},\ldots,\sigma^{(j)}_{n/2})\}_{j=1}^{m}$,
\begin{align*}
Q^{\text{SMD}}_{n,k}(\sigma;\mathscr{D}_{n})\equiv \frac{1}{m}\sum_{j=1}^{m}\left(\max_{f\in\mathcal{F}_{k}}\frac{2}{n}\sum_{i=1}^{n/2}\sigma^{(j)}_{i}
\Big(s(Y_{2i-1},X_{2i-1},f)-s(Y_{2i},X_{2i},f)\Big)\right),
\end{align*}
and $R^{\text{SMD}}_{n,k}\equiv S_{n}(\hat{f}_{k})-Q^{\text{SMD}}_{n,k}(\sigma;\mathscr{D}_{n})$.
For every $m,n\in\mathbb{N}$, let
\begin{align*}
\eta_{m,n}\equiv
\begin{cases}
3/5, & \text{if}\;\; n\leq m, \\
3/(2\ell+5), & \text{if}\;\; n/(\ell+1)^{2} \leq m < n/\ell^{2}\;\;
\text{and}\;\; \ell\in\mathbb{N}.
\end{cases}
\end{align*}

The proof is similar to that of \emph{Part 1} in the sense that our goal is to establish an appropriate exponential tail inequality of $R^{\text{SMD}}_{n,k}-S(\hat{f}_{k})$.
The additional trick is to deal with the randomness arising from simulated Rademacher random vectors.
To disentangle such randomness from the randomness of data $\mathscr{D}_{n}$, we consider the inequality
\begin{align}
\mathbb{P}\left(R^{\text{SMD}}_{n,k}-S(\hat{f}_{k})>\varepsilon\right)
\leq& \mathbb{P}\left(\sup_{f\in\mathcal{F}_{k}}\left(S_{n}(f)-S(f)\right)
-\Exp\left[Q^{\text{SMD}}_{n,k}(\sigma;\mathscr{D}_{n})|\mathscr{D}_{n}\right]
>\eta\varepsilon\right)\notag\\
&+\mathbb{P}\left(\Exp\left[Q^{\text{SMD}}_{n,k}(\sigma;\mathscr{D}_{n})|\mathscr{D}_{n}\right]
-Q^{\text{SMD}}_{n,k}(\sigma;\mathscr{D}_{n}) >(1-\eta)\varepsilon\right) \label{C7}
\end{align}
for any $\eta\in(0,1)$.
Given $\mathscr{D}_{n}$, the mapping $\{\sigma^{(j)}\}_{j=1}^{m}\mapsto Q^{\text{SMD}}_{n,k}(\sigma;\mathscr{D}_{n})$ satisfies the bounded differences property in Section 6.1 of \citet{BoucheronLugosiEtAl2013} with their notation $c_{j}=16M/m$ for each $j\in\{1,\ldots,m\}$.
It follows from \citeauthor{McDiarmid1989}'s (\citeyear{McDiarmid1989}) inequality that
\begin{align*}
\mathbb{P}\left(\Exp\left[Q^{\text{SMD}}_{n,k}(\sigma;\mathscr{D}_{n})|\mathscr{D}_{n}\right]
-Q^{\text{SMD}}_{n,k}(\sigma;\mathscr{D}_{n}) >(1-\eta)\varepsilon|\mathscr{D}_{n}\right)
\leq \exp{\left\{-\frac{2m(1-\eta)^{2}\varepsilon^{2}}{(16M)^{2}}\right\}}
\end{align*}
with probability one.
Taking expectation with respect to $\mathscr{D}_{n}$ on both sides yields
\begin{align}\label{C8}
\mathbb{P}\left(\Exp\left[Q^{\text{SMD}}_{n,k}(\sigma;\mathscr{D}_{n})|\mathscr{D}_{n}\right]
-Q^{\text{SMD}}_{n,k}(\sigma;\mathscr{D}_{n}) >(1-\eta)\varepsilon\right)
\leq \exp{\left\{-\frac{2m(1-\eta)^{2}\varepsilon^{2}}{(16M)^{2}}\right\}}.
\end{align}
In addition, since the observations in $\mathscr{D}_{n}$ and $\mathscr{D}'_{n}$ are i.i.d.,
the common symmetrization argument shows that
\begin{align*}
\Exp\left[\sup_{f\in\mathcal{F}_{k}}\left(S_{n}(f)-S(f)\right)\right]
\leq\Exp\left[Q^{\text{SMD}}_{n,k}(\sigma;\mathscr{D}_{n})\right].
\end{align*}
We apply \citeauthor{McDiarmid1989}'s (\citeyear{McDiarmid1989}) inequality again and obtain
\begin{align}\label{C9}
& \mathbb{P}\left(\sup_{f\in\mathcal{F}_{k}}\left(S_{n}(f)-S(f)\right)
-\Exp\left[Q^{\text{SMD}}_{n,k}(\sigma;\mathscr{D}_{n})|\mathscr{D}_{n}\right]
>\eta\varepsilon\right)\notag\\
\leq& \mathbb{P}\Big(\sup_{f\in\mathcal{F}_{k}}\left(S_{n}(f)-S(f)\right)
-\Exp\left[Q^{\text{SMD}}_{n,k}(\sigma;\mathscr{D}_{n})|\mathscr{D}_{n}\right]\notag\\
&\hspace{0.3cm}-\Exp\left[\sup_{f\in\mathcal{F}_{k}}\left(S_{n}(f)-S(f)\right)\right]
+\Exp\left[Q^{\text{SMD}}_{n,k}(\sigma;\mathscr{D}_{n})\right]
>\eta\varepsilon\Big)\notag\\
\leq & \exp{\left\{-\frac{2n\eta^{2}\varepsilon^{2}}{(24M)^{2}}\right\}}.
\end{align}
because the mapping
\begin{align*}
\mathscr{D}_{n}\mapsto \sup_{f\in\mathcal{F}_{k}}\left(S_{n}(f)-S(f)\right)
-\Exp\left[Q^{\text{SMD}}_{n,k}(\sigma;\mathscr{D}_{n})|\mathscr{D}_{n}\right]
\end{align*}
satisfies the bounded differences property in Section 6.1 of \citet{BoucheronLugosiEtAl2013} with their notation $c_{i}=24M/n$ for each $i\in\{1,\ldots,n\}$.
Combining Inequalities~(\ref{C7})-(\ref{C9}) and setting $\eta=\eta_{m,n}$, we obtain
\begin{align*}
\mathbb{P}\left(R^{\text{SMD}}_{n,k}-S(\hat{f}_{k})>\varepsilon\right)
\leq &\exp{\left\{-\frac{2n\eta^{2}\varepsilon^{2}}{(24M)^{2}}\right\}}+
\exp{\left\{-\frac{2m(1-\eta)^{2}\varepsilon^{2}}{(16M)^{2}}\right\}}\\
\leq & 2\exp{\left\{-\frac{2n\varepsilon^{2}}{(\gamma_{m,n}(M))^{2}}\right\}}.
\end{align*}
The desired results follow from the exponential tail probability above and similar arguments used in Theorem~\ref{MainThm}.\\[0.3cm]
\noindent
\emph{Part 3}\\
Let $\sigma\equiv\{\sigma^{(j)}\}_{j=1}^{m}= \{(\sigma^{(j)}_{1},\sigma^{(j)}_{2},\ldots,\sigma^{(j)}_{n})\}_{j=1}^{m}$ and
$R^{\text{RC}}_{n,k}\equiv S_{n}(\hat{f}_{k})-Q^{\text{RC}}_{n,k}(\sigma;\mathscr{D}_{n})$, where
\begin{align*}
Q^{\text{RC}}_{n,k}(\sigma;\mathscr{D}_{n})\equiv \frac{1}{m}\sum_{j=1}^{m}\left(\max_{f\in\mathcal{F}_{k}}\frac{2}{n}\sum_{i=1}^{n}\sigma^{(j)}_{i}
s(Y_{i},X_{i},f)\right).
\end{align*}
Following the argument in \emph{Part 2} \textit{mutatis mutandis}, we have for any $\eta\in(0,1)$,
\begin{align*}
\mathbb{P}\left(\Exp\left[Q^{\text{RC}}_{n,k}(\sigma;\mathscr{D}_{n})|\mathscr{D}_{n}\right]
-Q^{\text{RC}}_{n,k}(\sigma;\mathscr{D}_{n}) >(1-\eta)\varepsilon\right)
\leq \exp{\left\{-\frac{2m(1-\eta)^{2}\varepsilon^{2}}{(16M)^{2}}\right\}},
\end{align*}
and
\begin{align*}
\mathbb{P}\left(\sup_{f\in\mathcal{F}_{k}}\left(S_{n}(f)-S(f)\right)
-\Exp\left[Q^{\text{RC}}_{n,k}(\sigma;\mathscr{D}_{n})|\mathscr{D}_{n}\right]>\eta\varepsilon\right)
\leq & \exp{\left\{-\frac{2n\eta^{2}\varepsilon^{2}}{(24M)^{2}}\right\}}.
\end{align*}
Combining these two inequalities with $\eta=\eta_{m,n}$ yields
\begin{align*}
\mathbb{P}\left(R^{\text{RC}}_{n,k}-S(\hat{f}_{k})>\varepsilon\right)
\leq 2\exp{\left\{-\frac{2n\varepsilon^{2}}{(\gamma_{m,n}(M))^{2}}\right\}}.
\end{align*}
The desired results follow from the exponential tail probability above and similar arguments used in Theorem~\ref{MainThm}.\\[0.3cm]
\noindent
\emph{Part 4}\\
For ease of notation, let $W\equiv\{W^{(j)}_{n}\}_{j=1}^{m}=\{(W^{(j)}_{n,1},W^{(j)}_{n,2},\ldots,W^{(j)}_{n,n})\}_{j=1}^{m}$,
\begin{align*}
Q^{\text{BC}}_{n,k}(W;\mathscr{D}_{n})\equiv \left(\frac{n}{n-1}\right)^{n}\frac{1}{m}\sum_{j=1}^{m}
\left(\max_{f\in\mathcal{F}_{k}}\frac{1}{n}\sum_{i=1}^{n}
\left(W^{(j)}_{n,i}-1\right)s(Y_{i},X_{i},f)\right),
\end{align*}
and $R^{\text{BC}}_{n,k}\equiv S_{n}(\hat{f}_{k})-Q^{\text{BC}}_{n,k}(W;\mathscr{D}_{n})$.

As in the proof of \emph{Part 2}, the desired results follow from the exponential tail inequality
\begin{align}
\mathbb{P}\left(R^{\text{BC}}_{n,k}-S(\hat{f}_{k})>\varepsilon\right)
\leq 2\exp{\left\{-\frac{2n\varepsilon^{2}}{(\gamma'_{m,n}(M))^{2}}\right\}}.\label{C10}
\end{align}
To establish this tail probability, we have
\begin{align}
\mathbb{P}\left(R^{\text{BC}}_{n,k}-S(\hat{f}_{k})>\varepsilon\right)
\leq& \mathbb{P}\left(\sup_{f\in\mathcal{F}_{k}}\left(S_{n}(f)-S(f)\right)
-\Exp\left[Q^{\text{BC}}_{n,k}(W;\mathscr{D}_{n})|\mathscr{D}_{n}\right]
>\eta\varepsilon\right)\notag\\
&+\mathbb{P}\left(\Exp\left[Q^{\text{BC}}_{n,k}(W;\mathscr{D}_{n})|\mathscr{D}_{n}\right]
-Q^{\text{BC}}_{n,k}(W;\mathscr{D}_{n}) >(1-\eta)\varepsilon\right)\label{C11}
\end{align}
for any $\eta\in(0,1)$.
Given $\mathscr{D}_{n}$, the mapping $\{W_{n}^{(j)}\}_{j=1}^{m}\mapsto Q^{\text{BC}}_{n,k}(W;\mathscr{D}_{n})$ satisfies the bounded differences property in Section 6.1 of \citet{BoucheronLugosiEtAl2013} with their notation $c_{j}=(8M/m)(n/(n-1))^{n}$ for each $j\in\{1,\ldots,m\}$.
It follows from \citeauthor{McDiarmid1989}'s (\citeyear{McDiarmid1989}) inequality that
\begin{align*}
&\mathbb{P}\left(\Exp\left[Q^{\text{BC}}_{n,k}(W;\mathscr{D}_{n})|\mathscr{D}_{n}\right]
-Q^{\text{BC}}_{n,k}(W;\mathscr{D}_{n}) >(1-\eta)\varepsilon|\mathscr{D}_{n}\right)\\
\leq&
\exp{\left\{-\frac{2(\frac{n-1}{n})^{2n}m(1-\eta)^{2}\varepsilon^{2}}{(8M)^{2}}\right\}}
\end{align*}
with probability one.
Taking expectation with respect to $\mathscr{D}_{n}$ on both sides yields
\begin{align}\label{C12}
\mathbb{P}\left(\Exp\left[Q^{\text{BC}}_{n,k}(W;\mathscr{D}_{n})|\mathscr{D}_{n}\right]
-Q^{\text{BC}}_{n,k}(W;\mathscr{D}_{n}) >(1-\eta)\varepsilon\right)
\leq \exp{\left\{-\frac{2m(1-\eta)^{2}\varepsilon^{2}}{(32M)^{2}}\right\}}
\end{align}
because $((n-1)/n)^{n}\geq 1/4$ for all $n\geq 2$.
If we have
\begin{align}\label{C13}
\Exp\left[\sup_{f\in\mathcal{F}_{k}}\left(S_{n}(f)-S(f)\right)\right]
\leq\Exp\left[Q^{\text{BC}}_{n,k}(W;\mathscr{D}_{n})\right],
\end{align}
then we can apply \citeauthor{McDiarmid1989}'s (\citeyear{McDiarmid1989}) inequality again and obtain
\begin{align*}
& \mathbb{P}\left(\sup_{f\in\mathcal{F}_{k}}\left(S_{n}(f)-S(f)\right)
-\Exp\left[Q^{\text{BC}}_{n,k}(W;\mathscr{D}_{n})|\mathscr{D}_{n}\right]
>\eta\varepsilon\right)\notag\\
\leq& \mathbb{P}\Big(\sup_{f\in\mathcal{F}_{k}}\left(S_{n}(f)-S(f)\right)
-\Exp\left[Q^{\text{BC}}_{n,k}(W;\mathscr{D}_{n})|\mathscr{D}_{n}\right]\notag\\
&\hspace{0.3cm}-\Exp\left[\sup_{f\in\mathcal{F}_{k}}\left(S_{n}(f)-S(f)\right)\right]
+\Exp\left[Q^{\text{BC}}_{n,k}(W;\mathscr{D}_{n})\right]
>\eta\varepsilon\Big)\notag\\
\leq & \exp{\left\{-\frac{2n\eta^{2}\varepsilon^{2}}{(8M)^{2}\Big[1+\left(\frac{n}{n-1}\right)^{n}
\Exp\left[|W^{(j)}_{n,1}-1|\right]\Big]^{2}}\right\}}
\end{align*}
because the mapping
\begin{align*}
\mathscr{D}_{n}\mapsto \sup_{f\in\mathcal{F}_{k}}\left(S_{n}(f)-S(f)\right)
-\Exp\left[Q^{\text{BC}}_{n,k}(W;\mathscr{D}_{n})|\mathscr{D}_{n}\right]
\end{align*}
satisfies the bounded differences property in Section 6.1 of \citet{BoucheronLugosiEtAl2013} with their notation
\begin{align*}
c_{i}=\frac{8M}{n}\left(1+\left(\frac{n}{n-1}\right)^{n}\Exp\left[|W^{(j)}_{n,1}-1|\right]\right)
\end{align*}
for each $i\in\{1,\ldots,n\}$.
By Lemma~\ref{Multinomial},
\begin{align*}
\left(\frac{n}{n-1}\right)^{n}\Exp\left[|W^{(j)}_{n,1}-1|\right]
=\frac{\Exp\left[(W^{(j)}_{n,1}-1)_{+}\right]+\Exp\left[(W^{(j)}_{n,1}-1)_{-}\right]}
{\Exp\left[(W^{(j)}_{n,1}-1)_{+}\right]}
=2.
\end{align*}
It follows that
\begin{align}\label{C14}
\mathbb{P}\left(\sup_{f\in\mathcal{F}_{k}}\left(S_{n}(f)-S(f)\right)
-\Exp\left[Q^{\text{BC}}_{n,k}(W;\mathscr{D}_{n})|\mathscr{D}_{n}\right]>\eta\varepsilon\right)
\leq \exp{\left\{-\frac{2n\eta^{2}\varepsilon^{2}}{(24M)^{2}}\right\}}.
\end{align}
Combining Inequalities~(\ref{C11})-(\ref{C14}) and setting
\begin{align*}
\eta=\eta'_{m,n}\equiv
\begin{cases}
3/7, & \text{if}\;\; n\leq m, \\
3/(4\ell+7), & \text{if}\;\; n/(\ell+1)^{2} \leq m < n/\ell^{2}\;\;
\text{and}\;\; \ell\in\mathbb{N},
\end{cases}
\end{align*}
yield Inequality~(\ref{C10}).

Now, it suffices to show Inequality~(\ref{C13}).
Note that
\begin{align*}
&\Exp\left[\sup_{f\in\mathcal{F}_{k}}\left(S_{n}(f)-S(f)\right)\right]\\
=& \frac{\Exp\left[\sup_{f\in\mathcal{F}_{k}}
\frac{1}{n}\sum_{i=1}^{n}\Exp\left[(W_{n,i}-1)\Ind{[W_{n,i}\geq 1]}\right]\Big(s(Y_{i},X_{i},f)-\Exp[s(Y_{i},X_{i},f)]\Big)\right]}
{\Exp\left[(W_{n,1}-1)_{+}\right]}.
\end{align*}
Applying Jensen's inequality to the numerator, we have
\begin{align}
&\Exp\left[\sup_{f\in\mathcal{F}_{k}}
\frac{1}{n}\sum_{i=1}^{n}\Exp\left[(W_{n,i}-1)\Ind{[W_{n,i}\geq 1]}\right]\Big(s(Y_{i},X_{i},f)-\Exp[s(Y_{i},X_{i},f)]\Big)\right]\notag\\
= & \Exp\left[\sup_{f\in\mathcal{F}_{k}}\Exp\left[
\frac{1}{n}\sum_{i=1}^{n}(W_{n,i}-1)\Ind{[W_{n,i}\geq 1]}\Big(s(Y_{i},X_{i},f)-\Exp[s(Y_{i},X_{i},f)]\Big)\Big|\mathscr{D}_{n}\right]\right]\notag\\
\leq & \Exp\left[\sup_{f\in\mathcal{F}_{k}}
\frac{1}{n}\sum_{i=1}^{n}(W_{n,i}-1)\Ind{[W_{n,i}\geq 1]}\Big(s(Y_{i},X_{i},f)-\Exp[s(Y_{i},X_{i},f)]\Big)\right].\label{C15}
\end{align}
Let $\mathscr{D}_{n,\geq 1}$ be the largest subset of $\mathscr{D}_{n}$ such that each observation $(Y_{i},X_{i})$ in $\mathscr{D}_{n,\geq 1}$ has the concomitant $W_{n,i}$ greater than or equal to $1$.
Note that
\begin{align*}
&\Exp\left[\frac{1}{n}\sum_{i=1}^{n}(W_{n,i}-1)\Ind{[W_{n,i}\geq 1]}
\Big(s(Y_{i},X_{i},f)-\Exp[s(Y_{i},X_{i},f)]\Big)\Big|\{W_{n,i}\}_{i=1}^{n},\mathscr{D}_{n,\geq 1}\right]\\
=&\frac{1}{n}\sum_{i=1}^{n}(W_{n,i}-1)\Ind{[W_{n,i}\geq 1]}
\Big(s(Y_{i},X_{i},f)-\Exp[s(Y_{i},X_{i},f)]\Big)
\end{align*}
and
\begin{align*}
\Exp\left[\frac{1}{n}\sum_{i=1}^{n}(W_{n,i}-1)\Ind{[W_{n,i}< 1]}
\Big(s(Y_{i},X_{i},f)-\Exp[s(Y_{i},X_{i},f)]\Big)\Big|\{W_{n,i}\}_{i=1}^{n},\mathscr{D}_{n,\geq 1}\right]
=0.
\end{align*}
It follows that
\begin{align}
&\Exp\left[\sup_{f\in\mathcal{F}_{k}}
\frac{1}{n}\sum_{i=1}^{n}(W_{n,i}-1)\Ind{[W_{n,i}\geq 1]}
\Big(s(Y_{i},X_{i},f)-\Exp[s(Y_{i},X_{i},f)]\Big)\right]\notag\\
=&\Exp\left[\sup_{f\in\mathcal{F}_{k}}\Exp\left[
\frac{1}{n}\sum_{i=1}^{n}(W_{n,i}-1)\Ind{[W_{n,i}\geq 1]}
\Big(s(Y_{i},X_{i},f)-\Exp[s(Y_{i},X_{i},f)]\Big)\Big|\{W_{n,i}\}_{i=1}^{n},\mathscr{D}_{n,\geq 1}\right]\right]\notag\\
=&\Exp\left[\sup_{f\in\mathcal{F}_{k}}\Exp\left[
\frac{1}{n}\sum_{i=1}^{n}(W_{n,i}-1)
\Big(s(Y_{i},X_{i},f)-\Exp[s(Y_{i},X_{i},f)]\Big)\Big|\{W_{n,i}\}_{i=1}^{n},\mathscr{D}_{n,\geq 1}\right]\right]\notag\\
\leq& \Exp\left[\sup_{f\in\mathcal{F}_{k}}
\frac{1}{n}\sum_{i=1}^{n}(W_{n,i}-1)\Big(s(Y_{i},X_{i},f)-\Exp[s(Y_{i},X_{i},f)]\Big)\right]\notag\\
=& \Exp\left[\sup_{f\in\mathcal{F}_{k}}
\frac{1}{n}\sum_{i=1}^{n}(W_{n,i}-1)s(Y_{i},X_{i},f)\right].\label{C16}
\end{align}
Combining~(\ref{C15}) and~(\ref{C16}) yields
\begin{align*}
&\Exp\left[\sup_{f\in\mathcal{F}_{k}}
\frac{1}{n}\sum_{i=1}^{n}\Exp\left[(W_{n,i}-1)\Ind{[W_{n,i}\geq 1]}\right]\Big(s(Y_{i},X_{i},f)-\Exp[s(Y_{i},X_{i},f)]\Big)\right]\\
\leq& \Exp\left[\sup_{f\in\mathcal{F}_{k}}
\frac{1}{n}\sum_{i=1}^{n}(W_{n,i}-1)s(Y_{i},X_{i},f)\right].
\end{align*}
Lemma~\ref{Multinomial} shows that $\Exp\left[(W_{n,1}-1)_{+}\right]=\left(1-1/n\right)^{n}$.
Therefore, we obtain
\begin{align*}
\Exp\left[\sup_{f\in\mathcal{F}_{k}}\left(S_{n}(f)-S(f)\right)\right]
\leq& \frac{\Exp\left[\max_{f\in\mathcal{F}_{k}}
\frac{1}{n}\sum_{i=1}^{n}(W_{n,i}-1)s(Y_{i},X_{i},f)\right]}{\Exp\left[(W_{n,1}-1)_{+}\right]}\\
=&\Exp\left[Q^{\text{BC}}_{n,k}(W;\mathscr{D}_{n})\right].
\end{align*}

\end{proof}

\subsection{Proof of Corollary~\ref{Corrollary3}}
\begin{proof}
We first consider the MD, SMD, and RC penalties.
Since
\begin{align*}
24M\chi_{n}(k;\alpha_{0})<\gamma_{m,n}(M)\chi_{n}(k;\alpha_{0})
= (24+16\bar{\ell})M\sqrt{\frac{1+\alpha_{0}}{2}}\sqrt{\frac{\log{\{V_{k,c}\}}}{n}}
\end{align*}
and
\begin{align*}
&\Exp\left[
\max_{f\in\mathcal{F}_{k}}\left(\frac{2}{n}\sum_{i=1}^{n/2}s(Y_{i},X_{i},f)
-\frac{2}{n}\sum_{i=n/2+1}^{n}s(Y_{i},X_{i},f)\right)
\right]\\
=&\Exp\left[
\frac{1}{m}\sum_{j=1}^{m}\left(\max_{f\in\mathcal{F}_{k}}\frac{2}{n}\sum_{i=1}^{n/2}\sigma^{(j)}_{i}
\Big(s(Y_{2i-1},X_{2i-1},f)-s(Y_{2i},X_{2i},f)\Big)\right)
\right]\\
\leq& 2\Exp\left[
\frac{1}{m}\sum_{j=1}^{m}\left(\max_{f\in\mathcal{F}_{k}}
\frac{2}{n}\sum_{i=1}^{n/2}\sigma^{(j)}_{i}s(Y_{i},X_{i},f)\right)
\right],
\end{align*}
it suffices to find an appropriate upper bound on $\Exp\left[\max_{f\in\mathcal{F}_{k}}\frac{1}{n}\sum_{i=1}^{n/2}\sigma_{i}s(Y_{i},X_{i},f)\right]$.
Let $Z_{i}= \sigma_{i}b(X_{i})[Y_{i}+1-2c(X_{i})]$ for each $i$.
We have $\Exp[Z_{i}|\mathscr{D}_{n}]=0$ for each $i$ and $\Exp[|Z_{1}|^{\ell}|\mathscr{D}_{n}]\leq (\ell!/2)(4M)^{\ell}$ for each $\ell\geq 2$.
Notice that
\begin{align*}
\Exp\left[\max_{f\in\mathcal{F}_{k}}\frac{1}{n}\sum_{i=1}^{n/2}
\sigma_{i}s(Y_{i},X_{i},f)\Big|\mathscr{D}_{n}\right]
=&\frac{2}{n}\Exp\left[
\max_{f\in\mathcal{F}_{k}}\sum_{i=1}^{n/2}\Ind{[f(X_{i})-c(X_{i})\geq 0]}Z_{i}\Big|\mathscr{D}_{n}\right]\\
=&\frac{2}{n}\Exp\left[
\max_{a=(a_{1},\ldots,a_{n})\in\mathbb{A}_{k,c}(\mathscr{D}_{n})}
\sum_{i=1}^{n/2}a_{i}Z_{i}\Big|\mathscr{D}_{n}\right]
\end{align*}
where $\mathbb{A}_{k,c}(\mathscr{D}_{n})\equiv \{(\Ind{B}(X_{1}),\ldots, \Ind{B}(X_{n})):B\in\mathcal{B}_{k,c}\}$ and $\mathcal{B}_{k,c}\equiv\{\{x\in\mathcal{X}:f(x)-c(x)\geq 0\}: f\in\mathcal{F}_{k}\}$.
Note that the VC dimension of $\mathcal{B}_{k,c}$ is $V_{k,c}$.
It follows by the chaining technique in Lemma 3 and Theorem 4 of \citet{Fromont2007} that there are positive constants $\bar{\kappa}_{1}$ and $\bar{\kappa}_{2}$, only depending on $M$, such that
\begin{align*}
\Exp\left[\max_{a=(a_{1},\ldots,a_{n})\in\mathbb{A}_{k,c}(\mathscr{D}_{n})}
\sum_{i=1}^{n/2}a_{i}Z_{i}\Big|\mathscr{D}_{n}\right]
\leq \bar{\kappa}_{1}\sqrt{V_{k,c}}\sqrt{n}+\bar{\kappa}_{2}V_{k,c}(\log\{n\})^{2}
\end{align*}
for each $k\in\mathbb{N}$ and $n\geq 8$.
Hence, taking expectation on both sides, we obtain
\begin{align*}
\Exp\left[\max_{f\in\mathcal{F}_{k}}\frac{1}{n}\sum_{i=1}^{n/2}\sigma_{i}s(Y_{i},X_{i},f)\right]
\leq 2\left(\bar{\kappa}_{1}\sqrt{\frac{V_{k,c}}{n}}
+\bar{\kappa}_{2}V_{k,c}\frac{(\log\{n\})^{2}}{n}\right).
\end{align*}

Next, we consider the BC penalty.
As in the proof of \citet{Fromont2007}, we apply Poissonization to remove the dependence of $(W_{n,1},\ldots, W_{n,n})$.
Let $\{U_{i}\}_{i=1}^{n}$ be a sequence of i.i.d.\ random variables independent of $\mathscr{D}_{n}$ and uniformly distributed on $(0,1)$ such that
\begin{align*}
W_{n,i}=\sum_{\jmath=1}^{n}\Ind{[U_{\jmath}\in((i-1)/n,i/n]]}.
\end{align*}
Let $N$ be the Poisson random variable with parameter $n$ that is independent of $\mathscr{D}_{n}$ and $\{U_{i}\}_{i=1}^{n}$.
For each $i\in\{1,\ldots, n\}$, we define
\begin{align*}
N_{i}=\sum_{\jmath=1}^{N}\Ind{[U_{\jmath}\in((i-1)/n,i/n]]}.
\end{align*}
It can be shown that
$\{N_{i}\}_{i=1}^{n}$ is a sequence of i.i.d.\ random variables independent of $\mathscr{D}_{n}$ and each $N_{i}$ follows a Poisson distribution with parameter 1; additionally,
\begin{align*}
&\left|\Exp\left[\frac{1}{m}\sum_{j=1}^{m}
\Bigg(\max_{f\in\mathcal{F}_{k}}\frac{1}{n}\sum_{i=1}^{n}
\left(W^{(j)}_{n,i}-1\right)s(Y_{i},X_{i},f)\right)\Big|\mathscr{D}_{n}\right]\\
&\hspace{2.7cm}-\Exp\left[\max_{f\in\mathcal{F}_{k}}
\frac{1}{n}\sum_{i=1}^{n}\left(N_{i}-1\right)s(Y_{i},X_{i},f)\Big|\mathscr{D}_{n}\right]\Bigg|
\leq \frac{4M}{n}\Exp[|N-n|]
\leq \frac{4M}{\sqrt{n}}.
\end{align*}
Let $\tilde{Z}_{i}=(N_{i}-1)b(X_{i})[Y_{i}+1-2c(X_{i})]$ for each $i$.
As in the proof of \citet{Fromont2007}, we have $\Exp[\tilde{Z}_{i}|\mathscr{D}_{n}]=0$ for each $i$ and $\Exp[|\tilde{Z}_{1}|^{\ell}|\mathscr{D}_{n}]\leq 1.4\cdot(\ell!/2)(4M)^{\ell}$ for each $\ell\geq 2$.
Then there are positive constants $\tilde{\kappa}_{1}$ and $\tilde{\kappa}_{2}$, only depending on $M$, such that for each $k\in\mathbb{N}$ and $n\geq 8$,
\begin{align*}
\Exp\left[\max_{f\in\mathcal{F}_{k}}\frac{1}{n}\sum_{i=1}^{n}
\left(N_{i}-1\right)s(Y_{i},X_{i},f)\Big|\mathscr{D}_{n}\right]
=&\frac{2}{n}\Exp\left[\max_{f\in\mathcal{F}_{k}}
\sum_{i=1}^{n}\Ind{[f(X_{i})-c(X_{i})>0]}\tilde{Z}_{i}\Big|\mathscr{D}_{n}\right]\\
=&\frac{2}{n}\Exp\left[\max_{a=(a_{1},\ldots,a_{n})\in\mathbb{A}_{k,c}(\mathscr{D}_{n})}
\sum_{i=1}^{n}a_{i}\tilde{Z}_{i}\Big|\mathscr{D}_{n}\right]\\
\leq& \tilde{\kappa}_{1}\sqrt{\frac{V_{k,c}}{n}}+\tilde{\kappa}_{2}V_{k,c}\frac{(\log\{n\})^{2}}{n}
\end{align*}
by the chaining technique in Lemma 3 and Theorem 4 of \citet{Fromont2007}.
Therefore, taking expectation on both sides yields
\begin{align*}
&\Exp\left[\left(\frac{n}{n-1}\right)^{n}\frac{1}{m}\sum_{j=1}^{m}
\left(\max_{f\in\mathcal{F}_{k}}\frac{1}{n}\sum_{i=1}^{n}
\left(W^{(j)}_{n,i}-1\right)s(Y_{i},X_{i},f)\right)\right]\\
\leq &\left(\frac{n}{n-1}\right)^{n}
\left\{\Exp\left[\max_{f\in\mathcal{F}_{k}}\frac{1}{n}
\sum_{i=1}^{n}\left(N_{i}-1\right)s(Y_{i},X_{i},f)\right]
+\frac{4M}{\sqrt{n}}\right\}\\
\leq &\left(\frac{n}{n-1}\right)^{n}
\left\{\tilde{\kappa}_{1}\sqrt{\frac{V_{k,c}}{n}}+\tilde{\kappa}_{2}V_{k,c}\frac{(\log\{n\})^{2}}{n}
+\frac{4M}{\sqrt{n}}\right\}\\
\leq &3\left\{(\tilde{\kappa}_{1}+4M)\sqrt{\frac{V_{k,c}}{n}}
+\tilde{\kappa}_{2}V_{k,c}\frac{(\log\{n\})^{2}}{n}\right\}
\end{align*}
for each $k\in\mathbb{N}$ and $n\geq 8$.
Note that the technical term in BC penalty satisfies
\begin{align*}
\gamma'_{m,n}(M)\chi_{n}(k;\alpha_{0})
= (24+32\bar{\ell})M\sqrt{\frac{1+\alpha_{0}}{2}}\sqrt{\frac{\log{\{V_{k,c}\}}}{n}}.
\end{align*}

Finally, the proof is completed by showing the universal utility consistency.
We have established that for each $k\in\mathbb{N}$, $\Exp\left[C^{\text{MD}}_{n}(k;\alpha_{0})\right]=\Bigo{n^{-1/2}}$, $\Exp\left[C^{\text{SMD}}_{n}(k;\alpha_{0},m)|\mathscr{D}_{n}\right]=\Bigo{n^{-1/2}}$,
$\Exp\left[C^{\text{RC}}_{n}(k;\alpha_{0},m)|\mathscr{D}_{n}\right]=\Bigo{n^{-1/2}}$, and $\Exp\left[C_{n}^{\text{BC}}(k;\alpha_{0},m)|\mathscr{D}_{n}\right]=\Bigo{n^{-1/2}}$ with probability one.
Hence, a simple modification of the proof of Corollary~\ref{Corollary2} yields the results.

\end{proof}

\noindent
Lemma~\ref{expectationbound} below is a slightly revised version of Problem 12.1 of \citet{DevroyeGyoerfiEtAl1996}.
\begin{Lem}\label{expectationbound}
If a random variable $Z$ satisfies
\begin{align*}
\mathbb{P}(Z>\epsilon)\leq c_{1}\exp\{-c_{2}\epsilon^{2}\}
\end{align*}
for all $\epsilon>0$ and some positive numbers $c_{1}$ and $c_{2}$, then
\begin{align*}
\Exp[Z]\leq \sqrt{\frac{(1+\log{\{c_{1}\}})}{c_{2}}}.
\end{align*}
\end{Lem}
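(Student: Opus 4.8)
The plan is to route the estimate through the second moment of the positive part of $Z$ rather than bounding $\Exp[Z]$ directly, since it is this route that produces exactly the stated constant. First I would note that $Z\le \max(Z,0)$, so $\Exp[Z]\le \Exp[\max(Z,0)]$, and then apply Jensen's inequality to the convex map $z\mapsto z^{2}$ (equivalently, nonnegativity of the variance) to get $\Exp[\max(Z,0)]\le \big(\Exp[\max(Z,0)^{2}]\big)^{1/2}$. It therefore suffices to show $\Exp[\max(Z,0)^{2}]\le (1+\log\{c_{1}\})/c_{2}$, and the problem reduces to controlling a single second moment. Throughout I would work in the relevant regime $c_{1}\ge 1$ (in every application of this lemma in the paper, $c_{1}$ equals $\zeta(\alpha_{0})$ or $2\zeta(\alpha_{0})$, both at least $1$), which is also what makes the right-hand side well defined.

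Next I would invoke the layer-cake representation for the nonnegative random variable $\max(Z,0)^{2}$ together with the substitution $u=\sqrt{t}$:
\begin{align*}
\Exp[\max(Z,0)^{2}]
=\int_{0}^{\infty}\mathbb{P}\big(\max(Z,0)^{2}>t\big)\,\mathrm{d}t
=\int_{0}^{\infty}\mathbb{P}\big(Z>\sqrt{t}\big)\,\mathrm{d}t
=\int_{0}^{\infty}2u\,\mathbb{P}(Z>u)\,\mathrm{d}u .
\end{align*}
I would then insert the hypothesis in the sharpened form $\mathbb{P}(Z>u)\le \min\{1,\,c_{1}\exp\{-c_{2}u^{2}\}\}$ and split the integral at the threshold $u_{0}\equiv\sqrt{\log\{c_{1}\}/c_{2}}$, the point at which $c_{1}\exp\{-c_{2}u_{0}^{2}\}=1$. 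On $[0,u_{0}]$ the probability is bounded by $1$, contributing $\int_{0}^{u_{0}}2u\,\mathrm{d}u=u_{0}^{2}=\log\{c_{1}\}/c_{2}$; on $[u_{0},\infty)$ the exponential bound applies, and since $\int_{u_{0}}^{\infty}2u\exp\{-c_{2}u^{2}\}\,\mathrm{d}u=\exp\{-c_{2}u_{0}^{2}\}/c_{2}=1/(c_{1}c_{2})$, this piece contributes $c_{1}\cdot 1/(c_{1}c_{2})=1/c_{2}$. Adding the two pieces gives $\Exp[\max(Z,0)^{2}]\le (\log\{c_{1}\}+1)/c_{2}$, and taking square roots delivers the claimed bound.

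The step I expect to be the crux is the choice to pass through the second moment. If one instead bounds $\Exp[Z]\le\int_{0}^{\infty}\mathbb{P}(Z>t)\,\mathrm{d}t$ directly and optimizes the truncation level, the Gaussian tail integral does not close in elementary form and one is left with $\sqrt{(1+\log\{c_{1}\})/c_{2}}$ plus an additive remainder of order $1/\sqrt{1+\log\{c_{1}\}}$, strictly worse than the stated constant. Squaring first is precisely what lets the computation close exactly: the integrand $2u\exp\{-c_{2}u^{2}\}$ has the elementary antiderivative $-\exp\{-c_{2}u^{2}\}/c_{2}$, and the threshold $u_{0}$ is calibrated so that the exponential tail contributes exactly $1/c_{2}$ while the truncated mass contributes exactly $\log\{c_{1}\}/c_{2}$. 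The remaining care is bookkeeping: verifying $u_{0}\ge 0$ is real (hence $c_{1}\ge1$) and that the $\min$ in the tail bound is handled correctly at the split point, both of which are routine once the second-moment route is fixed.
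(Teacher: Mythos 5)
Your proof is correct and follows essentially the same route as the paper's: both pass through the second moment of the positive part via Jensen's inequality, apply the layer-cake formula, and split the tail integral at the threshold $v=\log\{c_{1}\}/c_{2}$ (your $u_{0}^{2}$), yielding exactly $(1+\log\{c_{1}\})/c_{2}$. Your substitution $u=\sqrt{t}$ and the explicit remark that $c_{1}\geq 1$ is needed for the split point to be admissible are only cosmetic refinements of the paper's argument, which integrates $\exp\{-c_{2}t\}$ directly in the $t$ variable.
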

\begin{proof}
For any $v>0$, we have
\begin{align*}
\Exp[Z^{2}\Ind{[Z\geq 0]}]
=&\int_{0}^{\infty}\mathbb{P}(Z^{2}\Ind{[Z\geq 0]}>t)\Myd t\\
\leq& v+\int_{v}^{\infty}\mathbb{P}\left(Z\Ind{[Z\geq 0]}>\sqrt{t}\right)\Myd t\\
=& v+\int_{v}^{\infty}\mathbb{P}\left(Z>\sqrt{t}\right)\Myd t\\
\leq& v+c_{1}\int_{v}^{\infty}\exp\{-c_{2}t\}\Myd t\\
=& v+\frac{c_{1}}{c_{2}}\exp\{-c_{2}v\}.
\end{align*}
Taking $v=\frac{\log{\{c_{1}\}}}{c_{2}}$ yields
\begin{align*}
\Exp[Z]
\leq \Exp[Z\Ind{[Z\geq 0]}]
\leq \sqrt{\frac{(1+\log{\{c_{1}\}})}{c_{2}}}.
\end{align*}
\end{proof}

\begin{Lem}\label{Multinomial}
Suppose $(W_{n,1},W_{n,2},\ldots,W_{n,n})$ is a multinomial vector with parameters $n$ and $(1/n,1/n,\ldots,1/n)$.
Then for each $i\in\{1,2,\ldots,n\}$,
\begin{align*}
\Exp\left[(W_{n,i}-1)_{+}\right]=\Exp\left[(W_{n,i}-1)_{-}\right]=\left(\frac{n-1}{n}\right)^{n}.
\end{align*}
\end{Lem}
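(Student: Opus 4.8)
The plan is to reduce the two-dimensional-looking quantity to a one-dimensional binomial calculation and then exploit a trivial decomposition of positive and negative parts to get one identity essentially for free. First I would observe that each marginal coordinate $W_{n,i}$ of a symmetric multinomial vector with parameters $n$ and $(1/n,\ldots,1/n)$ is distributed as a binomial random variable with parameters $n$ and success probability $1/n$; this follows by collapsing the other $n-1$ cells into a single ``failure'' category, so the joint structure of the $W_{n,j}$ plays no role. In particular $\Exp[W_{n,i}]=n\cdot(1/n)=1$.

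Next I would establish that the two target expectations coincide before computing either one. Applying the elementary identity $(z)_{+}-(z)_{-}=z$ with $z=W_{n,i}-1$ and taking expectations gives
\begin{align*}
\Exp[(W_{n,i}-1)_{+}]-\Exp[(W_{n,i}-1)_{-}]=\Exp[W_{n,i}-1]=1-1=0,
\end{align*}
so that $\Exp[(W_{n,i}-1)_{+}]=\Exp[(W_{n,i}-1)_{-}]$. It then suffices to evaluate the negative part, which is the easier of the two.

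Finally I would compute $\Exp[(W_{n,i}-1)_{-}]$ directly. Since $W_{n,i}$ takes nonnegative integer values, the quantity $(W_{n,i}-1)_{-}=\max\{1-W_{n,i},0\}$ equals $1$ exactly when $W_{n,i}=0$ and vanishes whenever $W_{n,i}\geq 1$. Hence
\begin{align*}
\Exp[(W_{n,i}-1)_{-}]=\mathbb{P}(W_{n,i}=0)=\left(1-\frac{1}{n}\right)^{n}=\left(\frac{n-1}{n}\right)^{n},
\end{align*}
and combining this with the equality from the previous step delivers both claimed identities at once.

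There is no substantive obstacle in this argument; the only points requiring care are the recognition that collapsing cells reduces the marginal to a binomial (so no genuinely multinomial computation is ever needed), together with the observation that the $(z)_{+}-(z)_{-}=z$ decomposition neatly sidesteps the direct and messier summation one would otherwise face in evaluating $\Exp[(W_{n,i}-1)_{+}]$.
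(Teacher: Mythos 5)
Your proposal is correct and follows essentially the same route as the paper: both use the decomposition $(z)_{+}-(z)_{-}=z$ together with $\Exp[W_{n,i}]=1$ to equate the two expectations, and then evaluate $\Exp[(W_{n,i}-1)_{-}]=\mathbb{P}(W_{n,i}=0)=\left(1-1/n\right)^{n}$ via the binomial marginal. Your explicit remark that the joint multinomial structure is irrelevant after collapsing cells is a nice clarification the paper leaves implicit, but the argument is the same.
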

\begin{proof}
For each $i$, $\Exp\left[(W_{n,i}-1)_{+}\right]=\Exp\left[W_{n,i}-1\right]+\Exp\left[(W_{n,i}-1)_{-}\right]
=\Exp\left[(W_{n,i}-1)_{-}\right]$ because $\Exp\left[W_{n,i}\right]=1$.
Hence, we have
\begin{align*}
\Exp\left[(W_{n,i}-1)_{+}\right]
=\sum_{w=0}^{n}(w-1)_{-}\mathbb{P}(W_{n,i}=w)
=\mathbb{P}(W_{n,i}=0)
=\left(1-\frac{1}{n}\right)^{n}.
\end{align*}
\end{proof}

\end{appendices}

\normalsize
\bibliography{Model_Selection_Final_Submission}
\bibliographystyle{elsarticle-harv}

\begin{table}[b]
\thisfloatpagestyle{empty}
\centering
\caption{Relative Generalized Expected Utility of ML, MU, and UMPR}
\label{Table1}
\scalebox{0.7}{
\begin{threeparttable}

\begin{tabular}{llllllllll}
\toprule
\multicolumn{2}{c}{n=500} & \multicolumn{8}{c}{} \\[0.3cm]
\underline{DGP1} & \multicolumn{9}{c}{$p^{*}(x)=\Lambda(-0.5x+0.2x^{3})\hspace{2.5cm}$} \\[0.3cm]
Preference & \multicolumn{4}{c}{$b(x)=20$ and $c(x)=0.5$} & & \multicolumn{4}{c}{$b(x)=20$ and $c(x)=0.5+0.025x$} \\
        \cmidrule(r){2-5}\cmidrule(r){7-10}
 & $k=1$ & $k=2$ & $k=3$ &  &  & $k=1$ & $k=2$ & $k=3$ & \\
ML  & 34.69 & 31.72 & 93.93 &  &  & 8.50 & 11.52 & 94.70 & \\
MU        & 51.05 & 55.33 & 67.15 &  &  & 33.44 & 45.56 & 58.40 & \\[0.3cm]
         & $\alpha=1$   & $\alpha=0.5$ & $\alpha=0.1$  & $\alpha=0.05$ &              & $\alpha=1$   & $\alpha=0.5$  & $\alpha=0.1$  & $\alpha=0.05$  \\
UMPR (VC) & 54.60 & 55.07 & 55.35 & 55.35 &  & 36.88 & 37.01 & 37.57 & 37.70 \\
UMPR (MD) & 58.93 & 59.77 & 60.71 & 60.83 &  & 47.64 & 49.59 & 51.07 & 51.10 \\[0.3cm]
        & \multicolumn{2}{l}{Cross-Validated $\hat{\alpha}$} & \multicolumn{2}{l}{No Technical Term} & & \multicolumn{2}{l}{Cross-Validated $\hat{\alpha}$} & \multicolumn{2}{l}{No Technical Term}\\
UMPR (VC) & \multicolumn{2}{c}{54.72} & \multicolumn{2}{c}{56.51} &  & \multicolumn{2}{c}{37.12} & \multicolumn{2}{c}{\hspace{-2.9cm} 41.32} \\
UMPR (MD) & \multicolumn{2}{c}{59.59} & \multicolumn{2}{c}{64.65} &  & \multicolumn{2}{c}{49.27} & \multicolumn{2}{c}{\hspace{-2.9cm} 58.43} \\[0.5cm]
\addlinespace[7pt]
\underline{DGP2} & \multicolumn{9}{c}{$p^{*}(x_{1},x_{2})=\Lambda(Q(1.5x_{1}+1.5x_{2}))$ where $Q(u)=\frac{1.5-0.1u}{\exp\{0.25u+0.1u^{2}-0.04u^{3}\}}$} \\[0.3cm]
 Preference & \multicolumn{4}{c}{$b(x_{1},x_{2})=20$ and $c(x_{1},x_{2})=0.75$} & & \multicolumn{4}{c}{$b(x_{1},x_{2})=20+40\cdot\Ind{[|x_{1}+x_{2}|<1.5]}$ and $c(x_{1},x_{2})=0.75$} \\
        \cmidrule(r){2-5}\cmidrule(r){7-10}
  & $k=1$ & $k=2$ & $k=3$ &  &  & $k=1$ & $k=2$ & $k=3$ & \\
ML  & 60.26 & 59.41 & 60.09 &  &  & 30.86 & 29.19 & 34.60 & \\
MU        & 67.44 & 51.78 & 68.14 &  &  & 49.33 & 33.10 & 51.87 & \\[0.3cm]
       & $\alpha=1$   & $\alpha=0.5$ & $\alpha=0.1$  & $\alpha=0.05$ &              & $\alpha=1$   & $\alpha=0.5$  & $\alpha=0.1$  & $\alpha=0.05$  \\
UMPR (VC) & 67.44 & 67.44 & 67.44 & 67.44 &  & 49.33 & 49.33 & 49.33 & 49.33 \\
UMPR (MD) & 68.28 & 68.25 & 68.36 & 68.34 &  & 49.30 & 49.29 & 49.48 & 49.47\\[0.3cm]
 & \multicolumn{2}{l}{Cross-Validated $\hat{\alpha}$} & \multicolumn{2}{l}{No Technical Term} & & \multicolumn{2}{l}{Cross-Validated $\hat{\alpha}$} & \multicolumn{2}{l}{No Technical Term}\\
UMPR (VC) & \multicolumn{2}{c}{67.44} & \multicolumn{2}{c}{67.44} & & \multicolumn{2}{c}{49.33} & \multicolumn{2}{c}{\hspace{-2.9cm} 49.33} \\
UMPR (MD) & \multicolumn{2}{c}{68.22} & \multicolumn{2}{c}{68.13} & & \multicolumn{2}{c}{49.38} & \multicolumn{2}{c}{\hspace{-2.9cm} 50.07}\\[0.3cm]
\multicolumn{2}{c}{n=1000} & \multicolumn{8}{c}{} \\[0.3cm]
\underline{DGP1} & \multicolumn{9}{c}{$p^{*}(x)=\Lambda(-0.5x+0.2x^{3})\hspace{2.5cm}$} \\[0.3cm]
Preference & \multicolumn{4}{c}{$b(x)=20$ and $c(x)=0.5$} & & \multicolumn{4}{c}{$b(x)=20$ and $c(x)=0.5+0.025x$} \\
        \cmidrule(r){2-5}\cmidrule(r){7-10}
  & $k=1$ & $k=2$ & $k=3$ &  &  & $k=1$ & $k=2$ & $k=3$ & \\
ML  & 32.24 & 31.32 & 97.21  &  &  & 6.83  & 7.09  & 97.48 & \\
MU  & 54.88 & 58.48 & 69.94 &  &  & 35.17 & 48.03 & 60.04 & \\[0.3cm]
       & $\alpha=1$   & $\alpha=0.5$ & $\alpha=0.1$  & $\alpha=0.05$ &              & $\alpha=1$   & $\alpha=0.5$  & $\alpha=0.1$  & $\alpha=0.05$  \\
UMPR (VC) & 60.26 & 60.73 & 60.84 & 60.84 &  & 44.26 & 44.74 & 45.34 & 45.55 \\
UMPR (MD) & 64.07 & 64.74 & 65.19 & 65.47 &  & 55.20 & 56.11 & 57.68 & 57.78\\[0.3cm]
 & \multicolumn{2}{l}{Cross-Validated $\hat{\alpha}$} & \multicolumn{2}{l}{No Technical Term} & & \multicolumn{2}{l}{Cross-Validated $\hat{\alpha}$} & \multicolumn{2}{l}{No Technical Term}\\
UMPR (VC) & \multicolumn{2}{c}{60.40} & \multicolumn{2}{c}{62.26} & & \multicolumn{2}{c}{44.30} & \multicolumn{2}{c}{\hspace{-2.9cm} 49.30} \\
UMPR (MD) & \multicolumn{2}{c}{64.51} & \multicolumn{2}{c}{67.94} & & \multicolumn{2}{c}{56.13} & \multicolumn{2}{c}{\hspace{-2.9cm} 63.11}\\[0.5cm]
\addlinespace[7pt]
\underline{DGP2} & \multicolumn{9}{c}{$p^{*}(x_{1},x_{2})=\Lambda(Q(1.5x_{1}+1.5x_{2}))$ where $Q(u)=\frac{1.5-0.1u}{\exp\{0.25u+0.1u^{2}-0.04u^{3}\}}$} \\[0.3cm]
 Preference & \multicolumn{4}{c}{$b(x_{1},x_{2})=20$ and $c(x_{1},x_{2})=0.75$} & & \multicolumn{4}{c}{$b(x_{1},x_{2})=20+40\cdot\Ind{[|x_{1}+x_{2}|<1.5]}$ and $c(x_{1},x_{2})=0.75$} \\
        \cmidrule(r){2-5}\cmidrule(r){7-10}
  & $k=1$ & $k=2$ & $k=3$ &  &  & $k=1$ & $k=2$ & $k=3$ & \\
ML  & 59.03 & 57.64 & 59.74  &  &  & 28.13 & 25.07 & 31.77 & \\
MU  & 69.87 & 55.43 & 71.19  &  &  & 52.89 & 39.38 & 56.19 & \\[0.3cm]
       & $\alpha=1$   & $\alpha=0.5$ & $\alpha=0.1$  & $\alpha=0.05$ &              & $\alpha=1$   & $\alpha=0.5$  & $\alpha=0.1$  & $\alpha=0.05$  \\
UMPR (VC) & 69.87 & 69.87 & 69.87 & 69.87 &  & 52.89 & 52.89 & 52.89 & 52.89 \\
UMPR (MD) & 70.46 & 70.50 & 70.54 & 70.55 &  & 53.02 & 53.15 & 54.03 & 54.07\\[0.3cm]
 & \multicolumn{2}{l}{Cross-Validated $\hat{\alpha}$} & \multicolumn{2}{l}{No Technical Term} & & \multicolumn{2}{l}{Cross-Validated $\hat{\alpha}$} & \multicolumn{2}{l}{No Technical Term}\\
UMPR (VC) & \multicolumn{2}{c}{69.87} & \multicolumn{2}{c}{69.87} & & \multicolumn{2}{c}{52.89} & \multicolumn{2}{c}{\hspace{-2.9cm} 52.89} \\
UMPR (MD) & \multicolumn{2}{c}{70.48} & \multicolumn{2}{c}{70.84} & & \multicolumn{2}{c}{53.31} & \multicolumn{2}{c}{\hspace{-2.9cm} 56.55}\\[0.3cm]
\bottomrule
\end{tabular}
\smallskip
\begin{tablenotes}[flushleft]
\item[] Note: (i) Every relative generalized expected utility is expressed as a percentage. (ii) The \emph{glmfit} and \emph{simulannealbnd} algorithms in MATLAB\textsuperscript{\textregistered} are used to compute ML and MU, respectively. The tuning parameters in both algorithms are all set by default.
\end{tablenotes}
\end{threeparttable}}
\end{table}

\begin{table}[b]
\thisfloatpagestyle{empty}
\centering
\caption{Relative Generalized Expected Utility of UMPR, AIC, BIC, LASSO and SVM}
\label{Table2}

\scalebox{0.8}{
\begin{threeparttable}

\begin{tabular}{llllllllll}
\toprule
\multicolumn{2}{c}{n=500} & \multicolumn{8}{c}{} \\[0.3cm]
\underline{DGP1}  & \multicolumn{9}{c}{$p^{*}(x)=\Lambda(-0.5x+0.2x^{3})\hspace{2.5cm}$}       \\[0.3cm]
      Preference & \multicolumn{4}{c}{$b(x)=20$ and $c(x)=0.5$} & &\multicolumn{4}{c}{$b(x)=20$ and $c(x)=0.5+0.025x$} \\
      \cmidrule(r){2-5}\cmidrule(r){7-10}
      UMPR               & MD    & SMD   & RC    & BC    & & MD    & SMD   & RC    & BC    \\
                         & 65.36 & 66.68 & 66.86 & 65.74 & & 55.00 & 58.87 & 58.58 & 57.65 \\[0.3cm]
      Information        & AIC   & BIC   &       &       & & AIC   & BIC   &       &       \\
      Criterion          & 93.93 & 89.95 &       &       & & 94.70 & 88.81 &       &       \\[0.3cm]
      $\ell_{1}$-Penalty & LASSO & SVM   &       &       & & LASSO & SVM   &       &       \\
                         & 60.60 & 87.77 &       &       & & 65.20 & 83.91 &       &       \\[0.5cm]
      \underline{DGP2}  & \multicolumn{9}{c}{$p^{*}(x_{1},x_{2})=\Lambda(Q(1.5x_{1}+1.5x_{2}))$ where $Q(u)=\frac{1.5-0.1u}{\exp\{0.25u+0.1u^{2}-0.04u^{3}\}}$}  \\[0.3cm]
      Preference & \multicolumn{4}{c}{$b(x_{1},x_{2})=20$ and $c(x_{1},x_{2})=0.75$} &  & \multicolumn{4}{c}{$b(x_{1},x_{2})=20+40\cdot\Ind{[|x_{1}+x_{2}|<1.5]}$ and $c(x_{1},x_{2})=0.75$} \\
      \cmidrule(r){2-5}\cmidrule(r){7-10}
      UMPR               & MD    & SMD   & RC    & BC    &  & MD    & SMD   & RC    & BC    \\
                         & 68.55 & 69.52 & 69.47 & 69.11 &  & 50.41 & 53.87 & 53.32 & 52.90 \\[0.3cm]
      Information        & AIC   & BIC   &       &       &  & AIC   & BIC   &       &       \\
      Criterion          & 60.07 & 60.27 &       &       &  & 33.20 & 30.90 &       &       \\[0.3cm]
      $\ell_{1}$-Penalty & LASSO & SVM   &       &       &  & LASSO & SVM   &       &       \\
                         & 59.75 & 26.86 &       &       &  & 32.93 & 5.92  &       &       \\[0.6cm]
\multicolumn{2}{c}{n=1000} & \multicolumn{8}{c}{} \\[0.3cm]
\underline{DGP1}  & \multicolumn{9}{c}{$p^{*}(x)=\Lambda(-0.5x+0.2x^{3})\hspace{2.5cm}$}       \\[0.3cm]
      Preference & \multicolumn{4}{c}{$b(x)=20$ and $c(x)=0.5$} & &\multicolumn{4}{c}{$b(x)=20$ and $c(x)=0.5+0.025x$} \\
      \cmidrule(r){2-5}\cmidrule(r){7-10}
      UMPR               & MD    & SMD   & RC    & BC    &  & MD    & SMD   & RC    & BC    \\
                         & 69.32 & 72.51 & 72.23 & 71.75 &  & 63.30 & 67.12 & 67.01 & 65.81 \\[0.3cm]
      Information        & AIC   & BIC   &       &       &  & AIC   & BIC   &       &       \\
      Criterion          & 97.21 & 97.13 &       &       &  & 97.48 & 97.29 &       &       \\[0.3cm]
      $\ell_{1}$-Penalty & LASSO & SVM   &       &       &  & LASSO & SVM   &       &       \\
                         & 68.82 & 93.26 &       &       &  & 78.92 & 91.14 &       &       \\[0.5cm]
      \underline{DGP2}  & \multicolumn{9}{c}{$p^{*}(x_{1},x_{2})=\Lambda(Q(1.5x_{1}+1.5x_{2}))$ where $Q(u)=\frac{1.5-0.1u}{\exp\{0.25u+0.1u^{2}-0.04u^{3}\}}$}  \\[0.3cm]
      Preference & \multicolumn{4}{c}{$b(x_{1},x_{2})=20$ and $c(x_{1},x_{2})=0.75$} &  & \multicolumn{4}{c}{$b(x_{1},x_{2})=20+40\cdot\Ind{[|x_{1}+x_{2}|<1.5]}$ and $c(x_{1},x_{2})=0.75$} \\
      \cmidrule(r){2-5}\cmidrule(r){7-10}
      UMPR               & MD    & SMD   & RC    & BC    &  & MD    & SMD   & RC    & BC    \\
                         & 71.09 & 71.91 & 71.97 & 71.89 &  & 57.13 & 59.61 & 60.08 & 58.96 \\[0.3cm]
      Information        & AIC   & BIC   &       &       &  & AIC   & BIC   &       &       \\
      Criterion          & 59.72 & 59.06 &       &       &  & 31.49 & 28.16 &       &       \\[0.3cm]
      $\ell_{1}$-Penalty & LASSO & SVM   &       &       &  & LASSO & SVM   &       &       \\
                         & 59.68 & 25.93 &       &       &  & 29.08 & 5.10  &       &       \\[0.3cm]
\bottomrule
\end{tabular}
\smallskip
\begin{tablenotes}[flushleft]
\item [] Note: (i) Every relative generalized expected utility is expressed as a percentage. (ii) The number of simulation replications for complexity penalty (SMD, RC, and BC) is $m=10$. (iii) The \emph{simulannealbnd} algorithm in MATLAB\textsuperscript{\textregistered} is used to compute UMPR; the \emph{glmfit} algorithm in MATLAB\textsuperscript{\textregistered} is used to compute AIC and BIC; the \emph{lassoglm} algorithm in MATLAB\textsuperscript{\textregistered} is used to compute LASSO; and the \emph{lpsvm} algorithm provided by \citet{FungMangasarian2004}, which is available at \url{http://research.cs.wisc.edu/dmi/svm/lpsvm/}, is used to compute SVM. The $\ell_{1}$ regularization parameter in LASSO and SVM is determined by tenfold cross-validation, whereas the other tuning parameters in the algorithms are all set by default.
\end{tablenotes}
\end{threeparttable}}
\end{table}

\begingroup
\afterpage{
\thispagestyle{empty}
\renewcommand\arraystretch{0.2}
\begin{landscape}
\begin{longtable}{llllllllll}
\caption{Percentage of Models Selected by UMPR, Pretest, and Cross-Validatory Estimators}
\label{Table3}\\
\toprule
\endfirsthead
\caption* {Table \ref{Table3}: Percentage of Models Selected by UMPR, Pretest, and Cross-Validatory Estimators \textit{(Continued)}}\\
\toprule
\endhead
\midrule
\multicolumn{10}{r}{\textit{Continued on next page}}\\
\endfoot
\multicolumn{10}{p{20cm}}{\footnotesize{Note: (i) Every relative generalized expected utility is expressed as a percentage. (ii) The specific-to-general pretest estimator is referred to as Pretest(S$\to$G), whereas the general-to-specific pretest estimator is referred to as Pretest(G$\to$S). VC($\hat{\alpha}$) indicates a VC penalty inclusive of a technical term with cross-validated $\hat{\alpha}$.
(iii) The number of simulation replications for complexity penalty (SMD, RC, and BC) is $m=10$. (iv) The \emph{simulannealbnd} algorithm in MATLAB\textsuperscript{\textregistered} is used to compute UMPR, pretest, and cross-validatory estimators. The tuning parameters in the algorithm are all set by default.}}
\endlastfoot

\multicolumn{2}{c}{n=500} & \multicolumn{8}{c}{} \\[0.3cm]
\underline{DGP1} & \multicolumn{9}{c}{$p^{*}(x)=\Lambda(-0.5x+0.2x^{3})\hspace{2.5cm}$} \\[0.3cm]
Preference       & \multicolumn{4}{c}{$b(x)=20$ and $c(x)=0.5$} &  & \multicolumn{4}{c}{$b(x)=20$ and $c(x)=0.5+0.025x$} \\
      \cmidrule(r){2-5}\cmidrule(r){7-10}
      & \multicolumn{3}{c}{Hierarchy} & RGEU &  & \multicolumn{3}{c}{Hierarchy} & RGEU \\[0.1cm]
      & $k=1$ & $k=2$ & $k=3$ &  &  & $k=1$ & $k=2$ & $k=3$ &  \\[0.1cm]
Pretest (S$\to$G)& 69.2 & 19.8 & 11.0 & 59.27 &  & 56.4 & 27.4 & 16.2 & 45.63\\[0.1cm]
Pretest (G$\to$S)& 36.4 & 19.8 & 43.8 & 62.69 &  & 33.6 & 27.4 & 39.0 & 48.69\\[0.1cm]
Cross-Validation & 25.0 & 29.4 & 45.6 & 61.30 &  & 17.0 & 26.8 & 56.2 & 50.42\\[0.1cm]
UMPR (MD)        & 33.4 & 27.2 & 39.4 & 65.36 &  & 27.0 & 32.6 & 40.4 & 55.00\\[0.1cm]
UMPR (SMD)       & 38.2 & 18.4 & 43.4 & 66.68 &  & 30.0 & 27.0 & 43.0 & 58.87\\[0.1cm]
UMPR (RC)        & 40.6 & 17.8 & 41.6 & 66.86 &  & 28.6 & 30.2 & 41.2 & 58.58\\[0.1cm]
UMPR (BC)        & 44.8 & 19.0 & 36.2 & 65.74 &  & 33.8 & 27.2 & 39.0 & 57.65\\[0.1cm]
UMPR (VC)        & 88.4 & 7.6 & 4.0 & 56.06 &  & 85.2 & 11.8 & 3.0 & 39.95\\[0.1cm]
UMPR (VC ($\hat{\alpha}$))        & 93.0 & 5.0 & 2.0 & 54.72 &  & 91.4 & 7.8 & 0.8 & 37.12\\[0.1cm]
\addlinespace[10pt]
\underline{DGP2} & \multicolumn{9}{c}{$p^{*}(x_{1},x_{2})=\Lambda(Q(1.5x_{1}+1.5x_{2}))$ where $Q(u)=\frac{1.5-0.1u}{\exp\{0.25u+0.1u^{2}-0.04u^{3}\}}$} \\[0.3cm]
Preference       & \multicolumn{4}{c}{$b(x_{1},x_{2})=20$ and $c(x_{1},x_{2})=0.75$} &  & \multicolumn{4}{c}{$b(x_{1},x_{2})=20+40\cdot\Ind{[|x_{1}+x_{2}|<1.5]}$ and $c(x_{1},x_{2})=0.75$} \\
      \cmidrule(r){2-5}\cmidrule(r){7-10}
      & \multicolumn{3}{c}{Hierarchy} & RGEU &  & \multicolumn{3}{c}{Hierarchy} & RGEU \\[0.1cm]
      & $k=1$ & $k=2$ & $k=3$ &  &  & $k=1$ & $k=2$ & $k=3$ &  \\[0.1cm]
Pretest (S$\to$G)& 85.2 & 6.4 & 8.4 & 68.72 &  & 78.8 & 9.4 & 11.8 & 50.62\\[0.1cm]
Pretest (G$\to$S)& 33.4 & 6.4 & 60.2 & 68.34 &  & 27.6 & 9.4 & 63.0 & 49.91\\[0.1cm]
Cross-Validation & 48.0 & 5.8 & 46.2 & 67.30 &  & 43.2 & 15.6 & 41.2 & 48.26\\[0.1cm]
UMPR (MD)        & 52.4 & 14.2 & 33.4 & 68.55 &  & 52.2 & 19.6 & 28.2 & 50.41\\[0.1cm]
UMPR (SMD)       & 66.0 & 9.4 & 24.6 & 69.52 &  & 58.6 & 12.8 & 28.6 & 53.87\\[0.1cm]
UMPR (RC)        & 68.2 & 8.4 & 23.4 & 69.47 &  & 61.4 & 11.8 & 26.8 & 53.32\\[0.1cm]
UMPR (BC)        & 74.8 & 8.4 & 16.8 & 69.11 &  & 66.2 & 12.8 & 21.0 & 52.90\\[0.1cm]
UMPR (VC)        & 99.8 & 0.2 & 0.0 & 67.54 &  & 100.0 & 0.0 & 0.0 & 49.33\\[0.1cm]
UMPR (VC ($\hat{\alpha}$))        & 100.0 & 0.0 & 0.0 & 67.44 &  & 100.0 & 0.0 & 0.0 & 49.33\\[0.1cm]
\pagebreak
\multicolumn{2}{c}{n=1000} & \multicolumn{8}{c}{} \\[0.3cm]
\underline{DGP1} & \multicolumn{9}{c}{$p^{*}(x)=\Lambda(-0.5x+0.2x^{3})\hspace{2.5cm}$} \\[0.3cm]
Preference       & \multicolumn{4}{c}{$b(x)=20$ and $c(x)=0.5$} &  & \multicolumn{4}{c}{$b(x)=20$ and $c(x)=0.5+0.025x$} \\
      \cmidrule(r){2-5}\cmidrule(r){7-10}
      & \multicolumn{3}{c}{Hierarchy} & RGEU &  & \multicolumn{3}{c}{Hierarchy} & RGEU \\[0.1cm]
      & $k=1$ & $k=2$ & $k=3$ &  &  & $k=1$ & $k=2$ & $k=3$ &  \\[0.1cm]
Pretest (S$\to$G)& 68.0 & 17.4 & 14.6 & 62.60 &  & 50.2 & 25.8 & 24.0 & 50.14\\[0.1cm]
Pretest (G$\to$S)& 35.2 & 17.4 & 47.4 & 65.20 &  & 27.4 & 25.8 & 46.8 & 53.52\\[0.1cm]
Cross-Validation & 17.8 & 24.4 & 57.8 & 64.81 &  & 11.8 & 23.6 & 64.6 & 55.19\\[0.1cm]
UMPR (MD)        & 30.8 & 23.6 & 45.6 & 69.32 &  & 21.2 & 30.8 & 48.0 & 63.30\\[0.1cm]
UMPR (SMD)       & 32.6 & 17.6 & 49.8 & 72.51 &  & 20.0 & 29.6 & 50.4 & 67.12\\[0.1cm]
UMPR (RC)        & 34.2 & 16.6 & 49.2 & 72.23 &  & 19.6 & 30.8 & 49.6 & 67.01\\[0.1cm]
UMPR (BC)        & 37.4 & 18.6 & 44.0 & 71.75 &  & 25.0 & 28.6 & 46.4 & 65.81\\[0.1cm]
UMPR (VC)        & 81.0 & 11.0 & 0.8 & 62.40 &  & 70.6 & 21.4 & 8.0 & 49.30\\[0.1cm]
UMPR (VC ($\hat{\alpha}$))        & 86.2 & 9.8 & 4.0 & 60.40 &  & 82.0 & 14.2 & 3.8 & 44.30\\[0.1cm]
\addlinespace[10pt]
\underline{DGP2} & \multicolumn{9}{c}{$p^{*}(x_{1},x_{2})=\Lambda(Q(1.5x_{1}+1.5x_{2}))$ where $Q(u)=\frac{1.5-0.1u}{\exp\{0.25u+0.1u^{2}-0.04u^{3}\}}$} \\[0.3cm]
Preference       & \multicolumn{4}{c}{$b(x_{1},x_{2})=20$ and $c(x_{1},x_{2})=0.75$} &  & \multicolumn{4}{c}{$b(x_{1},x_{2})=20+40\cdot\Ind{[|x_{1}+x_{2}|<1.5]}$ and $c(x_{1},x_{2})=0.75$} \\
      \cmidrule(r){2-5}\cmidrule(r){7-10}
      & \multicolumn{3}{c}{Hierarchy} & RGEU &  & \multicolumn{3}{c}{Hierarchy} & RGEU \\[0.1cm]
      & $k=1$ & $k=2$ & $k=3$ &  &  & $k=1$ & $k=2$ & $k=3$ &  \\[0.1cm]
Pretest (S$\to$G)& 87.8 & 6.6 & 5.6 & 70.90 &  & 77.0 & 6.8 & 16.2 & 56.64\\[0.1cm]
Pretest (G$\to$S)& 34.2 & 6.6 & 59.2 & 71.20 &  & 29.4 & 6.8 & 63.8 & 56.48\\[0.1cm]
Cross-Validation & 42.0 & 4.4 & 53.6 & 69.93 &  & 43.0 & 9.6 & 47.4 & 54.51\\[0.1cm]
UMPR (MD)        & 54.0 & 14.8 & 31.2 & 71.09 &  & 48.2 & 18.2 & 33.6 & 57.13\\[0.1cm]
UMPR (SMD)       & 65.6 & 8.6 & 25.8 & 71.91 &  & 55.0 & 10.8 & 34.2 & 59.61\\[0.1cm]
UMPR (RC)        & 68.4 & 9.0 & 22.6 & 71.97 &  & 55.8 & 8.8 & 35.4 & 60.08\\[0.1cm]
UMPR (BC)        & 69.8 & 10.0 & 20.2 & 71.89 &  & 60.2 & 9.8 & 30.0 & 58.96\\[0.1cm]
UMPR (VC)        & 99.8 & 0.2 & 0.0 & 69.99 &  & 100.0 & 0.0 & 0.0 & 52.89\\[0.1cm]
UMPR (VC ($\hat{\alpha}$))        & 100.0 & 0.0 & 0.0 & 69.87 &  & 100.0 & 0.0 & 0.0 & 52.89\\[0.1cm]
\bottomrule\\
\end{longtable}
\thispagestyle{empty}
\end{landscape}}
\endgroup

\end{document}